\newtheorem{theorem}{Theorem}
\newcommand{\utilde}[1]{\underaccent{\tilde}{#1}}
\newcommand{\btheta}{\boldsymbol{\theta}}
\newcommand{\htheta}{\widehat{\boldsymbol{\theta}}_n}
\newenvironment{sciabstract}{%
\begin{quote} \bf}
{\end{quote}}
\title{Constructing Prediction Intervals Using the Likelihood Ratio Statistic} 
\author
{Qinglong Tian, Daniel J. Nordman, William Q. Meeker\\
\\
\normalsize{Department of Statistics, Iowa State University}\\
\\
}
\date{\today}
\begin{document} 


\baselineskip24pt


\maketitle 


\begin{sciabstract}

Statistical prediction plays an important role in many decision processes such as university budgeting (depending on the number of students who will enroll), capital budgeting (depending on the remaining lifetime of a fleet of systems), the needed amount of cash reserves for warranty expenses (depending on the number of warranty returns), and whether a product recall is needed (depending on the number of potentially life-threatening product failures). In statistical inference, likelihood ratios have a long history of use for decision making relating to model parameters (e.g., in evidence-based medicine and forensics). We propose a general prediction method, based on a likelihood ratio (LR) involving both the data and a future random variable. This general approach provides a way to identify prediction interval methods that have excellent statistical properties.  For example, if a prediction method can be based on a pivotal quantity, our LR-based method will often identify it. For applications where a pivotal quantity does not exist, the LR-based method provides a procedure with good coverage properties for both continuous or discrete-data prediction applications.

\end{sciabstract}


\section{Introduction}\label{sec:introduction}

\subsection{Background}\label{subsec:background}

Prediction is a fundamental part of statistical inference.
Prediction intervals are important for assessing the uncertainty of future random variables and have applications in business, engineering, science, and other fields.
For example, manufacturers require prediction intervals for the number of warranty claims to assure that there are sufficient cash reserves and spare parts to make repairs; engineers use historical data to compute prediction intervals for the remaining lifetime of systems.

Suppose that the available data are denoted by $\boldsymbol{X}_n$, and that we want to predict a random variable denoted by $Y$ (also known as the predictand).
We use a parametric distribution to model the data and the predictand.
Specifically, we consider the case where $\boldsymbol{X}_n=\{X_1,\dots,X_n\}$ corresponds to a sample of $n$ independent and identically distributed random variables with common density/mass function $f(\cdot;\btheta)$.
The density $f(\cdot;\btheta)$ depends on a vector $\btheta$ of unknown parameters.
The predictand $Y$ is a scalar random variable with conditional density $g(\cdot|\boldsymbol{x}_n;\btheta)$, where $\boldsymbol{X}_n=\boldsymbol{x}_n$ is the observed sample.
If $Y$ is independent of $\boldsymbol{X}_n$, then $g(\cdot|\boldsymbol{x}_n;\btheta)=g(\cdot;\btheta)$; further if $Y$ has the same distribution as the data, then $g(\cdot;\btheta)=f(\cdot;\btheta)$.
The goal is to obtain information about the unknown parameters $\btheta$ from the data $\boldsymbol{X}_n$ to construct a prediction interval for the predictand $Y$.

We use $\text{PI}_{1-\alpha}(\boldsymbol{X}_n)$ to denote a prediction interval for $Y$ with a nominal confidence level of $1-\alpha$.
Letting $\Pr_{\btheta}(\cdot|\boldsymbol{X}_n)$ be conditional probability given $\boldsymbol{X}_n$, the conditional coverage probability of $\text{PI}_{1-\alpha}(\boldsymbol{X}_n)$ is
\[
\text{CP}[\text{PI}_{1-\alpha}(\boldsymbol{X}_n)|\boldsymbol{X}_n]=\Pr{}_{\!\btheta}\left[Y\in\text{PI}_{1-\alpha}(\boldsymbol{X}_n)|\boldsymbol{X}_n\right].
\]
We can obtain the unconditional coverage probability by taking the expectation of the conditional coverage probability $\text{CP}[\text{PI}_{1-\alpha}(\boldsymbol{X}_n)|\boldsymbol{X}_n]$ with respect to $\boldsymbol{X}_n$,
\[
\text{CP}[\text{PI}_{1-\alpha}(\boldsymbol{X}_n)]=\Pr{}_{\!\btheta}[Y\in\text{PI}_{1-\alpha}(\boldsymbol{X}_n)]=\text{E}_{\btheta}\left\{\text{CP}[\text{PI}_{1-\alpha}(\boldsymbol{X}_n)|\boldsymbol{X}_n]\right\}.
\]
Unlike the conditional coverage probability, which is a random variable, the unconditional coverage probability is a fixed property of a prediction interval procedure.
Hence, the unconditional coverage probability is used to evaluate a prediction interval method and the term ``coverage probability'' is used to denote the unconditional coverage probability unless stated otherwise.
If $\text{CP}[\text{PI}_{1-\alpha}(\boldsymbol{X}_n)]=1-\alpha$, we say the prediction method is exact; if $\text{CP}[\text{PI}_{1-\alpha}(\boldsymbol{X}_n)]\to1-\alpha$ as $n\to\infty$, we say the prediction method is asymptotically correct.

\subsection{Related Literature}\label{subsec:literature}

Some prediction interval methods are based on a pivotal or an approximate pivotal quantity.
The main idea is to find a function of $\boldsymbol{X}_n$ and $Y$, say $q(\boldsymbol{X}_n, Y)$, that has a distribution that is free of parameters $\btheta$ (or approximately so for large samples).
Then, the distribution of $q(\boldsymbol{X}_n, Y)$ can be used to construct a $1-\alpha$ prediction region for $Y$ as
\begin{equation*}
	\mathcal{P}_{1-\alpha}(\boldsymbol{x}_n)=\left\{y:q(\boldsymbol{x}_n, y)\leq q_{n, 1-\alpha}\right\},
\end{equation*}
where $\boldsymbol{X}_n=\boldsymbol{x}_n$ denotes the observed value of sample and $q_{n,1-\alpha}$ is the $1-\alpha$ quantile of $q(\boldsymbol{X}_n, Y)$ (i.e., $\Pr{}_{\!\btheta}\left[q(\boldsymbol{X}_n, Y)\leq q_{n,1-\alpha}\right]=1-\alpha$).
If $q(\boldsymbol{x}_n, y)$ is a monotone function of $y$, then
$\mathcal{P}_{1-\alpha}(\boldsymbol{x}_n)$ provides a one-sided prediction
bound; if $-q(\boldsymbol{x}_n, y)$ is a unimodal function of $y$,
then $\mathcal{P}_{1-\alpha}(\boldsymbol{x}_n)$ becomes a (two-sided) prediction interval.
Relevant references of this pivotal method include \citet{cox1975}, \citet{atwood1984},
\cite{beran1990}, \citet{barncox1996}, \citet{nelson2000},
\citet{lawless2005}, and \citet{fonseca2012}.

One implementation of the pivotal method is through a hypothesis test.
\citet{cox1975} and \citet[Page 243]{cox1979theoretical} suggested to construct prediction intervals by inverting a hypothesis test, and gave examples with distributions having simple test statistics.
Suppose the data $\boldsymbol{X}_n$ and the predictand $Y$ have densities $f(\boldsymbol{x}_n;\btheta)$ and $g(y;\btheta^\dagger)$ governed by real-valued $\btheta$ and $\btheta^\dagger$, respectively, and a hypothesis
test can be found for the null hypothesis $\btheta=\btheta^\dagger$.
Let $w_\alpha$ be a critical region for the test $H_0:\btheta=\btheta^\dagger$ with size $\alpha$.
For critical region $\omega_\alpha$ we have the probability statement
\[
\Pr\left[(\boldsymbol{X}_n,Y)\in w_\alpha\right]=\alpha.
\]
Then for $\boldsymbol{X}_n=\boldsymbol{x}_n$, a $1-\alpha$ prediction
region for $Y$ can be defined as
\begin{equation}\label{eq:hypothesis-test-method}
	\mathcal{P}_{1-\alpha}(\boldsymbol{x}_n)=\{y:(\boldsymbol{x}_n, y)\not\in w_\alpha\}.
\end{equation}
Thus, for the critical region defined in (\ref{eq:hypothesis-test-method}), we have that for all $\btheta$
\begin{equation*}
	\Pr{}_{\!\btheta}\left[Y\in\mathcal{P}_{1-\alpha}(\boldsymbol{X}_n)\right]=1-\alpha,
\end{equation*}
so that (\ref{eq:hypothesis-test-method}) defines an exact prediction procedure for $Y$.
In (\ref{eq:hypothesis-test-method}), one could also potentially use a critical region $w_{\alpha}\equiv w_{\alpha,n}$ having size $\alpha$ asymptotically (i.e., $\lim_{n\to \infty }\Pr_{\theta}\left[(\boldsymbol{X}_,Y) \in w_{\alpha}\right]=\alpha$);  then (\ref{eq:hypothesis-test-method}) becomes an asymptotically correct $1-\alpha$ prediction region for $Y$. 

\citet[pp. 245]{cox1979theoretical} illustrated this test-based prediction region (\ref{eq:hypothesis-test-method}) with the normal distribution.
Suppose $\boldsymbol{X}_n$ is an independent random sample from
$\text{Norm}(\mu,\sigma)$ and $Y$ is a further independent random
variable with the same distribution.
By assuming that $\boldsymbol{X}_n\sim\text{N}(\mu_1,\sigma)$
and $Y\sim\text{N}(\mu_2,\sigma)$, a test statistic for the null
hypothesis $H_0:\mu_1=\mu_2$ is
\begin{equation}\label{eq:normal-invert-t-test}
	t=\frac{\bar{X}_n-Y}{s\sqrt{(n+1)/n}}\sim t_{n-1},
\end{equation}
where $s^2=\sum_{i=1}^{n}(X_i-\bar{X}_n)^2/(n-1)$ and $t_{n-1}$ denotes a $t$-random variable with $n-1$ degrees of freedom.
This corresponds to the form of a two-sample $t$-test that is often used for comparison of means.
Then a $1-\alpha$ equal-tailed (i.e., equal probability of being outside either endpoint) prediction interval based on inverting the $t$-test is
\begin{equation*}
	\text{PI}_{1-\alpha}(\boldsymbol{X}_n)=\left[\bar{X}_n-t_{n-1,\alpha/2}s\sqrt{(n+1)/n},\,\bar{X}_n+t_{n-1,\alpha/2}s\sqrt{(n+1)/n}\right],
\end{equation*}
where $t_{n-1,\alpha}$ denotes the $\alpha$ quantile of a $t_{n-1}$ distribution.

As a contrast to the pivotal prediction method, \citet{bjo1990} reviewed an alternative prediction method called the predictive likelihood method.
The main idea of the predictive likelihood method is to obtain an approximate density for $Y$ by eliminating the unknown parameters in the joint likelihood (or density) of the data and the predictand $(\boldsymbol{X}_n,Y)$.
The resulting predictive likelihood then provides a type of distribution for computing a prediction interval for $Y$ given $\boldsymbol{X}_n=\boldsymbol{x}_n$.
For example, a Bayesian predictive distribution for $Y$ involves steps of integrating out the unknown parameters of a posterior distribution based on the joint likelihood of the data and the predictand.

\subsection{Motivations}\label{subsec:motivations}

As reviewed in Section~\ref{subsec:literature}, prediction intervals can be constructed by inverting hypothesis tests for parameters.
However, the construction of such tests often needs to be tailored to each problem, where the determination of an appropriate hypothesis test is an essential step in the construction of such prediction intervals.
For example, in the normal distribution example, we obtain the prediction interval by inverting a $t$-test.
However, in many cases, there is no well-known or clear hypothesis test, making it difficult to implement a test-based method for obtaining prediction intervals.
As a remedy, the purpose of this paper is to propose a general prediction method based on inverting a type of LR test.
The advantage of the LR approach is that this principle applies broadly to different settings where prediction intervals are needed---and particularly for cases where an appropriate test statistic or pivotal quantity is not available or obvious for the need.
In addition, we will demonstrate that this general method has desirable statistical properties.

\subsection{Overview}\label{subsec:overview}

This paper is organized as follows.
Sections~\ref{sec:method}--\ref{sec-choose-full} focus on predictions with continuous data.
Section~\ref{sec:method} describes how to construct a prediction interval by formulating a certain LR statistic.
Section~\ref{sec:pivotal-quantity} discusses situations where the proposed prediction method provides exact coverage, while Section~\ref{asymptotic-results} shows that, more broadly, that the method is generally (under weak  conditions) guaranteed to provide asymptotically correct coverage (i.e., improving coverage properties with increasing sample sizes).
Section~\ref{sec-choose-full} discusses some further details about constructing a suitable LR test.
Section~\ref{sec:discrete-distribution} focuses on applying the proposed method to prediction problems involving discrete data.
Section~\ref{sec-relationship-pred-dist} describes how the proposed LR prediction method compares and differs from predictions based on predictive likelihood methods (mentioned in Section~\ref{subsec:literature}).
Section~\ref{sec:conclusion} concludes by describing potential areas for future research.

\section{A General Method}\label{sec:method}

In Section~\ref{subsec:lrt}, we show how to construct general (not necessarily equal-tailed) two-sided prediction intervals with an LR statistic.
Section~\ref{subsec:one-sided} describes how to construct one-sided prediction intervals by using an LR and how this method can also be applied to calibrate equal-tailed two-sided prediction intervals.
In this section, we assume that both the data $\boldsymbol{X}_n$ and the predictand $Y$ are continuous.
For clarity in the exposition and ease of presentation, we further assume that $Y\sim g(\cdot;\btheta)$ is independent of $\boldsymbol{X}_n\sim f(\cdot;\btheta)$ and has the same distribution/density (i.e., $g(\cdot;\btheta)=f(\cdot;\btheta)$).

\subsection{Prediction Intervals Based on an LR Test}\label{subsec:lrt}

\citet{nelson2000} proposed a prediction interval method for predicting the number of failures
in a future inspection of a sample of units, based on a likelihood
ratio test in combination with the Wilks' theorem.
Although \citet{nelson2000} only considered a specific prediction problem, we extend the principle of LR-based prediction interval statistics to a more general setting.
The approach may also be viewed as a generalization of test-based prediction intervals explained in \citet{cox1979theoretical}, using an LR in the role of the test statistic.

\subsubsection{Reduced and Full Models}

Recalling its traditional use for parametric inference, the LR test provides a general approach for comparing two nested models for data (or parameter configurations) based on the observed data $\boldsymbol{X}_n=\boldsymbol{x}_n$.
The null hypothesis about the parameters corresponds to a reduced model, which is nested within a larger full model (i.e., a parameter subset of the full model).
Let $\mathcal{L}_n(\btheta;\boldsymbol{x}_n)$ be the likelihood function for the full model having a parameter space $\Theta$ and suppose that the reduced model (corresponding to the null hypothesis) has a constrained parameter space $\Theta_0\subset\Theta$.
The LR for testing the null hypothesis $H_0:\btheta\in\Theta_0$ is then
\begin{equation*}
	\Lambda_n = \frac{\sup_{\btheta\in\Theta_0}\mathcal{L}_n(\btheta;\boldsymbol{x}_n)}{\sup_{\btheta\in\Theta}\mathcal{L}_n(\btheta;\boldsymbol{x}_n)},
\end{equation*}
and the log-LR statistic is $-2\log\Lambda_n$.
Generally, the distribution of $\Lambda_n$ or $-2\log\Lambda_{n}$ needs to be determined, either analytically, through approximate large-sample distributional results, or through Monte Carlo simulation.
Then, such a distribution can be used
to determine the critical region for the LR test of the null hypothesis or relatedly a confidence interval/region for parameters.
For example, if the reduced model
is true, and if Wilks' theorem (cf. \citet{wilks1938}) applies (as it does under particular regularity conditions), the asymptotic distribution of the log-LR statistic is given by
$-2\log\Lambda_n\xrightarrow{d}\chi^2_d$ as the sample size $n\to\infty$,
where $\chi_d^2$ denotes a chi-square random variable with $d$ degrees of freedom and where $d$ is the difference in the lengths of $\Theta$ and $\Theta_0$.
The latter large sample chi-square distribution approximation is often used to calibrate the critical region of an LR test.

As we describe next, a log-LR statistic for model parameters can be modified to provide a log-LR statistic for a future random variable $Y$ in a general manner, which in turn can be used to construct prediction intervals for $Y$.
To outline the approach, suppose the available $\boldsymbol{X}_n$ represent an iid sample with common density $f(\cdot;\boldsymbol{\theta})$ and $Y$ denotes a future random variable with the same density $f(\cdot;\boldsymbol{\theta})$ ($Y$ is again independent of $\boldsymbol X_n$ here).
A log-LR statistic for $Y$ can then be broadly framed as a type of parameter $\boldsymbol{\theta}$ comparison involving full vs reduced models for the joint distribution $(\boldsymbol X_n,Y)$.
While $f(\cdot;\boldsymbol{\theta})$ denotes the true marginal density for both the data $\boldsymbol X_n$ and the predictand $Y$ (with parameter space $\boldsymbol{\theta} \in \Theta$), the main idea is to define a hypothesis test regarding an enlarged (and fictional) parameter space $\Theta_E  \equiv \{(\boldsymbol{\theta},\boldsymbol{\theta_y})\} $, where the data $X_n$ have a common density $f(\cdot;\boldsymbol{\theta})$, where the predictand $Y$ has a density $f(\cdot;\boldsymbol{\theta}_y)$, say, and where $\boldsymbol{\theta}$ and $\boldsymbol{\theta}_y$ differ in exactly one pre-selected component when $(\boldsymbol{\theta},\boldsymbol{\theta}_y)\in\Theta_E$.  For example, supposing $\boldsymbol{\theta}=(\theta_1,\ldots,\theta_k) \in \Theta$ consists of $k \geq 1$ components,  then we choose exactly one parameter component, say $\theta_\ell$, from among $\{\theta_1,\ldots,\theta_k\}$ to vary and subsequently define $\boldsymbol{\theta}_y \in \Theta$ to match $\boldsymbol{\theta}\in \Theta$, except for the component $\ell$, which is $\theta_\ell$ for $\boldsymbol{\theta}$ but $\theta_{\ell,y}$ say in $\boldsymbol{\theta}_{y}$.
This framework sets up a comparison of a contrived full model ($\boldsymbol{X}_n\sim f(\cdot;\btheta)$ marginally and $Y\sim f(\cdot;\btheta_y)$ for ($\btheta, \btheta_y)\in\Theta_E$) versus a reduced model ($\btheta_y=\btheta\in\Theta$), where the parameter space of the reduced model is nested within $\Theta_E$ with the constraint $\btheta=\btheta_{y}$.

The purpose of this contrived LR test is not to conduct hypothesis tests of parameters---as we already know that the reduced model is a true model---but to construct a predictive root (i.e., a test statistic containing $\boldsymbol{X}_n$ and $Y$), which will be used to predict $Y$.
In particular, the extra degree of freedom between the parameter spaces of the full model and the reduced model is used to identify the predictand $Y$ when formulating a log-LR statistic for $\btheta=\btheta_y$.
For example, in the case of data from a normal distribution $\boldsymbol{X}_n,Y\sim\text{Norm}(\mu,\sigma)$, we may define a full model as $\boldsymbol{X}_n\sim\text{Norm}(\mu,\sigma)$ and $Y\sim \text{Norm}(\mu_y,\sigma)$ for parameters $\btheta=(\mu,\sigma)$ and $\btheta_y=(\mu_y,\sigma)\in\mathbb{R}\times(0,\infty)$, where the reduced model $\btheta_y=\btheta$ corresponds to the true underlying model $\boldsymbol{X}_n,Y\sim\text{Norm}(\mu,\sigma)$ in prediction.

Let the joint likelihood function for $(\boldsymbol{X}_n,Y)$ be
\[
\mathcal{L}(\btheta,\btheta_y;\boldsymbol{x}_n,y)=f(y;\btheta_y)\prod_{i=1}^{n}f(x_i;\btheta)
\]
under the full model and suppose the maximum likelihood (ML) estimators of $(\btheta,\btheta_y)$ are estimable under both the reduced ($\btheta=\btheta_y$) and the full ($(\btheta,\btheta_y)\in\Theta_E$) models.
Then the joint LR statistic based on $(\boldsymbol{X}_n,Y)$ is
\begin{equation}\label{eq:likelihood-ratio}
	\Lambda_n(\boldsymbol{X}_n,Y)=\frac{\sup_{\btheta=\btheta_y\in\Theta}\mathcal{L}(\btheta,\btheta_y;\boldsymbol{X}_n,Y)}{\sup_{(\btheta,\btheta_y)\in\Theta_E}\mathcal{L}(\btheta,\btheta_y;\boldsymbol{X}_n, Y)}
\end{equation}
for the test of $\btheta=\btheta_y$.
The LR statistic in (\ref{eq:likelihood-ratio}) and its distribution can then be applied to obtain prediction intervals for the future predictand $Y$ based on observed data values $\boldsymbol{X}_n=\boldsymbol{x}_n$.
Note that the construction (\ref{eq:likelihood-ratio}) depends on which parameter from $\btheta$ is selected to vary in defining $\btheta_y$.
Typically, this selected parameter will be a mean-type parameter for purposes of identifying $Y$ in the LR statistic (\ref{eq:likelihood-ratio}); more details about this selection are given in Section~\ref{sec-choose-full}.

\subsubsection{Determining the Distribution of the LR}
\label{sec-determine}
The next step is to determine a critical region as in (\ref{eq:hypothesis-test-method}) so that we can compute the prediction region for $Y$, based on the LR statistic $\Lambda_n(\boldsymbol{X}_n,Y)$ from (\ref{eq:likelihood-ratio}).
This, however, requires the distribution of $\Lambda_n(\boldsymbol{X}_n,Y)$ (or $-2\log\Lambda_{n}(\boldsymbol{X}_n,Y)$).
There are three potential approaches for determining or approximating the distribution of $-2\log\Lambda_{n}(\boldsymbol{X}_n, Y)$.

The first approach is to obtain the distribution of $-2\log\Lambda_{n}(\boldsymbol{X}_n, Y)$ analytically.
For illustration, consider the situation with an iid sample $\boldsymbol{X}_n\sim\text{Norm}(\theta,\sigma)$ where $\sigma$ is known and the future random variable $Y$ is from the same distribution. Here there is one parameter $\theta\equiv\mu$ where the full model is
$\boldsymbol{X}_n\sim\text{Norm}(\mu,\sigma)$ and $Y\sim\text{Norm}(\mu_y,\sigma)$ for $(\mu,\mu_y)\in\mathbb{R}^2$ in the LR construction of (\ref{eq:likelihood-ratio}); the corresponding log-LR statistic for $Y$ based on $\boldsymbol{X}_n$ is then
\begin{equation*}
		-2\log\Lambda_n(\boldsymbol{X}_n, Y)=\frac{n}{n+1}\left(\frac{Y-\bar{X}_n}{\sigma}\right)^2\sim\chi^2_1.
\end{equation*}
Then, a $1-\alpha$ prediction region for $Y$ given $\boldsymbol{X}_n=\boldsymbol{x}_n$ is
\[
\begin{split}
\mathcal{P}_{1-\alpha}(\boldsymbol{x}_n)&=\left\{y:-2\log\Lambda_{n}(\boldsymbol{x}_n, y)\leq\chi^2_{1,1-\alpha}\right\}\\
&=\left\{y:\bar{x}_n-\sigma\sqrt{\frac{n+1}{n}\chi^2_{1,1-\alpha}}\leq y\leq\bar{x}_n+\sigma\sqrt{\frac{n+1}{n}\chi^2_{1,1-\alpha}}\right\}\\
&=\left\{y:\bar{x}_n-z_{1-\alpha/2}\sigma\sqrt{\frac{n+1}{n}}\leq y\leq\bar{x}_n+z_{1-\alpha/2}\sigma\sqrt{\frac{n+1}{n}}\right\}
\end{split}
\]
where $\chi_{1,1-\alpha}^2$ is the $1-\alpha$ quantile of $\chi^2_1$ and $z_{1-\alpha/2} = \sqrt{\chi^2_{1,1-\alpha}}$ is the $1-\alpha/2$ quantile of a standard normal variable.
In this example, because $-2\log\Lambda_n(\boldsymbol{X}_n, Y)$ is a unimodal function
of $Y$, the prediction region $\mathcal{P}_{1-\alpha}(\boldsymbol{x}_n)$
leads to a prediction interval and the LR prediction method has exact coverage probability because the log-LR statistic is a pivotal quantity (i.e., $\chi_1^2$-distributed).

The second approach for approximating the distribution of $-2\log\Lambda_n(\boldsymbol{X}_n,Y)$, when applicable, is to use Wilks' theorem. Under the
conditions given in \citet{wilks1938}, the LR statistic $-2\log\Lambda_{n}(\boldsymbol{X}_n, Y)\xrightarrow{d}\chi_d^2$,
where $d$ is the difference in the dimensions of the full and reduced models ($d=1$ in our prediction interval construction).
Similarly, the $1-\alpha$ prediction region based on Wilks' theorem is
\[\mathcal{P}_{1-\alpha}(\boldsymbol{x}_n)=\left\{y:-2\log\Lambda_{n}(\boldsymbol{x}_n, y)\leq\chi^2_{d,1-\alpha}\right\}.\]
Wilks' theorem, however, does {\it not} apply in all prediction problems.
There exist important cases, particularly with discrete data, where the Wilks' result is valid for the log-LR statistic $-2\log\Lambda_n(\boldsymbol{X}_n,Y)$ in prediction and the chi-square-calibrated prediction region above is then appropriate; this is described in Section~\ref{sec:discrete-distribution}.
When Wilks' theorem does not apply, an alternative limiting distribution may still exist as illustrated in Section~\ref{asymptotic-results}.

The third approach, which is the most general one, is to use parametric bootstrap.
If $\lambda_{1-\alpha}$ is the $1-\alpha$ quantile of $\Lambda_{n}(\boldsymbol{X}_n,Y)$, then we have the following prediction region 
\begin{equation*}
	\mathcal{P}_{1-\alpha}(\boldsymbol{x}_n)=\left\{y:-2\log\Lambda_n(\boldsymbol{x}_n, y)\leq\lambda_{1-\alpha}\right\}.
\end{equation*}
The idea of this approach is to use a parametric bootstrap re-creation of the data $(\boldsymbol{X}_n^*,Y^*)$, which leads to a distribution for a bootstrap version $-2 \log \Lambda_n(\boldsymbol{X}_n^*,Y^*)$ of the log-LR statistic.  Then, the $1-\alpha$ quantile of the bootstrap distribution, say $\lambda_{1-\alpha}^*$, is used to approximate the unknown quantile $\lambda_{1-\alpha}$ of the true sampling distribution of  $-2 \log\Lambda_n (\boldsymbol{X}_n,Y)$.
Then the resulting parametric bootstrap prediction region is
\begin{equation}\label{eq:bootstrap-find-lik-ratio}
	\mathcal{P}_{1-\alpha}(\boldsymbol{x}_n)=\left\{y:-2\log\Lambda_n(\boldsymbol{x}_n, y)\leq\lambda_{1-\alpha}^\ast\right\}.
\end{equation}
An algorithm for implementing a Monte Carlo (i.e., simulation-based) approximation of the parametric bootstrap is as follows.
\begin{enumerate}
	\item Compute an estimate corresponding to a consistent estimator of $\btheta$ (usually the ML estimate) using observed data $\boldsymbol{X}_n=\boldsymbol{x}_n$, denoted by $\htheta$ (recall the data model is that the $\boldsymbol{X}_n$ are iid $f(\cdot;\btheta)$.)
	\item Generate a bootstrap sample $\boldsymbol{x}_n^\ast$ and $y^\ast$ as iid observations drawn from $f(\cdot;\widehat{\btheta})$.
	\item Evaluate the LR in (\ref{eq:likelihood-ratio}) using bootstrap pair $(\boldsymbol{x}_n^\ast, y^\ast)$ to get $\lambda^\ast\equiv-2\log\Lambda_{n}(\boldsymbol{x}_n^\ast, y^\ast)$.
	\item Repeat steps 2--3 $B$ times to obtain $B$ realizations of $\lambda^\ast$ as $\{\lambda^\ast_b\}_{b=1}^{B}$.
	\item Use the $1-\alpha$ sample quantile of $\{\lambda^\ast_b\}_{b=1}^{B}$ as $\lambda_{1-\alpha}^\ast$ in (\ref{eq:bootstrap-find-lik-ratio}) and compute the prediction region.
\end{enumerate}
The prediction region $\mathcal{P}_{1-\alpha}(\boldsymbol{x}_n)$ in (\ref{eq:bootstrap-find-lik-ratio}) is a prediction interval when $\Lambda_{n}(\boldsymbol{x}_n,y)$ is a unimodal function of $y$ for a given data set $\boldsymbol{X}_n=\boldsymbol{x}_n$.

Such prediction intervals generally do not have equal-tail
probabilities. In many applications, however, the cost of the
predictand being greater than the upper bound is different than
having it being less than the lower bound.
In such cases, it is better to have a prediction interval with equal-tail
probabilities. This can be achieved by calibrating separately the lower
and upper one-sided $1-\alpha/2$ prediction bounds and putting them
together to provide a two-sided $1-\alpha$ equal-tail-probability
prediction interval. The next section shows how to construct a one-sided prediction bound using the LR in (\ref{eq:likelihood-ratio}).

\subsection{Constructing One-Sided Prediction Bounds}\label{subsec:one-sided}

Suppose that the LR $\Lambda_{n}(\boldsymbol{x}_n,y)$ is a unimodal function of $y$ based on observed data $\boldsymbol{X}_n=\boldsymbol{x}_n$.
This is a common property (holding with probability 1) in most prediction problems.  Note that a two-sided prediction interval (4) for $Y$ based on $\boldsymbol{X}_n=\boldsymbol{x}_n$ is defined by a horizontal line drawn through the curve of $-2\log \Lambda_n(\boldsymbol{x}_n,y) $ (as a function of $y$) at an appropriate threshold $\lambda_{1-\alpha}$, as shown in Figure~\ref{fig:horizontal}.
\begin{figure}[t!]
	\centering
	\includegraphics[width=\textwidth]{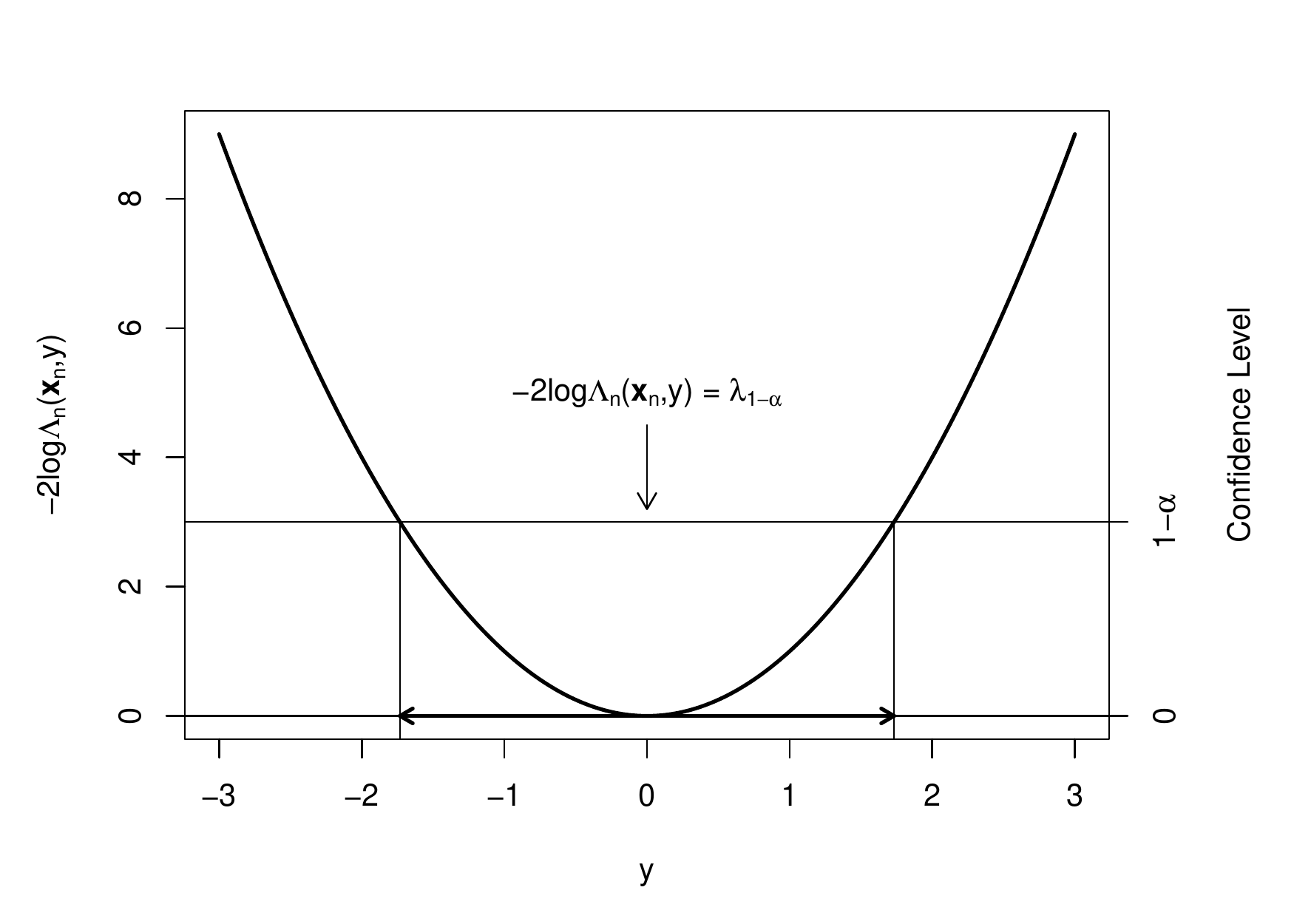}
	\caption{Example of log-LR statistic (as a function of $y$) for given data $\boldsymbol{x}_n$, which is an illustration of the prediction interval procedure in (\ref{eq:bootstrap-find-lik-ratio}).}
	\label{fig:horizontal}
\end{figure}

Here we describe a method for calibrating one-sided bounds directly, without resorting to (4) by adjusting the log-LR curve so that it becomes a monotone function.
For a given data set $\boldsymbol{X}_n=\boldsymbol{x}_n$, let $y_0\equiv y_0(\boldsymbol{x}_n)$ denote the value of $y$ that maximizes $\Lambda_n(\boldsymbol{x}_n,y)$, where $\Lambda_n(\boldsymbol{x}_n,y_0)=1$ at $y_0$.  Define a signed log-LR statistic $\zeta_n(\boldsymbol{x}_n,y)$ based on (3) as
\begin{equation}
\label{eq-extended-lik-ratio}
\zeta_n(\boldsymbol{x}_n,y)\equiv(-1)^{\text{I}(y \leq y_0)} \left[-2  \log \Lambda_n(\boldsymbol{x}_n,y)\right]  = \left\{ 
\begin{array}{ll}
	2  \log \Lambda_n(\boldsymbol{x}_n,y) \in (-\infty,0]& \mbox{$y \leq y_0$}\\
	-2  \log \Lambda_n(\boldsymbol{x}_n,y)    \in [0,\infty) & \mbox{$y \geq y_0$},\\
\end{array}
\right.
\end{equation}   
where $\text{I}(\cdot)$ denotes the indicator function.  That is,  $(-1)^{\text I(y \leq y_0)} \left[-2  \log \Lambda_n(\boldsymbol{x}_n,y)\right]$ is a signed version of the log-LR statistic $-2 \log \Lambda_n(\boldsymbol{x}_n,y)$ which, unlike the latter statistic, is an increasing function of $y$
and is negative when $y<y_0$ (but positive when $y>y_0$).  
Hence, to set a one-sided bound for $Y$, we calibrate the signed log-LR statistic $\zeta_n(\boldsymbol{x}_n,y)$ which has a one-to-one correspondence to $y$-values when $\Lambda_n(\boldsymbol{x}_n,y)$ is unimodal (unlike $\Lambda_n(\boldsymbol{x}_n,y)$ itself).
\begin{figure}[t!]
	\centering
	\includegraphics[width=\textwidth]{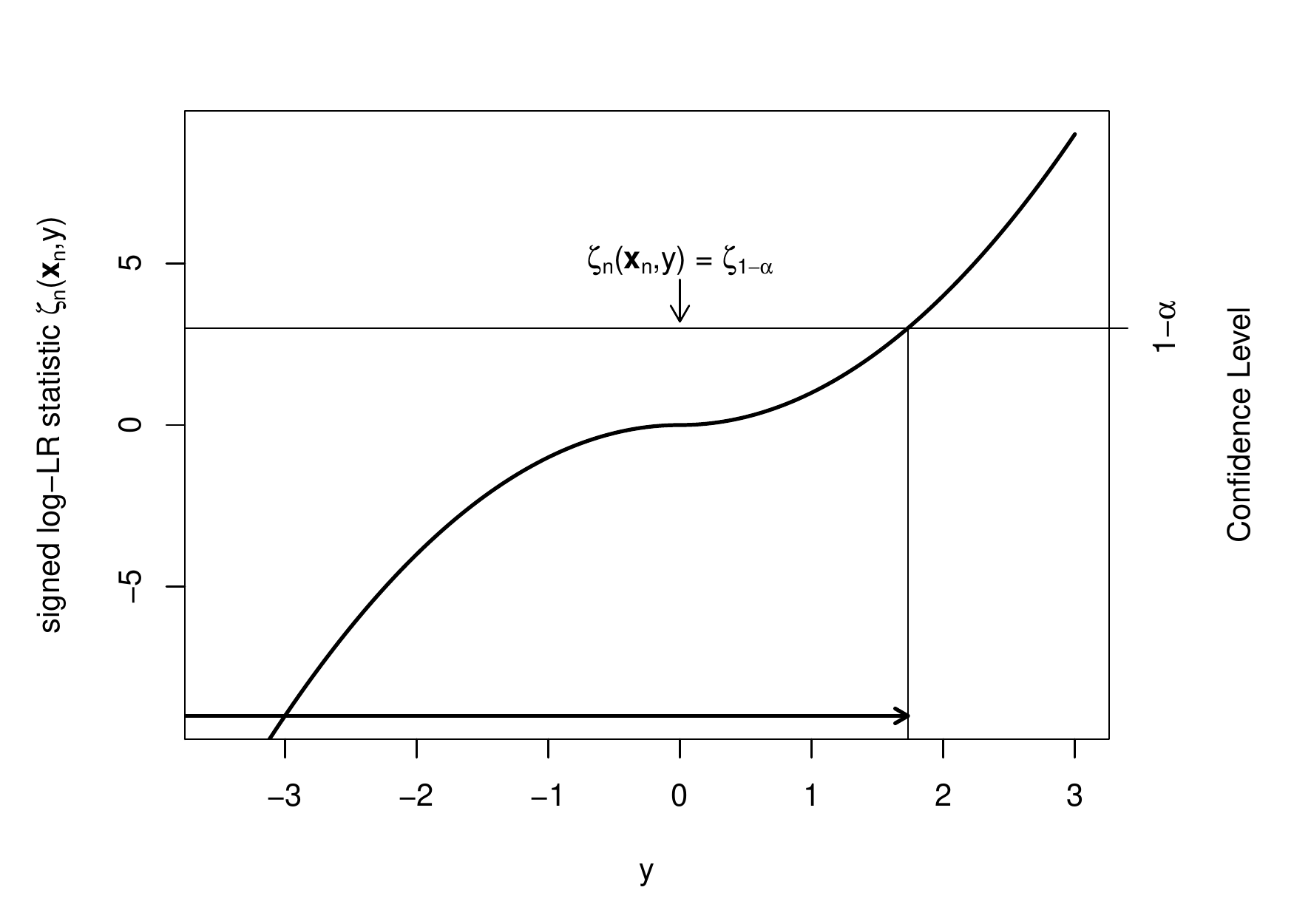}
	\caption{An illustration of the one-sided prediction bound procedure in (\ref{eq-vertical}).}
	\label{fig:vertical}
\end{figure}
Note that if the $1-\alpha$ quantile of the distribution of $\zeta_n(\boldsymbol{X}_n,Y)$, denoted by $\zeta_{1-\alpha}$, were known,  we could set a $1-\alpha$ upper prediction bound for $Y$ given $\boldsymbol{X}_n=\boldsymbol{x}_n$ as
\begin{equation}\label{eq-vertical}
\tilde{y}_{1-\alpha}(\boldsymbol{x}_n) \equiv \sup\{y\in \mathbb{R}: \zeta_n(\boldsymbol{x}_n,y) \leq \zeta_{1-\alpha}\}.
\end{equation}
Figure~\ref{fig:vertical} provides a graphical illustration of (\ref{eq-vertical}), illustrating the resulting prediction region.
Similar to the third approach in Section~\ref{sec-determine}, we can approximate the quantile $\zeta_{1-\alpha}^\ast$ using the $1-\alpha$ quantile of $\zeta_{n}(\boldsymbol{X}_n^\ast,Y^\ast)$, which is the bootstrap version of the signed log-LR statistic.
Then, a bootstrap prediction bound is obtained by replacing $\zeta_{1-\alpha}$ with $\zeta_{1-\alpha}^\ast$ in (\ref{eq-vertical}), and the $1-\alpha$ upper prediction bound $\tilde{y}_{1-\alpha}(\boldsymbol{x}_n)$ is defined as
\begin{equation}\label{eq:upper}
	\tilde{y}_{1-\alpha}(\boldsymbol{x}_n)=\sup_{y\in\mathbb{R}}\left\{y:\zeta_{n}(\boldsymbol{x}_n,y)\leq \zeta_{1-\alpha}^\ast\right\}.
\end{equation}
Constructing the $1-\alpha$ lower prediction bound $\utilde{y}_{1-\alpha}(\boldsymbol{x}_n)$ is similar, and the $1-\alpha$ lower prediction bound is
\begin{equation}\label{eq;lower}
	\utilde y_{1-\alpha}(\boldsymbol{x}_n)=\sup_{y\in\mathbb{R}}\left\{y:\zeta_{n}(\boldsymbol{x}_n,y)\leq \zeta_\alpha^\ast\right\}.
\end{equation}

The following algorithm describes how to compute the $1-\alpha$ upper (and lower) prediction bound $\tilde{y}_{1-\alpha}(\boldsymbol{x}_n)$ (and $\utilde{y}_{\alpha}(\boldsymbol{x}_n)$) using a Monte Carlo approximation of the bootstrap distribution $\zeta_{n}(\boldsymbol{X}_n^\ast,Y^\ast)$ and the bootstrap quantile $\zeta_{1-\alpha}^\ast$ (and $\zeta_{\alpha}^\ast$).
\begin{enumerate}
	\item Compute $\htheta$ using the observed data $\boldsymbol{X}_n=\boldsymbol{x}_n$.
	\item Simulate a sample $\boldsymbol{x}_n^\ast$ using a parametric bootstrap with $\htheta$ and compute $y_0(\boldsymbol{x}_n^\ast)$.
	\item Simulate $y^\ast$ from distribution $f(\cdot;\htheta)$ and compute $$\zeta^\ast\equiv\zeta_n(\boldsymbol{x}_n^\ast,y^\ast)=(-1)^{\text{I}[y^\ast\leq y_0(\boldsymbol{x}_n^\ast)]}\left[-2\log\Lambda_n(\boldsymbol{x}_n^\ast,y^\ast)\right].$$
	\item Repeat steps 2--3 $B$ times to obtain $B$ realizations of $\zeta^\ast$ as $\{\zeta_{i}^\ast\}_{i=1}^{B}$.
	\item Use the $1-\alpha$ (or $\alpha$) sample quantile from $\{\zeta_{i}^\ast\}_{i=1}^{B}$ as $\zeta_{1-\alpha}^\ast$ (or $\zeta_{\alpha}^\ast$) in (\ref{eq:upper}) (or (\ref{eq;lower})) to compute the $1-\alpha$ upper (or lower) prediction bound.
\end{enumerate}

Note that, in the algorithm for one-side bounds, one can simultaneously keep track of bootstrap statistics $\lambda^\ast=\left|\xi^\ast\right|$ for computing the two-sided bounds in (\ref{eq:bootstrap-find-lik-ratio}) (i.e., the same resamples can be used).

\section{Exact Results}\label{sec:pivotal-quantity}

The LR-based prediction method can often uncover and exploit pivotal quantities involving the data $\boldsymbol{X}_n$ and the predictand $Y$ when these exist.
In these cases, the LR statistic is pivotal, often emerging as a function of another pivotal quantity from $(\boldsymbol{X}_n,Y)$.
Consequently, in these cases, prediction intervals or bounds for $Y$ based on the LR-statistic (\ref{eq:likelihood-ratio}) will have exact coverage, when either based on the direct distribution of LR statistic (when available analytically) or more broadly when based on a bootstrap.
In this section, we provide more explanation about when the LR prediction method is exact, beginning with some illustrative examples.

\noindent\textbf{Exponential Distribution}:
Suppose the data $X_1,\dots,X_n$ and future predictand $Y$ are iid $\text{Exp}(\theta)$ with mean $\theta>0$. Letting $\widehat\theta_{\boldsymbol{x}_n,y}\equiv\left(\sum_{i=1}^{n}x_i+y\right)/(n+1)$, and $\widehat{\theta}_{\boldsymbol{x}_n}\equiv\sum_{i=1}^{n}x_i/n$ based on data $\boldsymbol{X}_n=\boldsymbol{x}_n$ and a given value $y>0$ of $Y$, then the LR statistic (\ref{eq:likelihood-ratio}) is
\[
\Lambda_n(\boldsymbol{x}_n,y)=\frac{\widehat\theta^{-n-1}_{\boldsymbol{x}_n,y}\exp\left[-\frac{\sum_{i=1}^{n}x_i+y}{\widehat{\theta}_{\boldsymbol{x}_n,y}}\right]}{\widehat{\theta}_{\boldsymbol{x}_n}^{-n}\exp\left[-\frac{\sum_{i=1}^{n}x_i}{\widehat{\theta}_{\boldsymbol{x}_n}}\right]y^{-1}\exp\left(-\frac{y}{y}\right)}=\frac{y\widehat{\theta}_{\boldsymbol{x}_n}^n}{\widehat{\theta}_{\boldsymbol{x}_n,y}^{n+1}}=\frac{\left(\frac{\bar{x}_n}{y}\right)^n}{\left[\frac{n}{n+1}\frac{\bar{x}_n}{y}+\frac{1}{n+1}\right]^{n+1}},
\]
which is a function of the pivotal quantity $\bar{X}_n/Y$ and a unimodal function of $y$.
Thus, the LR prediction method is exact (when based on the $F$-distribution of $\bar{X}_n/Y\sim F_{n,1}$ or the bootstrap as in (\ref{eq:bootstrap-find-lik-ratio})), and the prediction region becomes a prediction interval.

\noindent\textbf{Normal Distribution}:
Let $X_1,\dots,X_n,Y\sim\text{Norm}(\mu,\sigma)$, where both $\mu\in\mathbb{R}$ and $\sigma>0$ are unknown.
We construct the full model by allowing the predictand $Y$ to have a different location parameter $\mu$: $X_1\dots,X_n\sim\text{Norm}(\mu,\sigma)$ and $Y\sim\text{Norm}(\mu_y,\sigma)$ (i.e., $\btheta=(\mu,\sigma)$ and $\btheta_y=(\mu_y,\sigma)$).
Then for the full model, the ML estimators are
\[
\widehat{\mu}=\bar{X}_n,\quad\widehat{\mu}_y=Y,\quad\widehat{\sigma}=\sqrt{\frac{\sum_{i=1}^{n}(X_i-\bar{X}_n)^2}{n+1}},
\]
while for the reduced model $\theta=\theta_y$, the ML estimators are
\[
\widehat{\mu}=\frac{\sum_{i=1}^{n}X_i+Y}{n+1},\quad\widehat{\sigma}=\sqrt{\frac{\sum_{i=1}^{n}(X_i-\widehat{\mu})^2+(Y-\widehat{\mu})^2}{n+1}}.
\]
Then the resulting LR statistic (\ref{eq:likelihood-ratio}) is
\begin{equation}\label{eq:normal-two-para}
		\Lambda_n(\boldsymbol{X}_n,Y)=\left(1 + \frac{n^2+1}{n^2-1}\frac{t^2}{n} \right)^{-(n+1)/2}
\end{equation}
where
\[
t\equiv\frac{\bar{X}_n-Y}{s}\sqrt\frac{n}{n+1},
\]
and $s^2 \equiv  \sum_{i=1}^n (X_i-\bar{X}_n)^2/(n-1)$.
Here, $t \sim t_{n-1}$ has the same $t$-test statistic form as in (\ref{eq:normal-invert-t-test}).
Hence, the LR is pivotal and also $\Lambda_n(\boldsymbol{x}_n,y)$ is a unimodal function of $y$.
Thus the resulting prediction interval procedure has exact coverage probability when based on the bootstrap as in (\ref{eq:bootstrap-find-lik-ratio}) (or using the $t_{n-1}$ distribution here).

In fact, the results for the normal distribution can be generalized to the (log-)location-scale family, which contains many other important distributions.
Theorem~\ref{theorem-location-scale-family-11} says that, by allowing the location parameter of the predictand to be different from that of the data to create a full versus reduced model comparison, the resulting LR statistic (\ref{eq:likelihood-ratio}) is a pivotal quantity so that the prediction method is exact.
\begin{theorem}\label{theorem-location-scale-family-11}
	(i) Suppose the LR-statistic (\ref{eq:likelihood-ratio}) is a pivotal quantity.
	Then, the corresponding $1-\alpha$ prediction region (\ref{eq:bootstrap-find-lik-ratio}) for $Y$ based on the parametric bootstrap will have exact coverage.
	That is,
	\[
	\Pr\left[Y\in\mathcal{P}_{1-\alpha}(\boldsymbol{X}_n)\right]=1-\alpha.
	\]
	(ii) Suppose also that both the data $X_1,\dots,X_n$ and $Y$ are from a location-scale distribution with density $f(\cdot;\mu,\sigma)=\phi\left[(x-\mu)/\sigma\right]$ with parameters $\btheta=(\mu,\sigma)\in\mathbb{R}\times(0,\infty)$.
	In the LR construction (\ref{eq:likelihood-ratio}), suppose the full model involves parameters $\btheta=(\mu,\sigma)$ and $\btheta_{y}=(\mu_y,\sigma)$ (i.e., $X_1,\dots,X_n\sim f(\cdot;\mu,\sigma)$ and $Y\sim f(\cdot;\mu_y,\sigma)$).
	Then the LR statistic $\Lambda_{n}(\boldsymbol{X}_n,Y)$ (or $-2\log\Lambda_{n}(\boldsymbol{X}_n,Y)$) is a pivotal quantity and the result of Theorem~\ref{theorem-location-scale-family-11}(i) holds.
\end{theorem}
\noindent The proof is given in Section~A of supplementary material.

\noindent\textbf{Remark 1.} If the LR statistic $\Lambda_{n}(\boldsymbol{x}_n,y)$ is a unimodal function of $y\in\mathbb{R}$ with probability 1 (as determined by $\boldsymbol{X}_n$) and if the signed LR-statistic $\zeta_{n}(\boldsymbol{X}_n,Y)$ is a pivotal quantity, then the Theorem~\ref{theorem-location-scale-family-11}(i) result (i.e., exact coverage) also applies for one-sided prediction bounds based on parametric bootstrap.
For (log-)location-scale distributions as in Theorem~\ref{theorem-location-scale-family-11}(ii), the signed LR-statistic $\zeta_{n}(\boldsymbol{X}_n,Y)$ is a pivot.

We next provide some illustrative examples.

\noindent\textbf{Simple Regression}: We consider the simple linear regression model $Y\sim\text{Norm}(\beta_0+\beta_1x,\sigma)$ with given $x$ and data $Y_1,\dots,Y_n$ that satify $Y_i\sim\text{Norm}(\beta_0+\beta_1x_i,\sigma)$ where $x_i, i=1,\dots,n$.
Similar to the normal distribution example, it is natural to choose $\beta_0+\beta_1x$ to construct the ``full'' model.
In fact, choosing $\beta_0$ or $\beta_1$ gives the same log-LR statistic as $\beta_0+\beta_1x$, which is given by
\[
(n+1)\log\left(1+\frac{1}{n-2}T^2\right),
\]
where $T$ matches the standard statistic for normal theory predictions (i.e., a studentized version of $Y-\widehat{\beta}_0-\widehat{\beta}_1x$) having a $t$-distribution with $n-2$ degrees of freedom.

\noindent\textbf{Two-Parameter Exponential Distribution}:
Suppose $X_1,\dots,X_n,Y$ are independent observations from a two-parameter exponential distribution $\text{Exp}(\mu, \beta)$.
That is, $(X_i-\mu)/\beta\sim\text{Exp}(1)$ with location and scale parameters as $\btheta=(\mu,\beta)$.
Hence, under Theorem~\ref{theorem-location-scale-family-11}, the LR $\Lambda_{n}(\boldsymbol{X}_n,Y)$ is a pivotal quantity and bootstrap-calibrated prediction regions for $Y$ are exact.
In fact, an exact form of the LR-statistic may be determined as
\begin{equation*}
	\Lambda_n(\boldsymbol{x}_n, y)=
	\left[\frac{\sum_{i=1}^{n}x_i+y-(n+1)\min\{x_{(1)}, y\}}{\sum_{i=1}^{n}x_i-nx_{(1)}}\right]^{n+1}
	\label{eq:two-parameter-exponential}
\end{equation*}
based on given positive data $\boldsymbol{x}_n=(x_1,\ldots,x_n)$ where $x_{(1)}$ denotes the first order statistic.
Note that $\Lambda_n(\boldsymbol{x}_n,y)$ is a unimodal function of $y$ given $\boldsymbol{X}_n=\boldsymbol{x}_n$ (with probability 1); hence, one-sided prediction bounds for $Y$ based on a parametric bootstrap will also have exact coverage by Remark~1.
Replacing $\boldsymbol{x}_n$ and $y$ with corresponding random variables $\boldsymbol{X}_n$ and $Y$
in (\ref{eq:two-parameter-exponential}) gives
\[ \Lambda_n(\boldsymbol{X}_n, Y)\stackrel{d}{=}\left[\frac{\sum_{i=1}^{n}E_i+T-(n+1)\min\{E_{(1)}, T\}}{\sum_{i=1}^{n}E_i-nE_{(1)}}\right]^{n+1}, \]
where $E_1, \dots, E_n, T$ denote iid $\text{Exp}(1)$ random variables and $E_{(1)}$ is the first order statistic of
$E_1, \dots, E_n$; this verifies that $\Lambda_n(\boldsymbol{X}_n, Y)$ is
indeed a pivotal quantity for the exponential data case, as claimed in Theorem~\ref{theorem-location-scale-family-11}.
Thus this prediction interval procedure is exact when parametric bootstrap is used to obtain the distribution of $\Lambda_n(\boldsymbol{X}_n, Y)$.

\noindent\textbf{Uniform Distribution}:
Suppose $X_1,\dots,X_n,Y$ are iid $\text{Unif}(0,\theta)$, which is a one-parameter scale family.
The LR statistic (\ref{eq:likelihood-ratio}) has a form
\begin{equation}\label{eq-uniform-lr}
\Lambda_{n}(\boldsymbol{x}_n,y)=\frac{(x_{(n)}/y)^n}{[\max(x_{(n)}/y,1)]^{n+1}},
\end{equation}
where $x_{(n)}$ denotes the maximum of $\{x_1,\dots,x_n\}$.
Hence, $\lambda_n(\boldsymbol{x}_n,y)$ is a unimodal function $y$ given $\boldsymbol{X}_n=\boldsymbol{x}_n$ (with probability 1) and $\Lambda_{n}(\boldsymbol{X}_n,Y)$ can also be seen to be a pivotal quantity as
\[
\frac{X_{(n)}}{Y}=\frac{X_{(n)}/\theta}{Y/\theta}\stackrel{d}{=}\frac{\max\{U_1,\dots,U_n\}}{U_0},
\]
where $U_0,U_1,\dots,U_n$ denote iid $\text{Unif}(0,1)$ variables.
Hence, by Theorem~\ref{theorem-location-scale-family-11}(i) and Remark~1, both the two-sided prediction interval procedure (\ref{eq:bootstrap-find-lik-ratio}) as well as the one-sided bound procedures (\ref{eq:upper})-(\ref{eq;lower}) based on bootstrap have exact coverage.
That is, bootstrap simulation provides an effective and unified means for estimating the distribution of $\Lambda_{n}(\boldsymbol{x}_n,y)$ and constructing prediction intervals.

\section{General Results}\label{asymptotic-results}

Section~\ref{sec:pivotal-quantity} discusses cases where the LR prediction method is exact, particularly when the construction (\ref{eq:likelihood-ratio}) results in a pivotal quantity.
For some prediction problems, however, the LR statistic may not be a pivotal quantity, as the next example illustrates.

\noindent\textbf{Gamma Distribution}:
Let $X_1,\dots,X_n,Y$ denote iid random variables from a gamma density $f(x;\alpha,\beta)=\beta^{-\alpha} x^{\alpha-1} \exp(-x/\beta)/{\Gamma(\alpha)}$, $x>0$, with scale $\beta>0$ and shape $\alpha>0$ parameters.
In the LR construction (\ref{eq:likelihood-ratio}) with parameters $\btheta=(\beta,\alpha)$, suppose the full model involves parameters $\btheta$ and $\btheta_y=(\beta_y,\alpha)$ or $X_1,\dots,X_n\sim\text{Gamma}(\alpha,\beta)$ and $Y\sim\text{Gamma}(\alpha,\beta_y)$.
The LR statistic is then given by
\[
\Lambda_{n}(\boldsymbol{x}_n,y)=\frac{\sup_{\alpha} [\Gamma(\alpha)]^{-n} [(n\bar{x}_n+y)/(n+1)]^{-\alpha(n+1)} (y \prod_{i=1}^n x_i)^{\alpha-1}}{\sup_{\alpha} [\Gamma(\alpha)]^{-n} [n\bar{x}_n]^{-\alpha n} (y/\alpha)^{-\alpha} (y \prod_{i=1}^n x_i)^{\alpha-1}}
.
\]
Unlike the previous examples, the LR statistic is no longer a pivotal quantity.
However, we can use the bootstrap method to approximate the distribution for $\Lambda_{n}(\boldsymbol{X}_n,Y)$.
A small simulation study was conducted to investigate the coverage probability of the LR prediction method.
Figure~\ref{fig:gamma} shows the coverage probability of 90\% and 95\% one-sided prediction bounds for a future gamma variate based on the LR prediction method (i.e., (\ref{eq:upper}) and (\ref{eq;lower})), and compares the LR prediction method with other methods.
Sample size values $n=4,5,6,7,8,9,10,30,50,70,90,100$ were used.
Without loss of generality, the scale parameter was set to $\beta=1$, and the shape parameter values $\alpha=1,2,3$ were used.
$N=2000$ Monte Carlo samples were used to compute the coverage probability and $B=2000$ bootstrap samples were used to approximate the distribution of the signed log-LR statistic.
\begin{figure}[t!]
	\centering
	\includegraphics[width=\textwidth]{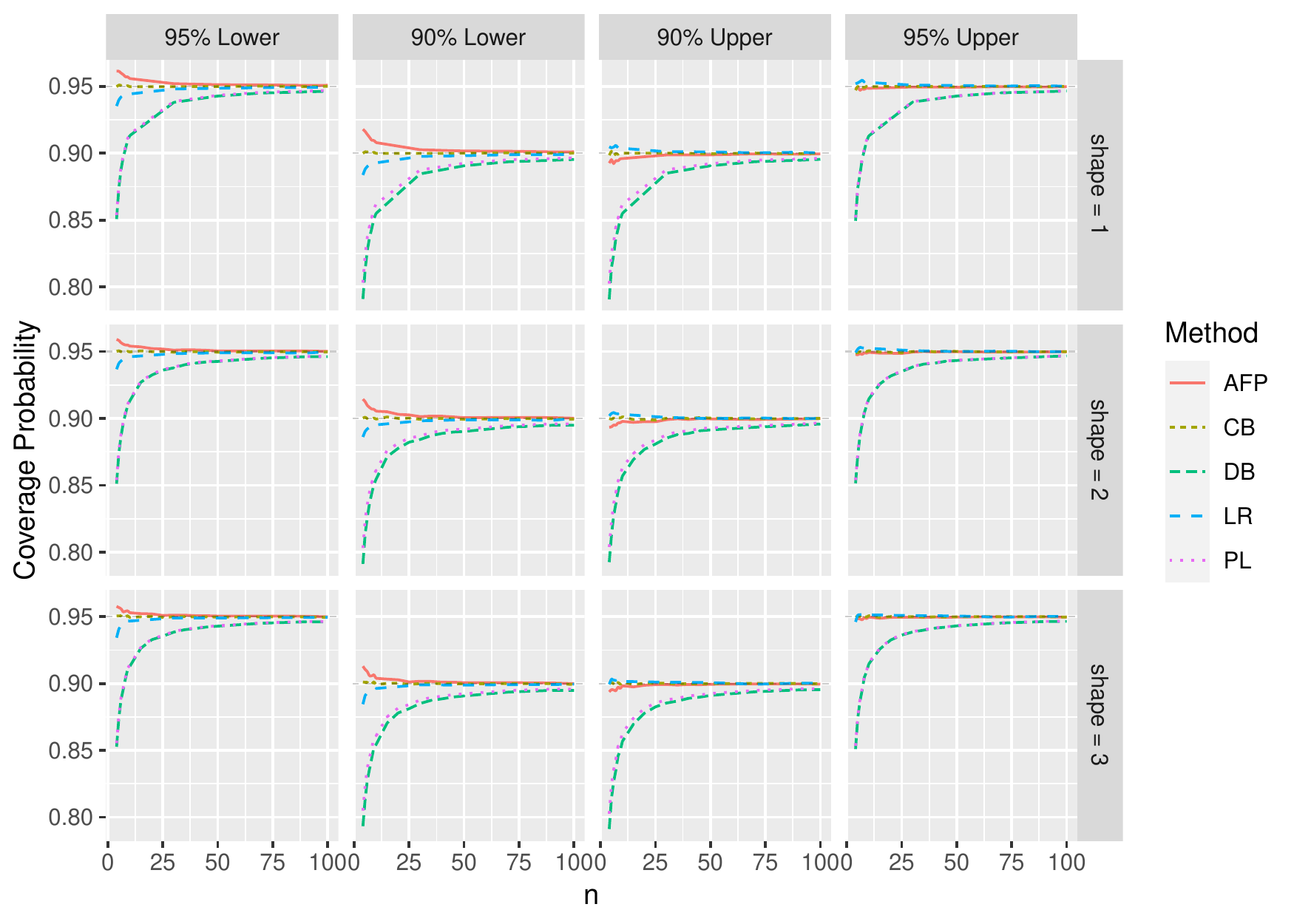}
	\caption{Coverage probabilities for predicting a gamma random variable versus the sample size $n$ for the $90\%$ and $95\%$ one-sided prediction bounds: Approximate Fiducial Prediction (AFP) (\citealt{chen2017approximate}), Calibration Bootstrap (CB) (\citealt{beran1990}), Direct-Bootstrap (DB) (\citealt{harris1989}), Likelihood Ratio Prediction (LR) (cf.~(\ref{eq:upper}), and (\ref{eq;lower})); Plug-in (PL) Methods.}
	\label{fig:gamma}
\end{figure}
The simulation results show that the calibration-bootstrap method has the best coverage while the LR prediction and the approximate fiducial (or GPQ) methods also work well.
When $n=4$, the difference between the true coverage of the LR prediction method and the nominal level (combined with Monte Carlo error) is less than $2\%$.
When the sample size $n$ increases, the discrepancy quickly shrinks.
This illustrtates that even when the LR statistic has a complicated and non-pivotal distribution, using parameteric bootstrap can be effective and useful.

Theorem~\ref{theorem-multiple-parameters}, given next, shows that the LR prediction method, combined with bootstrap calibration is asymptotically correct for continuous prediction problems under general conditions.
The theorem consists of two parts: the first part establishes that the log-LR statistic $-2\log\Lambda_{n}(\boldsymbol{X}_n,Y)$ has a limit distribution as $n\to\infty$.
However, this limit distribution will sometimes {\it not} be chi-square as in Wilks' theorem and may even depend on one or more of the parameters.
Nevertheless, the second part of Theorem~\ref{theorem-multiple-parameters} establishes that the bootstrap version of the log-LR statistic $-2\log\Lambda_{n}(\boldsymbol{X}_n^\ast,Y^\ast)$ can capture the distribution of $-2\log\Lambda_{n}(\boldsymbol{X}_n,Y)$.
Consequently, $1-\alpha$ bootstrap-based prediction regions (\ref{eq:bootstrap-find-lik-ratio}) for $Y$ will have coverage probabilities that converge to the correct coverage level $1-\alpha$ as the sample size $n$ increases.
\begin{theorem}\label{theorem-multiple-parameters}
Suppose a random sample $\boldsymbol{X}_n$ of size $n$ and a predictand $Y$ (independent of $\boldsymbol{X}_n$) have a common density $f(\cdot;\btheta)$, and that the LR construction (3) is used with $\boldsymbol{\theta}= (\theta,\boldsymbol{\theta}^\prime)$ and $\boldsymbol{\theta}_y = (\theta_y,\boldsymbol{\theta}^\prime)$ having common parameters $\boldsymbol{\theta}^\prime$ (and real-valued parameters $\theta,\theta_y$ that may differ).
Then, under mild regularity conditions (detailed in the supplement),
	\begin{enumerate}
		\item As $n\to\infty$,
		\[
		-2\log\Lambda_{n}(\boldsymbol{X}_n,Y)\xrightarrow{d}-2\log\left[\frac{f(Y;\btheta_0)}{\sup_{\theta_{y}}f(Y;\theta_{y},\btheta_{0}^\prime)}\right],
		\]
		where $\btheta_0=(\theta_0,\btheta_{0}^\prime)$ denotes the true value of the parameter vector $\btheta$.
		\item The bootstrap provides an asymptotically consistent estimator of the distribution of the log-LR statistic; that is,
		\[
		\sup_{\lambda\in\mathbb{R}}\left|\Pr{}_{\!\ast}(-2\log\Lambda_{n}^\ast\leq\lambda)-\Pr\left(-2\log\Lambda_{n}\leq\lambda\right)\right|\xrightarrow{p}0,
		\]
		where $\Pr{}_{\!\ast}$ is the bootstrap induced probability and $\Lambda^\ast_n\equiv\Lambda_n(\boldsymbol{X}_n^\ast,Y^\ast)$.
	\end{enumerate}
\end{theorem}
\noindent\textbf{Remark 2.}
Similar to Remark~1, if $\Lambda_{n}(\boldsymbol{X}_n,y)$ is a unimodal function of $y$ (with probability 1 or with probability approaching 1 as $n\to\infty$), then the signed log-LR statistic converges as well
\[
(-1)^{\text{I}[Y\leq y_0(X_n)]}\left[-2\log\Lambda_n(\boldsymbol{X}_n,Y)\right] \stackrel{d}{\rightarrow}
\begin{cases}
	2\log\left[\frac{f(Y;\boldsymbol{\theta}_0)}{\sup_{\theta_y} f(Y;\theta_y,\boldsymbol{\theta}^\prime_0)}\right], & Y\leq m_0.\\
	-2\log\left[\frac{f(Y;\boldsymbol{\theta}_0)}{\sup_{\theta_y} f(Y;\theta_y,\boldsymbol{\theta}^\prime_0)}\right], & Y> m_0.
\end{cases}
\]
where $m_0$ is maximizer of ${f(y;\boldsymbol{\theta}_0)}/{\sup_{\theta_y} f(y; \theta_y,\boldsymbol{\theta}^\prime_0)}$ over $y$.
The bootstrap approximation for the signed log-LR statistic is also valid asymptotically.
The proof is described in the supplementary material along with a proof of Theorem~\ref{theorem-multiple-parameters}.

We use two examples to illustrate Theorem~\ref{theorem-multiple-parameters}.
In the uniform example of Section~\ref{sec:pivotal-quantity}, if $\theta_0>0$ denotes the true parameter value (i.e., $Y\sim\mbox{Unif}(0,\theta_0)$), then the limit distribution in Theorem~\ref{theorem-multiple-parameters}(i) for the log-LR statistic is
\begin{equation}\label{eq-new-key-1}
-2\log\Lambda_{n}(\boldsymbol{X}_n,Y)\xrightarrow{d}-2\log\left(\frac{Y}{\theta_0}\right),
\end{equation}
which has a $\chi^2_{2}$ distribution.
This result can be alternatively verified by using the LR in (\ref{eq-uniform-lr}) to determine that
\[
\Lambda_{n}(\boldsymbol{X}_n,Y)=\frac{(X_{(n)}/Y)^n}{[\max(X_{(n)}/Y,1)]^{n+1}}=\frac{Y}{X_{(n)}}\frac{(X_{(n)}/Y)^{n+1}}{\left[\max(X_{(n)}/Y,1)\right]^{n+1}}\xrightarrow{d}\text{Unif}(0,1)
\]
from which $-2\log\Lambda_{n}(\boldsymbol{X}_n,Y)\xrightarrow{d}\chi_2^2$ follows.
While $\Lambda_{n}(\boldsymbol{X}_n,Y)$ is a pivotal quantity for any $n\geq1$ (so that bootstrap calibration is exact by Theorem~\ref{theorem-location-scale-family-11}), Theorem~\ref{theorem-multiple-parameters} shows that the bootstrap also captures the limiting distribution of the log-LR statistic $\chi_2^2$ in (\ref{eq-new-key-1}).

To consider a distribution with more than one parameter, we re-visit the gamma distribution example in this section.
Using Theorem~\ref{theorem-multiple-parameters}, the limit distribution is
\begin{equation}\label{key}
\begin{split}
-2\log\Lambda_{n}(\boldsymbol{X}_n,Y)&\xrightarrow{d}-2\log\left[\frac{f(Y;\beta_0,\alpha_0)}{\sup_\beta f(Y;\beta,\alpha_0)}\right]\\
&=-2\log\left[\left(\frac{Y}{\beta_0}\right)^{\alpha_0}\exp\left(-\frac{Y}{\beta_0}+\alpha_0\right)\right]\\
&=2(Z-\alpha_0)-2\alpha_0\log(Z),
\end{split}
\end{equation}
where $Z\equiv Y/\beta_0\sim\text{Gamma}(\alpha_0,1)$.
Even though the log-LR statistic (\ref{eq:likelihood-ratio}) depends on the shape parameter $\alpha_0$ in this example, a bootstrap approximation for the distribution of the log-LR statistic is asymptotically correct by Theorem~\ref{theorem-multiple-parameters}.
This is demonstrated numerically through the coverage behavior of Figure~\ref{fig:gamma}.

In addition to the bootstrap (Theorem~\ref{theorem-multiple-parameters}~(ii)), the limit distribution of the log-LR statistic in Theorem~\ref{theorem-multiple-parameters}~(i) (as well as that of the signed log-LR statistic $\zeta_n$ from Remark~2) may also be used as an alternative approach to construct prediction intervals.
That is, we may use the $1-\alpha$ quantile of the limit distribution in Theorem~\ref{theorem-multiple-parameters}~(i) to replace the quantile $\lambda_{1-\alpha}$ in (4) (corresponding to the finite sampling distribution of the log-LR statistic).  For example, in the uniform prediction example above, the limit distribution is $\chi_2^2$ from (\ref{eq-new-key-1}) and an approximate $1-\alpha$ prediction region for $Y$ would be $\{y : -2 \log \Lambda_n(\boldsymbol{x}_n,y) \leq \chi^2_{2,1-\alpha}\}$, which has asymptotically correct coverage by Theorem~\ref{theorem-multiple-parameters}~(i).  As another example from the gamma prediction case, we can use the $1-\alpha$ quantile of the limit distribution in (\ref{key}) to replace $\lambda_{1-\alpha}$ in (\ref{eq:bootstrap-find-lik-ratio}).
This limit distribution, however, depends on the unknown shape parameter $\alpha_0$ in (\ref{key}), which differs from the uniform case where the log-LR statistic has a limit distribution in (\ref{eq-new-key-1}) that is free of unknown parameters.  However, in prediction cases such as the gamma distribution, where the limit distribution of log-LR statistic from Theorem~\ref{theorem-multiple-parameters}~(i) does depend on unknown parameters, we can still approximate and use the limit distribution by replacing any unknown parameters with consistent estimators.
To illustrate with gamma predictions, we may estimate the unknown shape parameter $\alpha_0$ in (\ref{key}) with the ML estimate $\widehat{\alpha}$ from the data $\boldsymbol{X}_n=\boldsymbol{x}_n$ and compute the $1-\alpha$ quantile of the ``plug-in'' version of the limit distribution $2(Z-\widehat\alpha)-2\widehat\alpha\log(Z)$.
Such use of the limit distribution of the log-ratio statistic (Theorem~\ref{theorem-multiple-parameters}~(i)), possibly with plug-in estimation, can be computationally simpler than parametric bootstrap and may have advantages for large sample sizes or when the numerical costs of repeated ML estimation (i.e., as in bootstrap) are prohibitive.

\section{How to Choose the Full Model}\label{sec-choose-full}

When $\btheta$ is a parameter vector, construction of the LR statistic depends on which parameter component in $\btheta$ is varied to create $\btheta_y$ in a full model, where $(\btheta,\btheta_{y})$ again differ in exactly one component.
Our recommendation is to choose a parameter that is most readily identifiable from a single observation.
In other words, we can envision maximizing $f(y|\btheta)$, the density of one observation, with respect to a single unknown parameter of our choice, with all remaining parameters fixed at arbitrary values; the parameter that represents the simplest single maximization step of $f(y|\btheta)$ corresponds to a good parameter choice in the LR construction and choosing such a parameter can simplify computation.
Under some one-to-one reparameterization, if necessary, such a parameter is often given by the mean or the median of the model density $f(y|\btheta)$ that can naturally be identified through a single observation $Y$.
This approach is also supported by Theorem~\ref{theorem-multiple-parameters} where the limiting distribution of the LR statistic is determined by a single-parameter maximization.
We provide some examples in the rest of this section.

\noindent\textbf{Normal Distribution}: For $\text{Norm}(\mu,\sigma)$ with unknown $\mu,\sigma$, consider maximizing the normal density $f(y;\mu,\sigma)$ of a single observation $Y$ with respect to one parameter while the other parameter is fixed.
If choosing $\mu$, then we can estimate $\mu$ simply as $\widehat{\mu}_y=y$.
However, if choosing $\sigma$, we have $\widehat{\sigma}_y^2=(y-\mu)^2$, which is less simple.
More technically, the LR construction for the normal model then eventually involves a complicated estimation of the remaining parameter $\mu$ (from a ``full'' model sample $x_1,\dots,x_n,y$) as the maximizer of $-2\log|y-\mu|-n\log[\sum_{i=1}^{n}(x_i-\mu)^2]$, which can exhibit numerical sensitivity in the value of $y$.
We have seen in Section~\ref{sec:pivotal-quantity} that choosing the mean parameter $\mu$ gives a LR statistic with a nice form and coverage properties, but choosing $\sigma$ results in a much less tractable LR statistic.

\noindent\textbf{Gamma Distribution}: For a gamma distribution with shape $\alpha$ and scale $\beta$, we select the parameter that most easily maximizes a single gamma density $f(y;\alpha,\beta)$ when the other parameter is fixed.
Choosing $\beta$ is simpler than choosing $\alpha$ because the maximizer of the gamma pdf with respect to $\beta$ is $\widehat{\beta}=\alpha/y$ while choosing $\alpha$ does not yield a closed-form maximizer.
Also, choosing $\alpha$ leads to a more complicated LR statistic and a less tractable limit distribution, from Theorem~\ref{theorem-multiple-parameters}.
Alternatively, to more closely align parameter choice in the gamma distribution with parameter identification from one observation, we use another parameterization $(\alpha\beta,\alpha)$ and choose the mean $\alpha\beta$ (i.e., estimated as $y$ analogously to the normal case).
This choice will produce the same LR statistic as choosing $\beta$ in the parameterization $(\beta,\alpha)$.
Hence, choosing a parameter with the simplest stand-alone maximization step in a parameterization and choosing a parameter based on identifiability considerations (e.g., means) in a second parameterization are related concepts.

\noindent\textbf{Generalized Gamma Distribution}:
The (extended) generalized gamma distribution, using the \citet{farewell1977study} parameterization (see also Section~4.13 of \citealt{meeker2021}) has (on the log scale) a location $\mu$, a scale $\sigma$, and a shape parameter $\lambda$ with a pdf given by
\[
f(y;\mu,\sigma,\lambda)=
\begin{cases}
	\dfrac{|\lambda|}{\sigma y}\phi_{lg}[\lambda\omega+\log(\lambda^{-2});\lambda^{-2}]\quad&\text{if}\quad\lambda\neq0\\
	\dfrac{1}{\sigma y}\phi_{\text{norm}}(\omega)&\text{if}\quad\lambda=0,
\end{cases}
\]
where $y>0$, $\omega=[\log(y)-\mu]/\sigma$, $-\infty<\mu<\infty$, $-\infty<\lambda<\infty$, $\sigma>0$, $\phi_{\text{norm}}(\cdot)$ is the pdf of $\text{Norm}(0,1)$, and
$
\phi_{lg}(z;\kappa)=\exp\left[\kappa z-\exp(z)\right]/{\Gamma(\kappa)}
$.
When considering the maximization of a single density $f(y;\mu,\sigma,\lambda)$ for one of the three parameters (with others fixed), the ML estimator of $\mu$ has the simplest form as $\widehat{\mu}=\log(y)$.
Hence, we choose $\mu$ to construct the ``full'' model.
A small simulation study was done to investigate the coverage probability of one-sided prediction bounds.
Fixing the location parameter at $\mu=0$ and the scale parameter at $\sigma=1$ without loss of generality, we consider four different levels for the shape parameter $\lambda=0.5, 1, 1.5, 2$.
The Monte Carlo sample size was set as $N=2000$; the bootstrap sample size was set as $B=2000$.
The data sample sizes are $n=30,35,40,45,50,55,60,70,80,90,100$.
Figure~\ref{fig:gengamma-cp} shows the coverage probabilities for the LR prediction method and for the plug-in method (where the unknown parameters are replaced with the ML estimates) versus the sample size.
We can see that the LR method has good coverage probability and consistently outperforms the plug-in method.
Also, the coverage probability of the LR method, if not close to the nominal confidence level, is conservative.
The plug-in method, however, is always anti-conservative in this simulation study.
\begin{figure}[t!]
	\centering
	\includegraphics[width=0.8\linewidth]{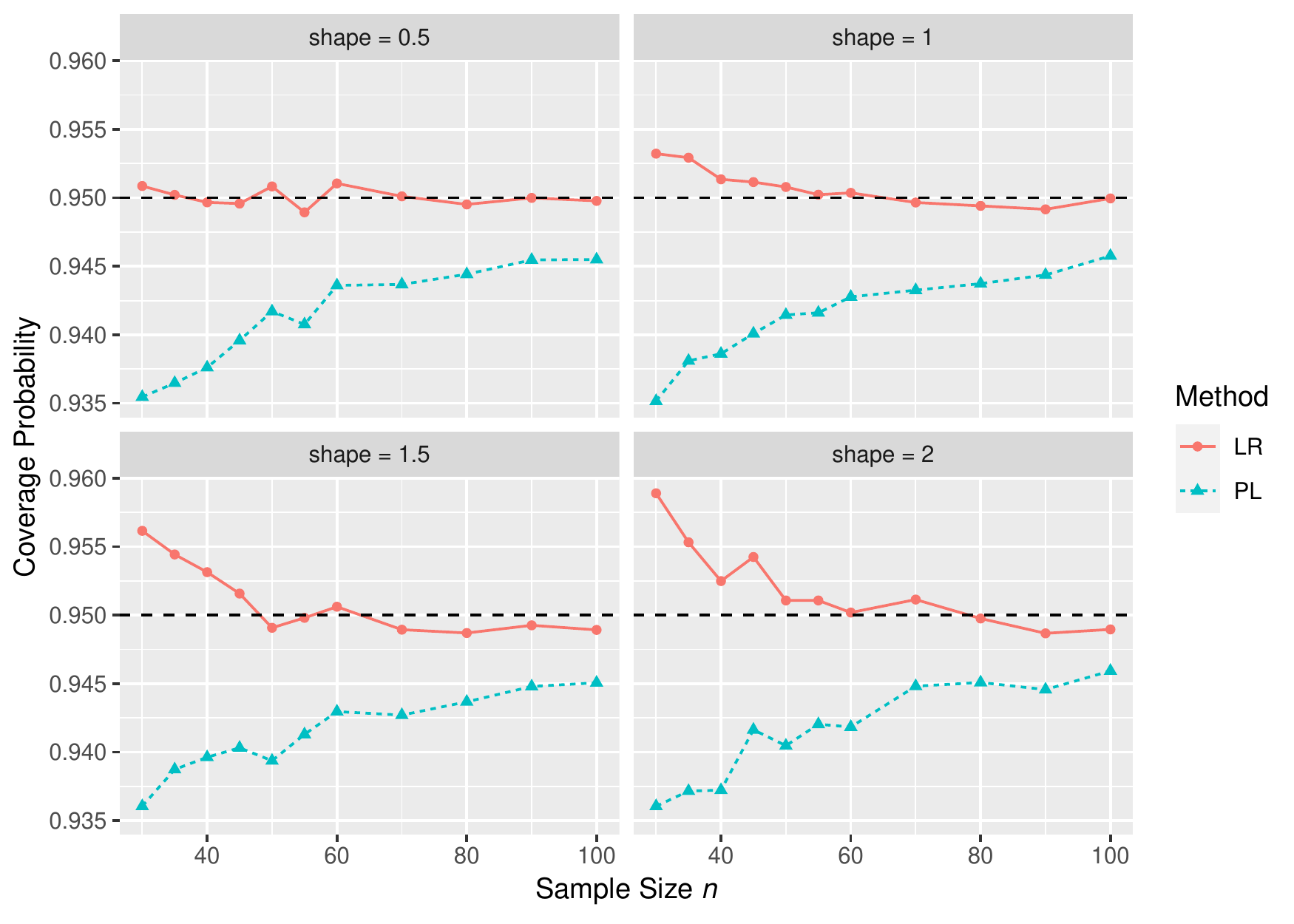}
	\caption{Coverage probabilities of 95\% upper bounds using LR prediction method (LR) and naive plug-in method (PL) versus the sample size $n$.}
	\label{fig:gengamma-cp}
\end{figure}

\section{Discrete Distributions}\label{sec:discrete-distribution}

Prediction methods for discrete distributions are less well developed when compared to those for continuous distributions.
Many methods (e.g., the calibration-bootstrap method
proposed by \citealt{beran1990}) that generally work in continuous
settings are not applicable for certain discrete data models.
This section presents three prediction applications based on discrete distributions and
shows that the LR prediction method not only works for discrete
distributions but also has performance comparable to existing methods that were especially tailored to these particular discrete prediction problems.  Because the LR prediction method is a generally applicable method for prediction, the good performance of the method against specialized alternatives in these discrete cases is also suggestive that the LR approach may apply well in other prediction problems.

\subsection{Binomial Distribution}\label{subsec:binomial-dist}

We consider the prediction problem where there are two independent binomial
samples with the same probability $p$. The initial sample $X$ has a distribution
$\text{Binom}(n, p)$ and the predictand $Y$ has a distribution $\text{Binom}(m, p)$.
Both $n$ and $m$ are known, and note here that the data and predictand have related, though not identical, distributions (unlike predictions in Sections~\ref{sec:pivotal-quantity}-\ref{asymptotic-results} with continuous $Y$).
The goal is to construct a prediction interval for
$Y$ given observed data $X=x$.

Using the fact that the conditional distribution of $X$ given the sum $X+Y$ does
not depend on the parameter $p$, \citet{thatcher1964relationships} proposed a prediction
method based on the cdf of the hypergeometric distribution.
\citet{faulkenberry1973method} proposed a similar method using the conditional
distribution of $Y$ given the sum $X+Y$, which is also free of the parameter $p$.
\citet{nelsonapplied} proposed a different approach using the asymptotic normality of an approximate pivotal
statistic. However, numerical studies in \citet{wang2008coverage} and
\citet{krishnamoorthy2011improved} showed that Nelson's method has poor coverage probability, and proposed alternative prediction methods using asymptotic normality (e.g., based on inverting a score-like statistic instead of a Wald-like statistic).

To construct prediction intervals using the LR prediction method,
the reduced model is that $X$ and $Y$ have the same parameter $p$ while the
full model allows $X$ and $Y$ to have a different $p$ in the construction (\ref{eq:likelihood-ratio}).
The LR statistic is then
\[ \Lambda_{n,m}(x, y)=\frac{\text{dbinom}(x, n, \widehat{p}_{xy})\times\text{dbinom}(y, m, \widehat{p}_{xy})}
{\text{dbinom}(x, n, \widehat{p}_{x})\times\text{dbinom}(y, m, \widehat{p}_{y})}=\frac{(\widehat{p}_{xy})^{x+y}(1-\widehat{p}_{xy})^{n+m-x-y}}{(\widehat{p}_x)^{x}(1-\widehat{p}_x)^{n-x}(\widehat{p}_y)^{y}(1-\widehat{p}_y)^{m-y}},
\]
where $\text{dbinom}$ is the binomial pmf, $\widehat{p}_{x}=x/n$, $\widehat{p}_y=y/m$,
and $\widehat{p}_{xy}=(x+y)/(n+m)$. The asymptotic distribution of
the log-LR statistic is $-2\log\Lambda_{n,m}(X,Y)\xrightarrow{d}\chi^2_1$ as
$n\to\infty$ and $m\to\infty$; this theoretical result is explained further in Section~\ref{sec-theories-discrete} for discrete data.
The prediction region is defined as
\begin{equation}\label{eq:binomial-bounds}
	\mathcal{P}_{1-\alpha}(x)=\{y:-2\log\Lambda_{n,m}(x, y)\leq\chi_{1,1-\alpha}^2\},
\end{equation}
which gives an approximate $1-\alpha$ prediction interval procedure that has, asymptotically, equal-tail probabilities.

Due to the discrete nature of data here, we can further refine the LR prediction method by making a continuity
correction at the extreme values $x=0$ or $x=n$ and $y=0$ or $y=m$.
We first define $x^\prime\equiv x+0.5\text{I}_{x=0}-0.5\text{I}_{x=n}$
and $y^\prime\equiv y+0.5\text{I}_{y=0}-0.5\text{I}_{y=m}$ and further define
$\widehat{p}^\prime_{x}\equiv x^\prime/n$, $\widehat{p}^\prime_y\equiv y^\prime/m$,
and $\widehat{p}^\prime_{xy}\equiv(x^\prime+y^\prime)/(n+m)$. The corrected LR statistic is then
\begin{equation*}
	\Lambda_{n,m}^{\prime}(x,y)=\frac{(\widehat{p}_{xy}^\prime)^{x^\prime+y^\prime}(1-\widehat{p}_{xy}^\prime)^{n+m-x^\prime-y^\prime}}{(\widehat{p}_x^\prime)^{x^\prime}(1-\widehat{p}_x^\prime)^{n-x^\prime}(\widehat{p}_y^\prime)^{y^\prime}(1-\widehat{p}_y^\prime)^{m-y^\prime}}.
\end{equation*}

\begin{figure}[!ht] 
	\begin{subfigure}{0.55\textwidth}
		\includegraphics[width=\linewidth]{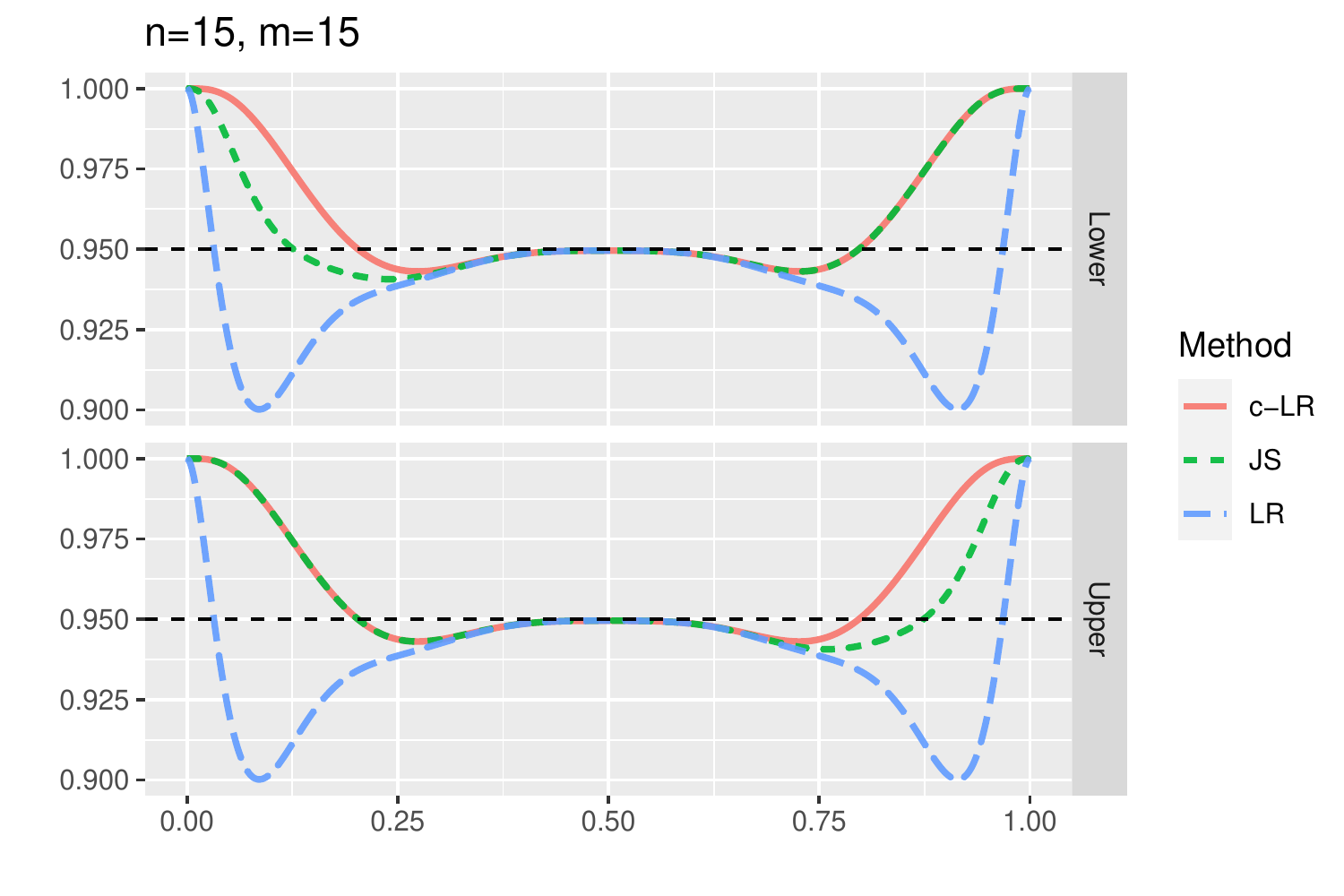}
	\end{subfigure}\hspace*{\fill}
	\begin{subfigure}{0.55\textwidth}
		\includegraphics[width=\linewidth]{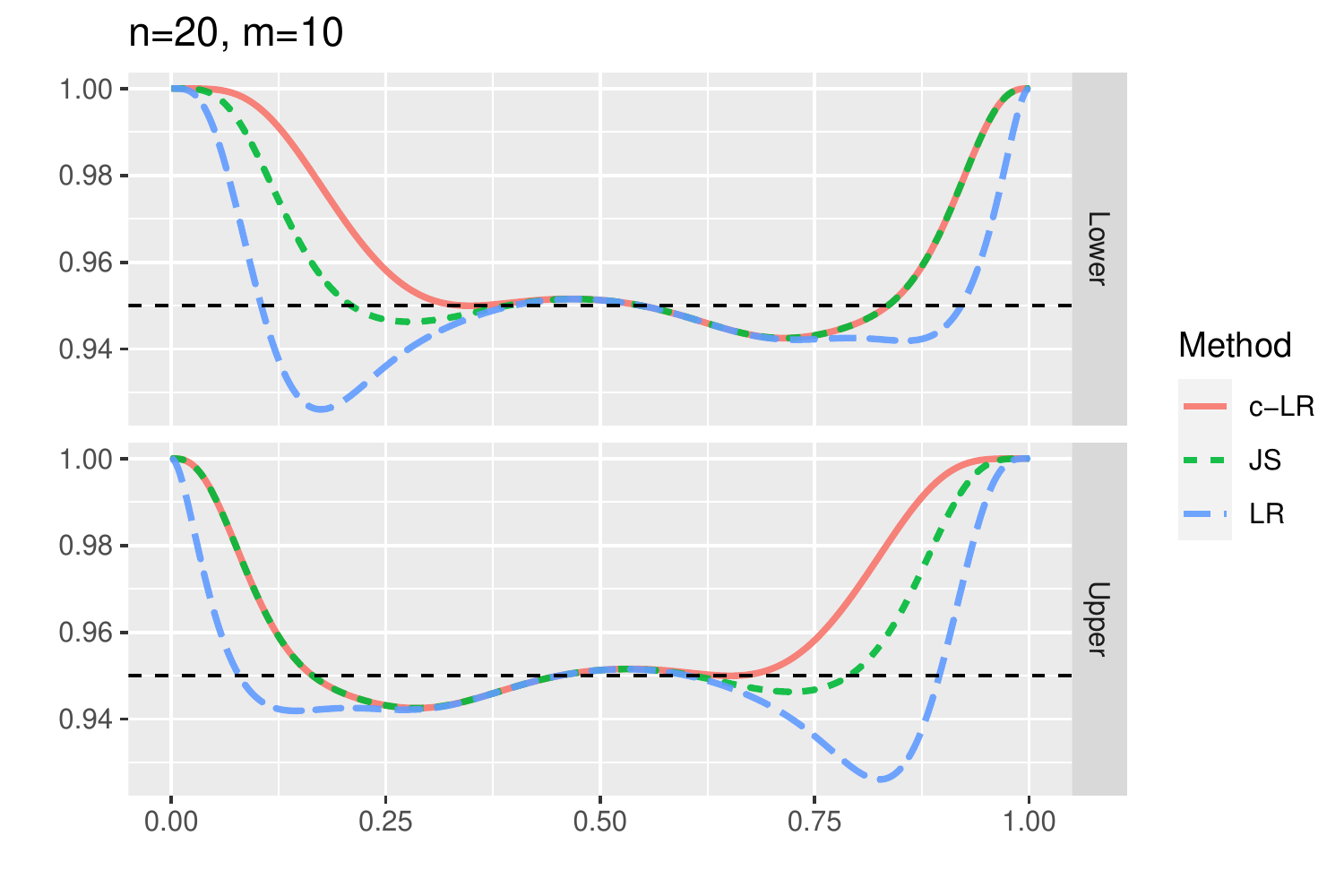}
	\end{subfigure}
	
	\medskip
	\begin{subfigure}{0.55\textwidth}
		\includegraphics[width=\linewidth]{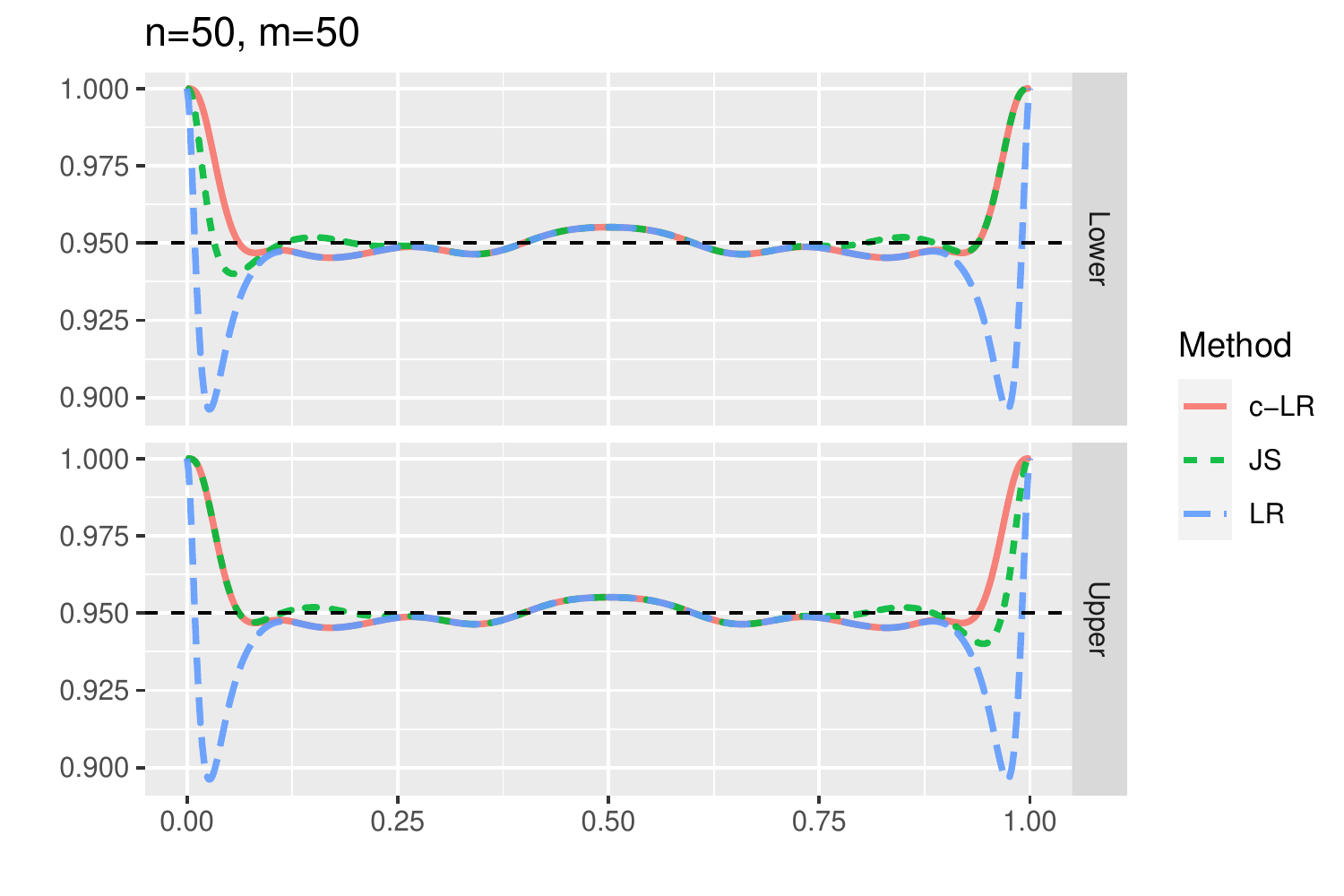}
	\end{subfigure}\hspace*{\fill}
	\begin{subfigure}{0.55\textwidth}
		\includegraphics[width=\linewidth]{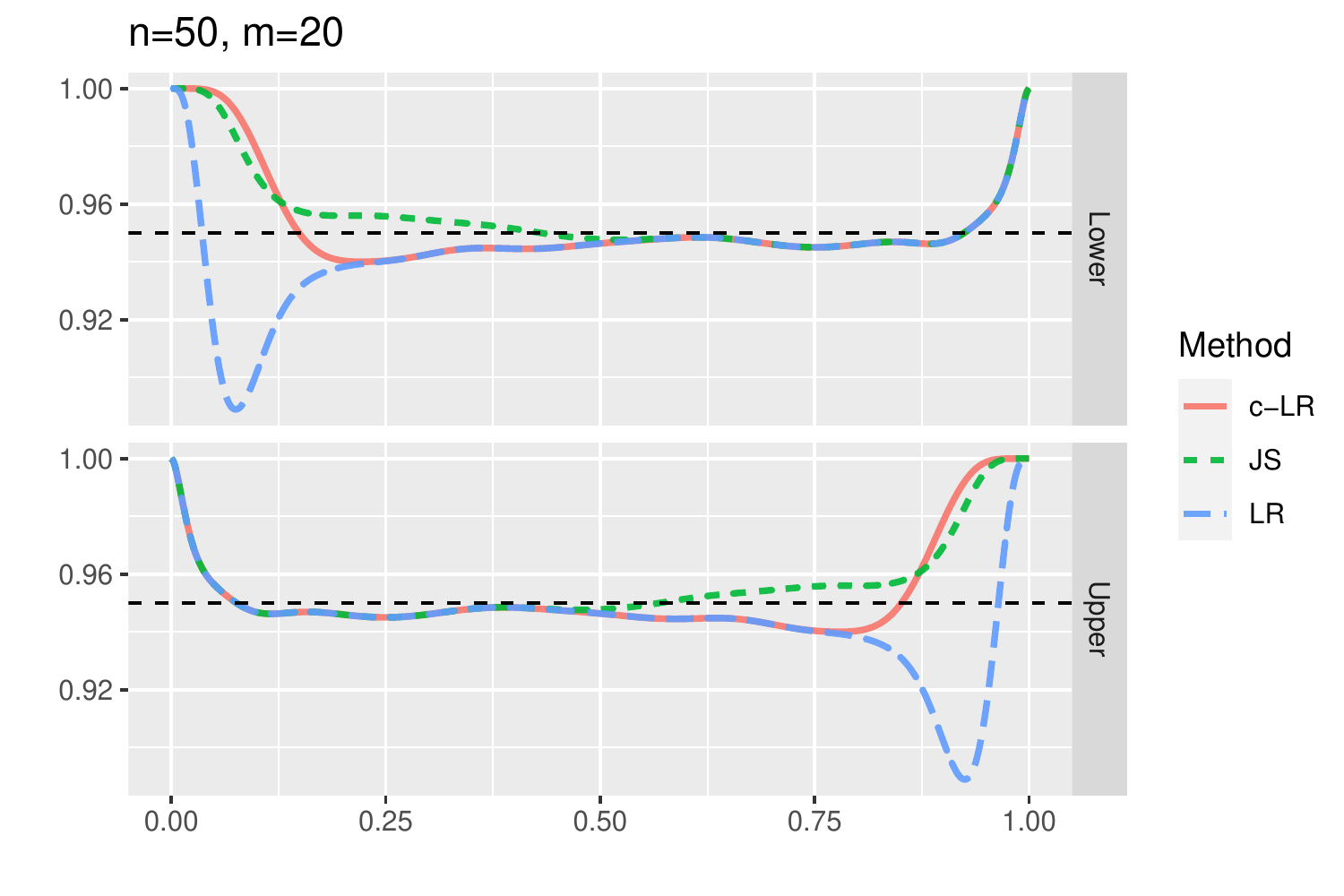}
	\end{subfigure}
	
	\medskip
	\begin{subfigure}{0.55\textwidth}
		\includegraphics[width=\linewidth]{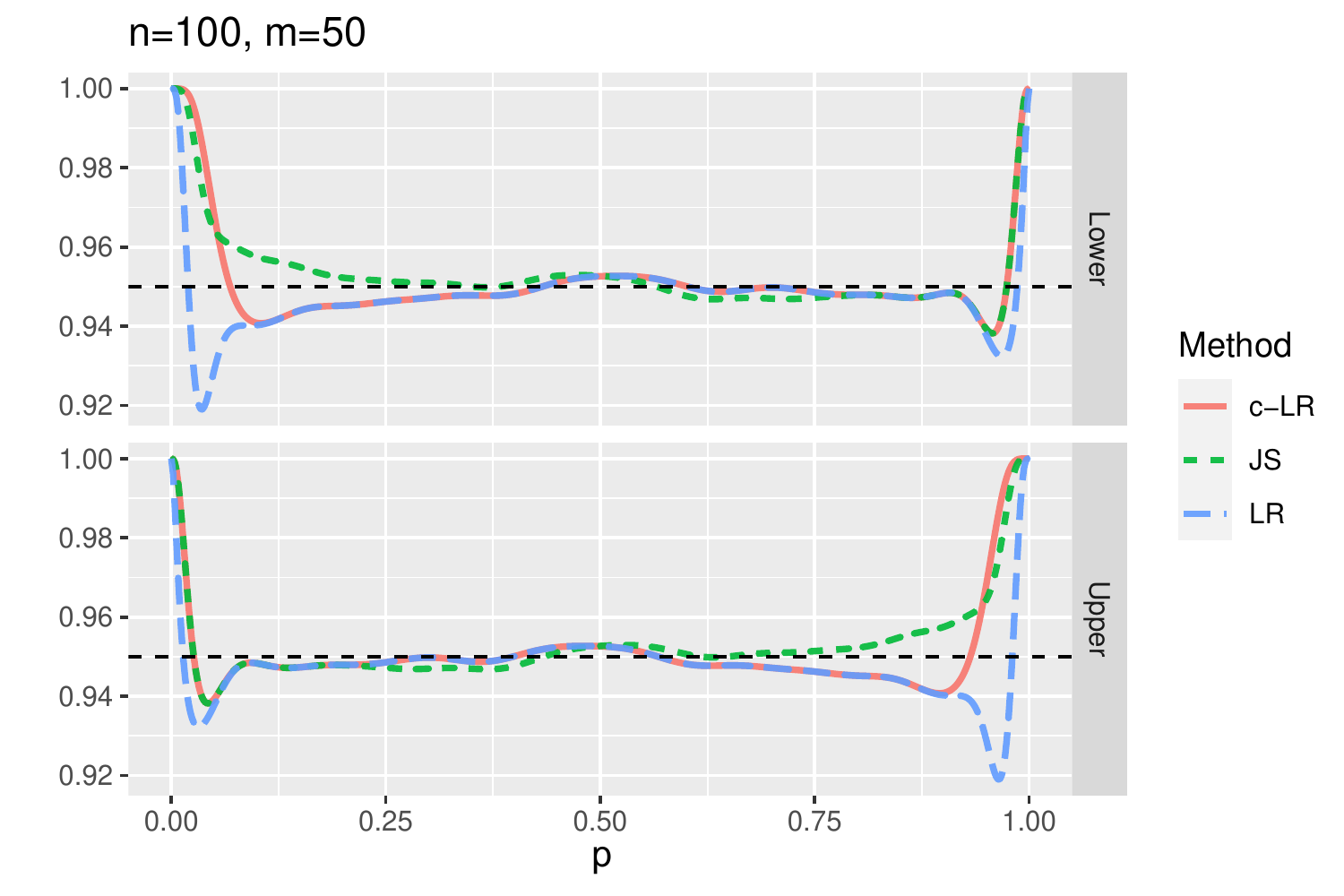}
	\end{subfigure}\hspace*{\fill}
	\begin{subfigure}{0.55\textwidth}
		\includegraphics[width=\linewidth]{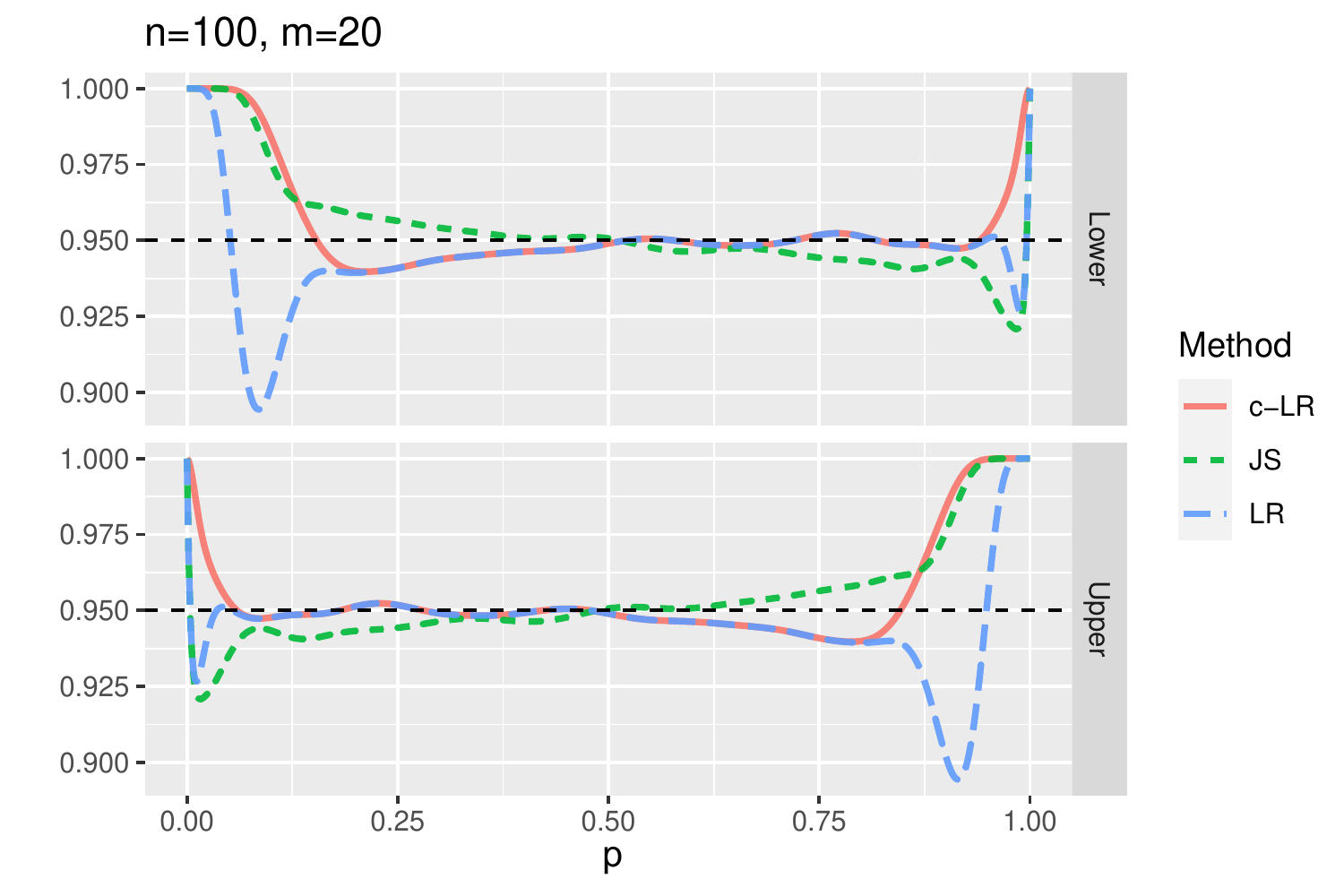}
	\end{subfigure}
	
	\caption{Coverage probabilities of 95\% lower and upper prediction bounds using corrected LR prediction method (c-LR),
		joint sample prediction method (JS), and LR prediction method (LR) as a function of $p$.}
	\label{fig:binomial}
\end{figure}

A numerical study was done to investigate the coverage probability of the LR prediction methods
and we also used the joint sampling prediction method as a benchmark for comparison because of its good
coverage probability (cf.~\citealt{krishnamoorthy2011improved}). The results in Figure~\ref{fig:binomial} show that
the original LR prediction method can have poor coverage for small sample sizes (e.g., $n=15$) when $p$ is near 0
or 1. However, with the continuity correction, the coverage probability of the corrected
LR prediction method is comparable to that of the joint sampling prediction method.
Unlike the joint sampling prediction method though, the LR prediction method is a general approach, which applies outside binomial prediction problems and has not been specifically designed for this purpose.
The numerical results here aim to provide evidence that the LR prediction method can be a generally effective procedure for prediction.

\subsection{Poisson Distribution}\label{subsec:poisson-dist}

Suppose $X\sim\text{Poi}(n\lambda)$ and $Y\sim\text{Poi}(m\lambda)$, where
$n$ and $m$ are known positive integers and $\lambda>0$ is unknown. The goal is to construct
prediction intervals for $Y$ based on data $X=x$. Similar to the binomial example, one can construct
prediction intervals using the fact that the conditional distribution of $X$ or $Y$
given $X+Y$ is binomial while \citet{nelsonapplied} and \citet{krishnamoorthy2011improved}
proposed alternative methods using a Wald-like approximate pivotal quantity.

To construct prediction intervals using the LR prediction method, the reduced model for the LR statistic (\ref{eq:likelihood-ratio}) is that
$X$ and $Y$ have the same $\lambda$ parameter while for the full model,
$X$ and $Y$ may not have the same $\lambda$ parameter. The LR statistic is given by
\[ \Lambda_{n,m}(x, y)=\frac{\text{dpois}(x, n\widehat{\lambda}_{xy})\times\text{dpois}(y, m\widehat{\lambda}_{xy})}
{\text{dpois}(x, n\widehat{\lambda}_x)\times\text{dpois}(y, m\widehat{\lambda}_y)}=\frac{\exp\left[-(n+m)\widehat{\lambda}_{xy}\right](n\widehat{\lambda}_{xy})^{x}(m\widehat{\lambda}_{xy})^{y}}{\exp\left(-n\widehat{\lambda}_x-m\widehat{\lambda}_y\right)(n\widehat{\lambda}_x)^{x}(m\widehat{\lambda}_y)^{y}}, \]
where $\widehat{\lambda}_{xy}=(x+y)/(n+m)$, $\widehat{\lambda}_x=x/n$, $\widehat{\lambda}_y=y/m$, and $\text{dpois}$ is the Poisson pmf.
The prediction interval can be obtained using
the same procedure in (\ref{eq:binomial-bounds}); see Section~\ref{sec-theories-discrete} for justification.
We can also refine the LR prediction method with a continuity
correction at the extremes $x=0$ or $y=0$ by letting $x^\prime\equiv x+0.5\text{I}_{x=0}$ and $y^\prime\equiv y+0.5\text{I}_{y=0}$.
Then define $\widehat{\lambda}_{xy}^\prime\equiv (x^\prime+y^\prime)/(n+m)$,
$\widehat{\lambda}^\prime_x\equiv x^\prime/n$, and $\widehat{\lambda}_y^\prime\equiv y^\prime/m$ so that the corrected LR statistic is
\begin{equation*}
	\Lambda_{n,m}^\prime(x,y)=\frac{\exp\left[-(n+m)\widehat{\lambda}_{xy}^\prime\right](n\widehat{\lambda}_{xy}^\prime)^{x^\prime}(m\widehat{\lambda}_{xy}^\prime)^{y^\prime}}{\exp\left(-n\widehat{\lambda}_x^\prime-m\widehat{\lambda}_y^\prime\right)(n\widehat{\lambda}_x^\prime)^{x^\prime}(m\widehat{\lambda}_y^\prime)^{y^\prime}}.
\end{equation*}

A numerical study was done to investigate the coverage probability of
the proposed methods. Similar to the binomial example, the joint sampling
method from \citet{krishnamoorthy2011improved} was used for comparison
because of its good coverage properties. Figure~\ref{fig:poisson} shows that
the continuity correction improves the poor coverage of the LR prediction method when $\lambda$ is small. The coverage
probability of the corrected LR prediction method is comparable to
that of the joint sampling method. In the bottom-right subplot of Figure~\ref{fig:poisson},
the corrected method has better performance than the joint sampling method.
Again, unlike the joint sampling prediction method, the LR prediction method is general and not specifically designed for Poisson predictions.

\begin{figure}[ht!] 
	\begin{subfigure}{0.55\textwidth}
		\includegraphics[width=\linewidth]{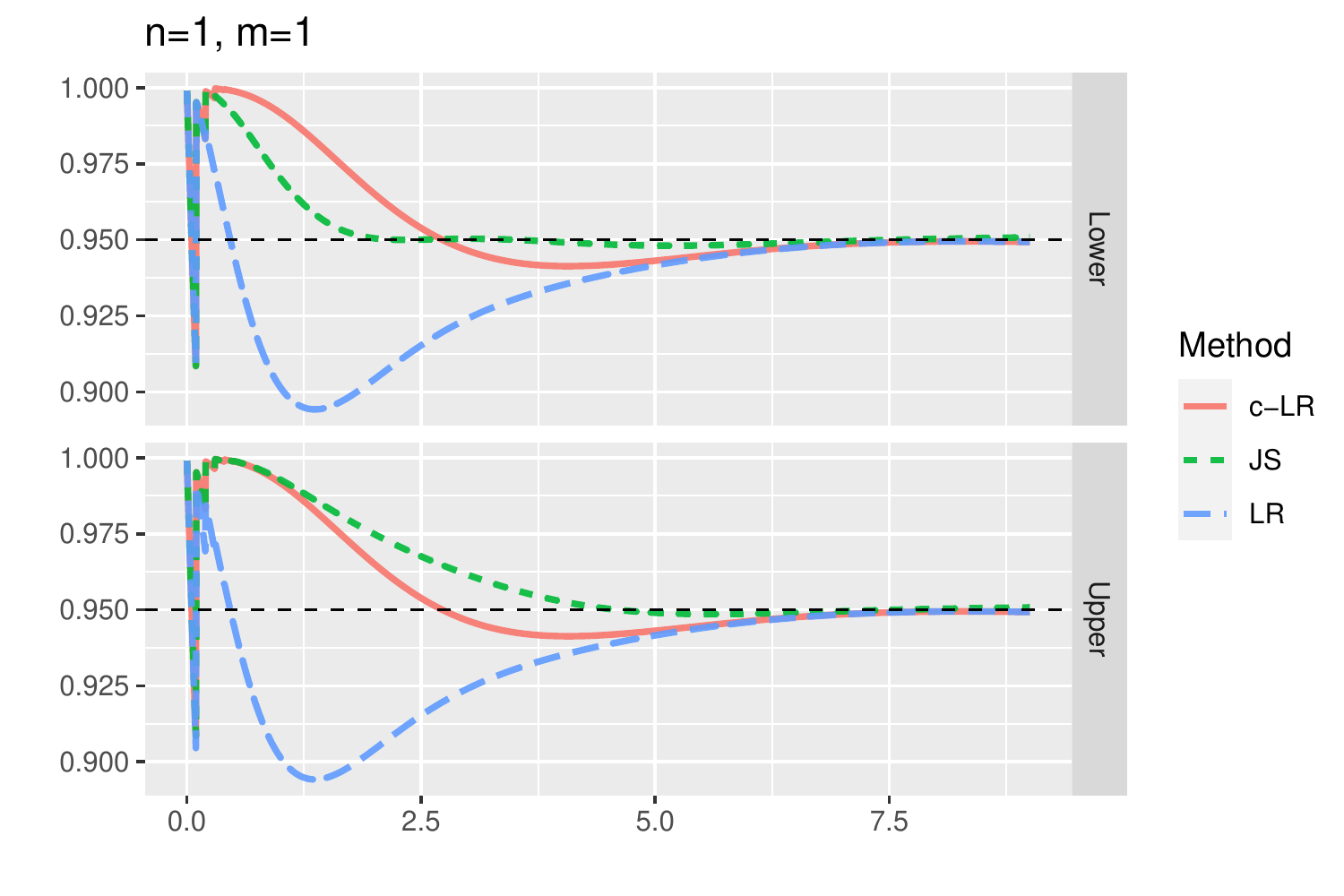}
	\end{subfigure}\hspace*{\fill}
	\begin{subfigure}{0.55\textwidth}
		\includegraphics[width=\linewidth]{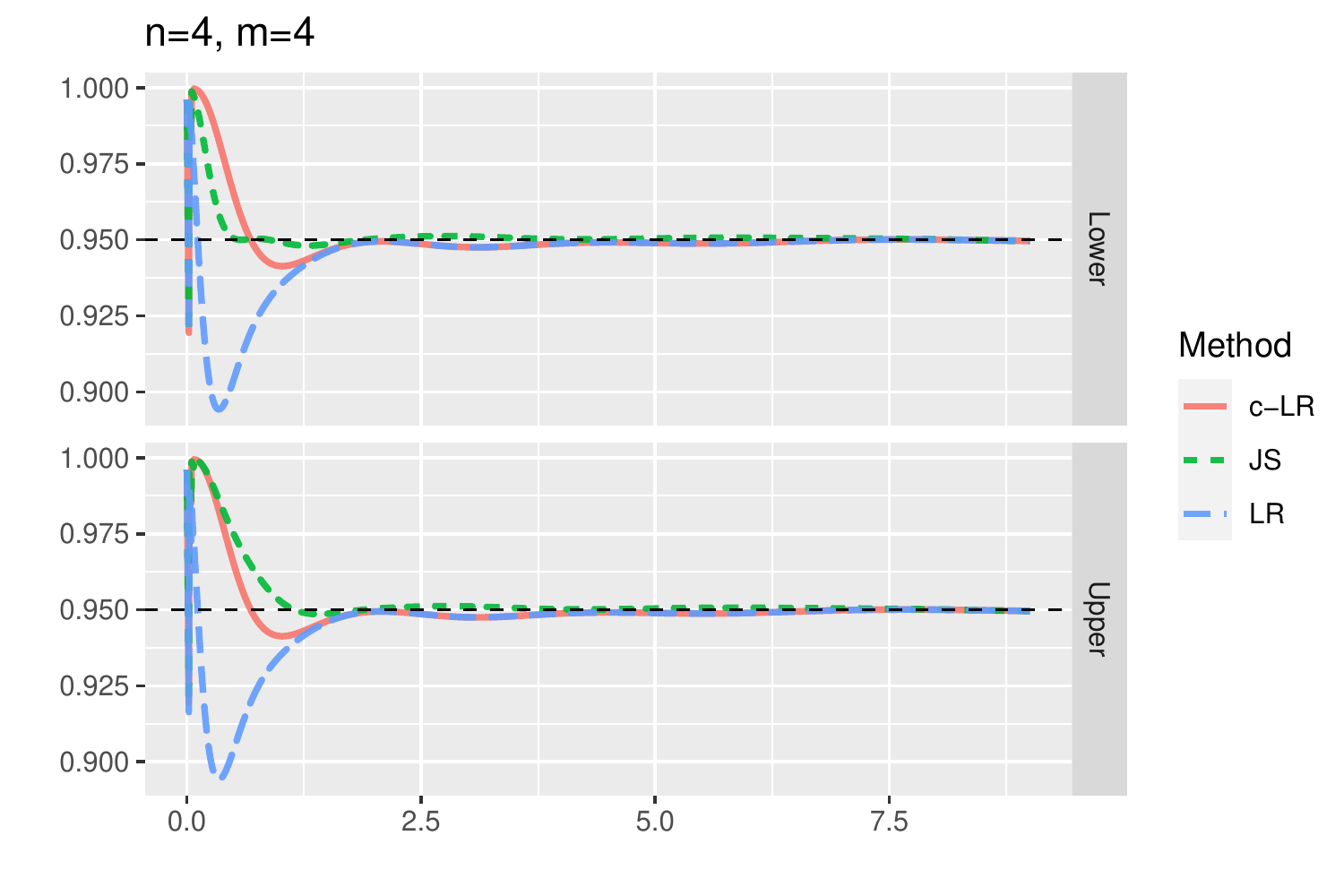}
	\end{subfigure}
	
	\medskip
	\begin{subfigure}{0.55\textwidth}
		\includegraphics[width=\linewidth]{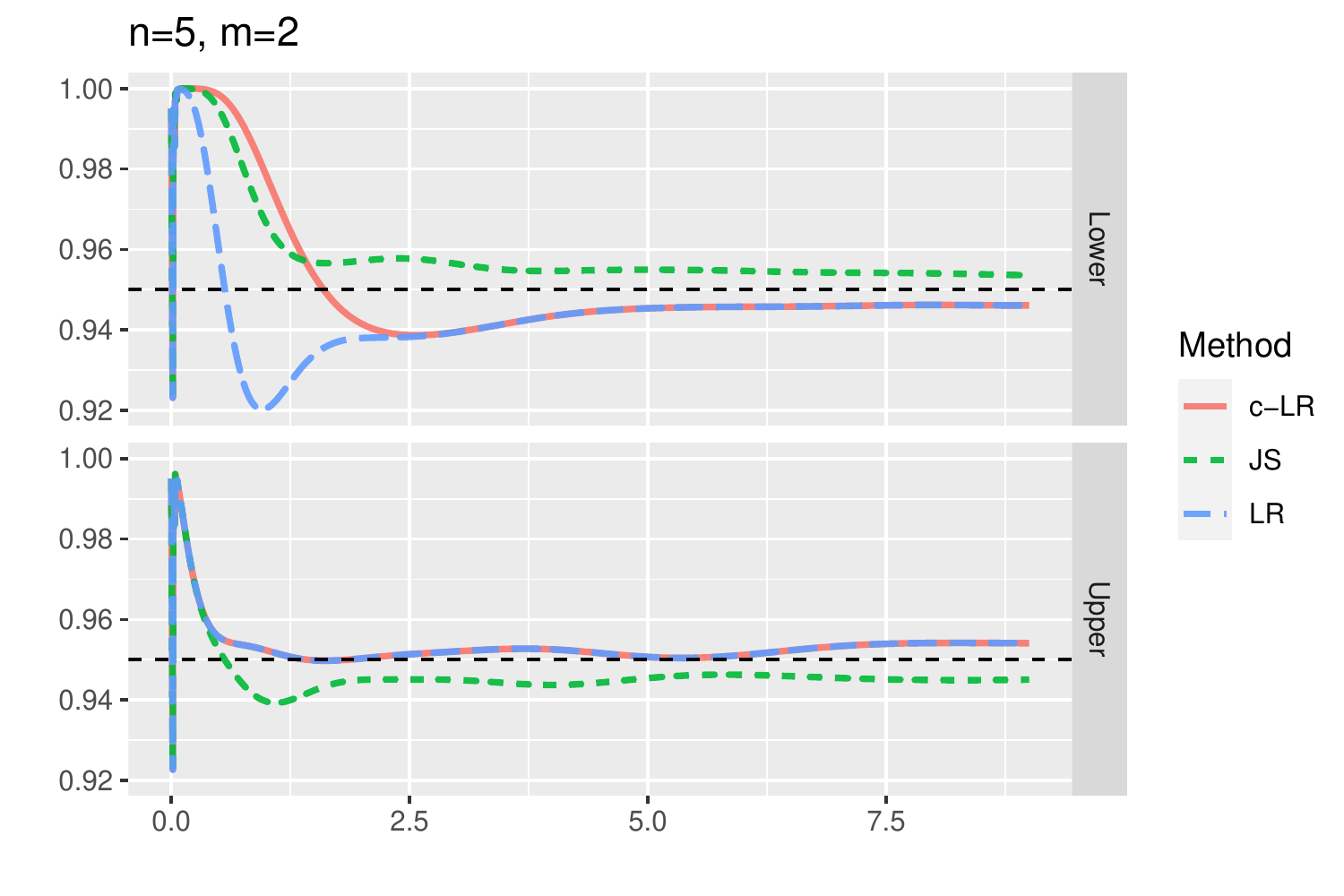}
	\end{subfigure}\hspace*{\fill}
	\begin{subfigure}{0.55\textwidth}
		\includegraphics[width=\linewidth]{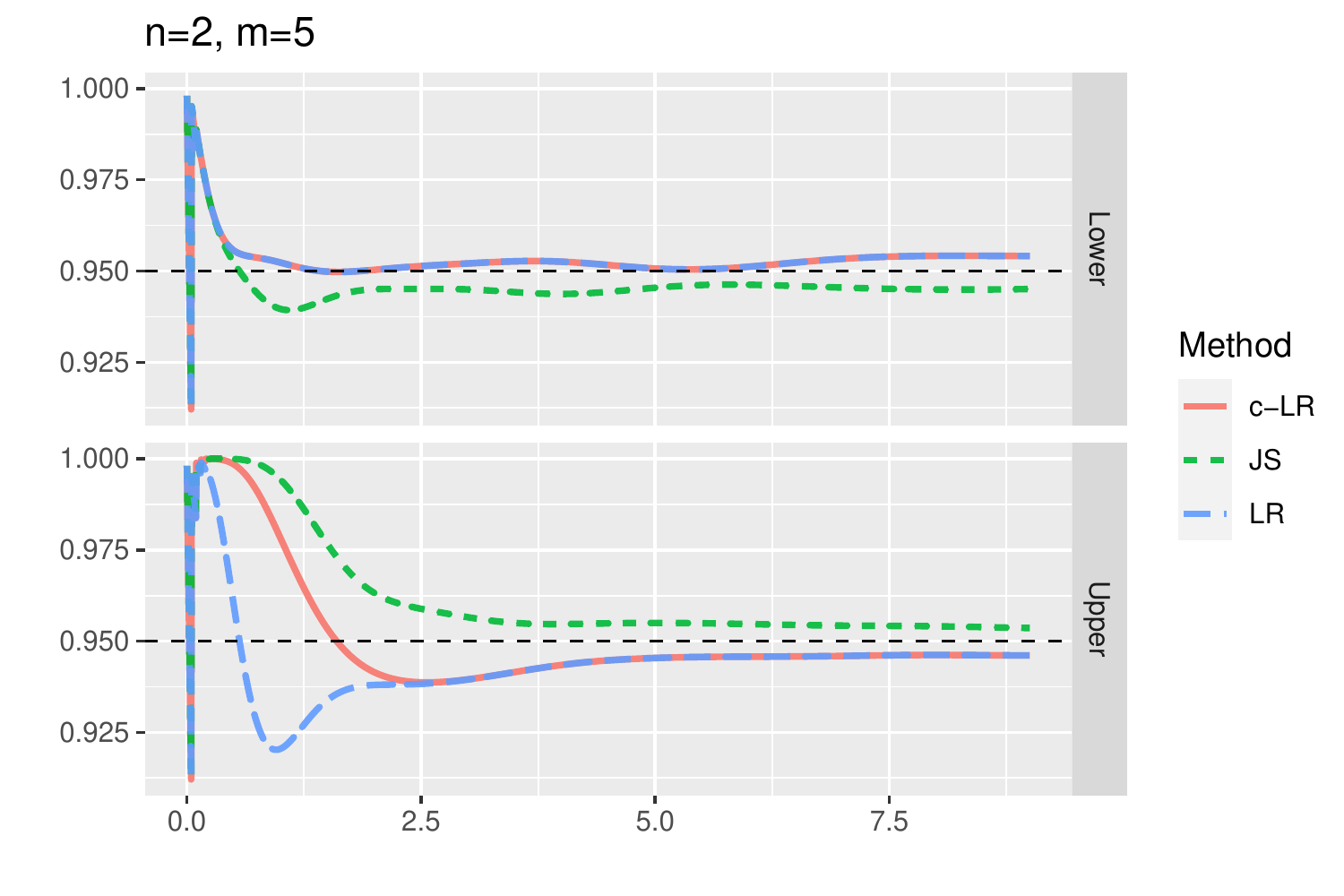}
	\end{subfigure}
	
	\medskip
	\begin{subfigure}{0.55\textwidth}
		\includegraphics[width=\linewidth]{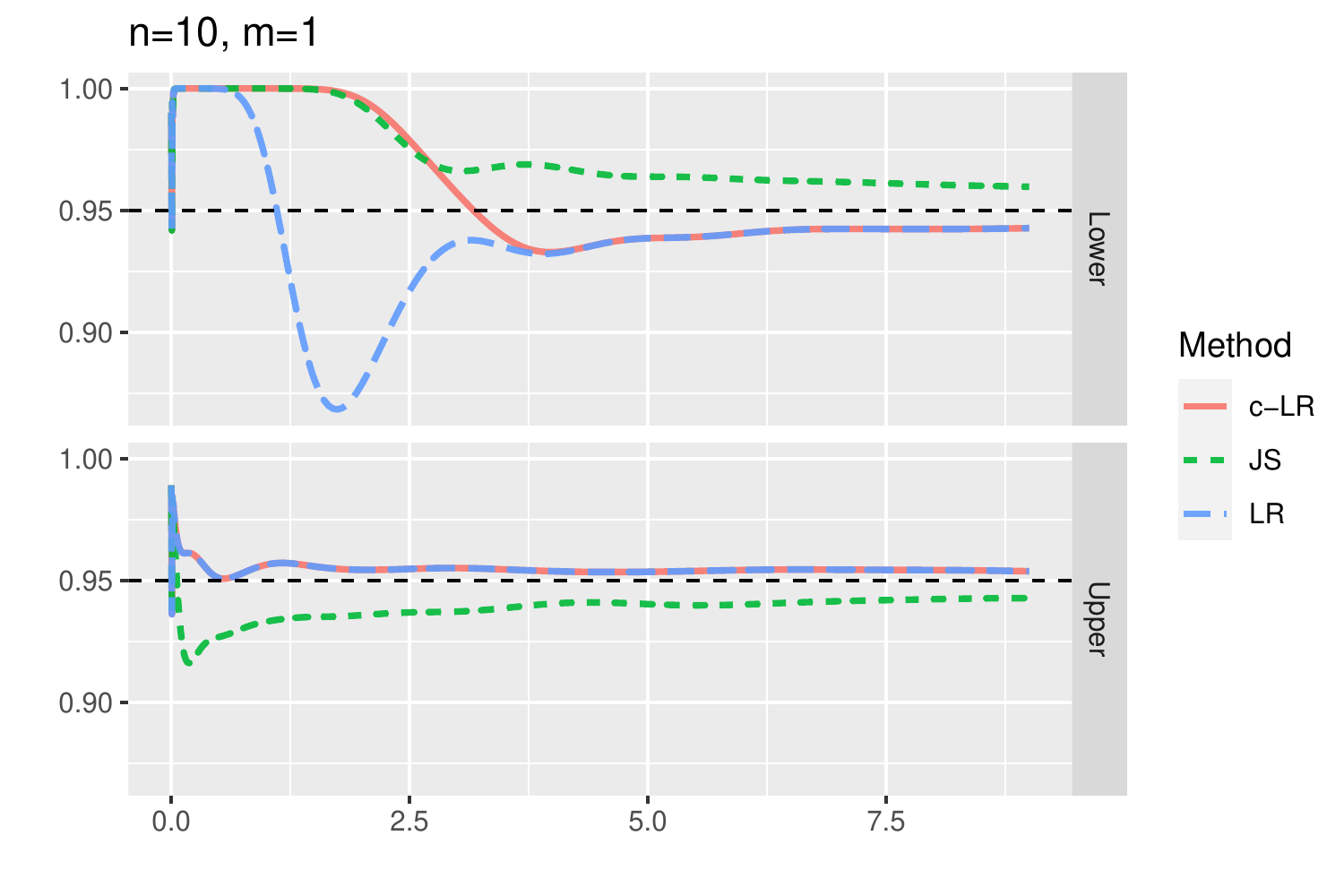}
	\end{subfigure}\hspace*{\fill}
	\begin{subfigure}{0.55\textwidth}
		\includegraphics[width=\linewidth]{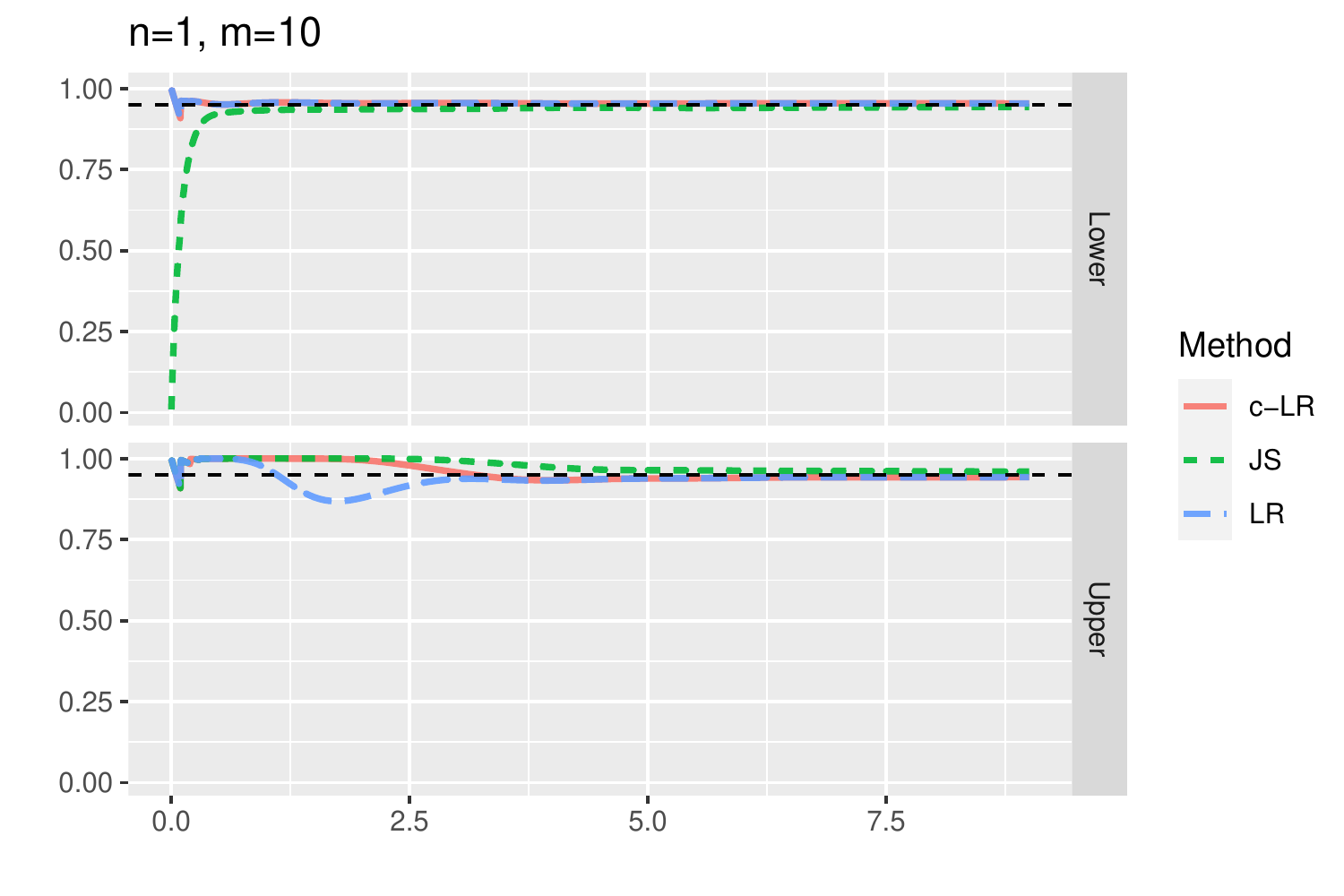}
	\end{subfigure}
	
	\caption{Coverage probabilities of 95\% Poisson lower and upper prediction bounds using the corrected LR prediction method (c-LR),
		joint sample prediction method (JS), and LR prediction method (LR) as a function of $\lambda$.}
	\label{fig:poisson}
\end{figure}

\subsection{Predicting the Number of Future Events}

\label{subsec:within-sample-prediction}

Suppose $n$ units start service at time $t=0$ and that the lifetime of each unit has a continuous parametric distribution with cdf $F(t;\btheta)$ and density $f(t;\btheta)$.
At a data freeze date, the unfailed units have accrued $t_c$ time units of service (e.g., hours or months in service) while $r_n$ failures have occurred and the failure
times (all less than $t_c$) are known. A prediction interval
for the number of failures that will occur in the interval $(t_c, t_w]$ $(t_w>t_c,)$ is required.
This problem is called within sample prediction because the predictand
and the observed Type-I censored data are from the same sample.
The within-sample prediction and related problems have been studied in \citet{elawqm1999} using a calibration method.
Similar problems have been studied in \citet{nelson2000} and \citet{nordman2002weibull} based on an LR statistic without calibration.
\citet{tian2020pred} showed that the simple plug-in method (where ML estimates replace the unknown parameters in the distribution of the predictand and the $\alpha/2$ and $1-\alpha/2$ quantiles of the resulting distribution define an approximate $1-\alpha$ prediction interval procedure) is not asymptotically correct and proposed three alternative methods, based on parametric bootstrap samples, that are asymptotically correct.
In this paper, we propose another solution based on an LR statistic,
that does not require bootstrap samples.

Suppose that a random sample $T_1, \dots, T_n\sim F(t;\btheta)$ is observed under
Type-I censoring with $r_n=\sum_{i=1}^{n}\text{I}(T_i\leq t_c)$ censored
units (failures).
The predictand is the number $Y=\sum_{i=1}^{n}\text{I}(t_c\leq
T_i\leq t_w)$ of events occurring in the future interval $(t_c,t_w]$.
For the $n-r_n$ units surviving at $t_c$,
the conditional probability of each unit to fail in $(t_c, t_w]$, given that the unit survived to $t_c$, is given by
\begin{equation}\label{eq:conditional-prob}
	p\equiv \Pr\left(t_c<T_1\leq t_w|T_1>t_c\right).
\end{equation}

\subsubsection{Implementing the LR Prediction Method}
To implement the LR prediction method, we specify a reduced model versus~full model comparison in order to construct an LR statistic analogous to (\ref{eq:likelihood-ratio}).
Such models will be formulated in terms of the value (\ref{eq:conditional-prob}) of the conditional probability $p$ for the interval $(t_c,t_w]$, recalling that the predictand $Y$ is the number of failures (out of $n-r_n$ possible) that will occur in this interval.
For the reduced model, we assume that the time-to-failure process is
governed by $F(t;\btheta)$ in the interval $(0, t_w]$ and that the conditional probability (\ref{eq:conditional-prob}) of a failure in $(t_c,t_w]$ is
\[
p=\frac{F(t_w;\btheta)-F(t_c;\btheta)}{1-F(t_c;\btheta)}.
\]
The likelihood function for the reduced model is
\begin{equation}\label{eq:reduced-lik}
	\mathcal{L}_1(\btheta;\boldsymbol{t}_n, y)=\binom{n-r_n}{y}\prod_{i=1}^{r}f(t_{(i)};\btheta)
	\left[F(t_w;\btheta)-F(t_c;\btheta)\right]^{y}\left[1-F(t_w;\btheta)\right]^{n-y-r_n}.
\end{equation}
For the full model, $F(t;\btheta)$ will still be the time-to-failure distribution in the interval $(0, t_c]$ but not $(t_c, t_w]$, so that the value (\ref{eq:conditional-prob}) of the conditional probability $p\in(0,1]$ becomes one additional parameter.
The likelihood function for the full model is
\begin{equation}\label{eq:full-lik}
	\mathcal{L}_2(\btheta, p;\boldsymbol{t}_n, y)=\binom{n-r_n}{y}\prod_{i=1}^{r}f(t_{(i)};\btheta)
	p^y(1-p)^{n-y-r_n}.
\end{equation}
By maximizing the likelihood functions in (\ref{eq:reduced-lik}) and
(\ref{eq:full-lik}), the LR statistic is
\[ \Lambda_n(\boldsymbol{t}_n, y)=\frac{\sup_{\btheta}L_1(\btheta;\boldsymbol{t}_n, y)}{\sup_{\btheta, p}L_2(\btheta,p;\boldsymbol{t}_n, y)}. \]
The asymptotic (as $n\to\infty$) distribution of $-2\log\Lambda_n(\boldsymbol{T}_n, Y)$ is $\chi_1^{2}$,
because the full model has one more parameter than
the reduced model and standard regularity conditions hold (see also Section~\ref{sec-theories-discrete}). An approximate $1-\alpha$ prediction region is defined as
\begin{equation}\label{eq:within-sample-prediction-set}
	\{y:-2\log\Lambda_n(\boldsymbol{t}_n, y)\leq\chi_{1,1-\alpha}^2\},
\end{equation}
where $\chi_{1,1-\alpha}^2$ is the $1-\alpha$ quantile of the $\chi_1^2$ distribution.
Because $\Lambda_n(\boldsymbol{t}_n, y)$ is a unimodal function of $y$,
the prediction region in (\ref{eq:within-sample-prediction-set}) provides the desired approximate prediction interval.

\subsubsection{A Simulation Study}
A simulation study was done to examine the coverage probability of the
LR prediction method for the within-sample prediction problem.
We simulated Type-I censored data with censoring time $t_c$ using the Weibull distribution
\[
F(t;\beta,\eta)=1-\exp\left[-\left(\frac{t}{\eta}\right)^\beta\right],\quad t>0.
\]
Then we constructed
prediction intervals for the number of failures in the future time interval $(t_c, t_w]$ using
several methods: plug-in, LR, direct-bootstrap, GPQ-bootstrap, and calibration-bootstrap methods.
As mentioned earlier, the plug-in method, which replaces the unknown parameter $\btheta=(\beta,\eta)$ with a consistent
estimate $\htheta$, fails to provide asymptotically correct prediction intervals (cf.~\citealt{tian2020pred}).
The last three methods are from \citet{tian2020pred} and have been established to be asymptotically correct.
\begin{figure}[t!]
	\centering
	\includegraphics[width=\textwidth]{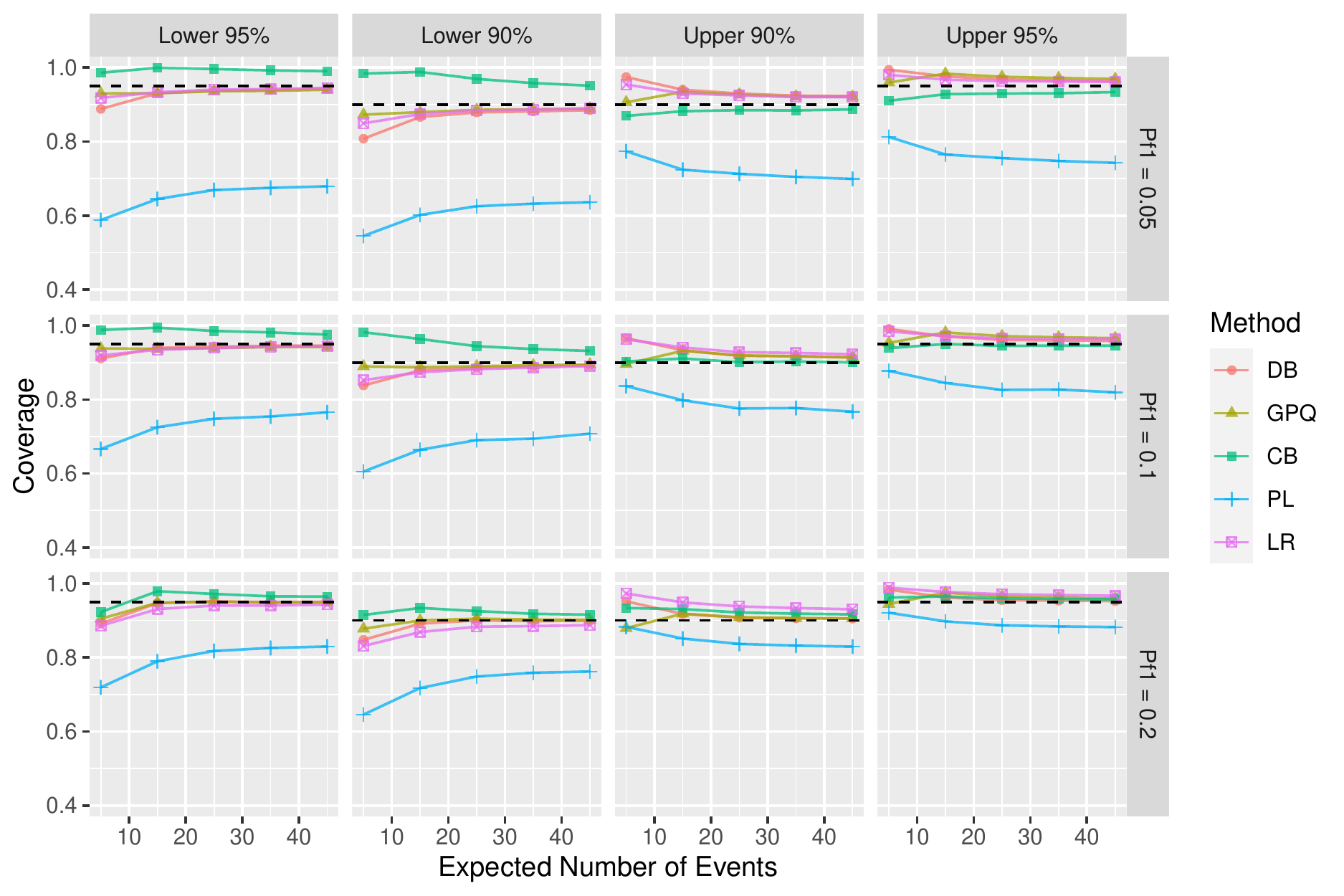}
	\caption{Coverage probabilities versus expected number of events (failures) for the direct-bootstrap (DB), GPQ-bootstrap (GPQ), calibration-bootstrap (CB), LR, and plug-in (PL) methods when $d=0.1$ and $\beta=2$.}
	\label{fig:within-sample-pred}
\end{figure}
The factors for this simulation study include
\begin{enumerate}
	\item The probability that a unit fails before the censoring time $t_c$:
	$p_{f1}=F(t_c;\beta,\eta)$.
	\item The expected number of failures at the censoring time $t_c$: $\text{E}(r)=np_{f1}$.
	\item The probability of a unit fails in the future time interval $(t_c, t_w]$: $d\equiv p_{f2}-p_{f1}$,
	where $p_{f2}=F(t_w;\beta,\eta)$.
	\item The Weibull shape parameter: $\beta$.
\end{enumerate}

We set the Weibull scale parameter as $\eta=1$ and, for other factors, we use the following
factor levels: (i) $p_{f1}=0.05, 0.1, 0.2$; (ii) $\text{E(r)}=5, 15, 25, 35, 45$; (iii) $d=0.1, 0.2$; (iv) $\beta=0.8, 1, 2, 4$. For the methods which involve bootstrap simulation, the bootstrap sample size is $B=5000$. The unconditional coverage probability is computed by averaging $N=5000$ conditional coverage probabilities
(i.e., the Monte Carlo sample size is $N=5000$).

Figure~\ref{fig:within-sample-pred} compares the coverage probabilities for the plug-in, direct-bootstrap, GPQ-bootstrap, calibration-bootstrap, and LR prediction methods when $d=0.1$ and $\beta=2$.
We can see that the LR, direct-bootstrap, and GPQ-bootstrap prediction method have similar coverage probabilities for within-sample prediction, where the latter two methods rely on bootstrap and the LR interval does not.
That is, the LR prediction method based on chi-square calibration has the advantage of being computationally easier than the direct-bootstrap or GPQ-bootstrap methods for this prediction problem, while providing comparable performance.
This pattern is consistent in the simulation results of other factor combinations (given in the online supplementary material).
While we have considered the LR prediction method for within-sample prediction for illustration and comparison, the LR prediction method is again general and not specific to within-sample prediction.

\subsection{Validating the Asymptotic Distribution}
\label{sec-theories-discrete}
In Sections~\ref{subsec:binomial-dist}--\ref{subsec:within-sample-prediction}, we construct the prediction intervals for certain discrete predictands $Y$ using the fact that the log-LR statistic has a chi-square limit with 1 degree of freedom in these prediction problems.
This section provides justification for these asymptotic results.

The prediction problems in Sections~\ref{subsec:binomial-dist} and \ref{subsec:poisson-dist} are similar in that the predictand $Y$ (as a $\text{Binom}(m,p)$ or $\text{Pois}(m\lambda)$ random variable) can be seen to have the same distribution as a sum of iid variables in both cases (i.e., $m$ iid $\text{Bern}(p)$ or $\text{Pois}(\lambda)$ random variables).
As a consequence, the log-LR statistic from Section~6.1, constructed on the basis of using $X\sim\mbox{Binom}(n,p)$ to predict
$Y\sim\mbox{Binom}(m,p)$, is the same as the log-LR statistic given in Theorem~3 based on the $X_1,\ldots,X_n$
and $Y_1,\ldots,Y_m$ being iid $\mbox{Binom}(1,p)$.
A similar statement holds for the Poisson prediction problem from Section~6.2.
Hence, the chi-square limit for the log-LR statistic in Sections~\ref{subsec:binomial-dist} and \ref{subsec:poisson-dist} follows from Theorem~\ref{theorem-discrete} below.
We provide Theorem~\ref{theorem-discrete} as a general result with standard regularity conditions given in the supplementary material.
For the prediction problem in Section~\ref{subsec:within-sample-prediction}, the proof is similar to that of Theorem~\ref{theorem-discrete}.
See Section A.3 of the online supplementary material for details.
\begin{theorem}\label{theorem-discrete}
	Suppose $X_1,\dots,X_n$ are iid random variables with common density $f(\cdot;\theta_1)$ and, independently, $Y_1,\dots,Y_m$ are iid random variables with a common density $f(\cdot;\theta_2)$, where $\theta_1,\theta_2\in \Theta$ denote real-valued parameters.
	Suppose further that mild regularity conditions hold (as described in Section~A.2 of the supplement).
	Then, if $\theta_1=\theta_2$, the log-LR statistic for testing $\theta_1=\theta_2$ has a limiting chi-square distribution with 1 degree of freedom as $n,m\to\infty$; that is,
	\[
	-2\log\left\{\frac{\sup_{\theta}\left[\prod_{i=1}^{n}f(x_i;\theta)\prod_{j=1}^{m}f(y_i;\theta)\right]}{\left[\sup_{\theta_1}\prod_{i=1}^{n}f(x_i;\theta_1)\right]\left[\sup_{\theta_2}\prod_{i=1}^{m}f(y_i;\theta_2)\right]}\right\}\stackrel{d}{\rightarrow} \chi_1^2.
	\]
\end{theorem}

\section{Comparison with the Predictive Likelihood Methods}\label{sec-relationship-pred-dist}

The predictive likelihood method, introduced in Section~\ref{subsec:literature}, is an important prediction method. While having similar-sounding names, the LR prediction method for prediction is different than the predictive likelihood method. The LR prediction method
may be classified as a type of test-based method (cf.~Section~\ref{subsec:literature}) for prediction intervals
which also share connections to approximate pivotal quantities (though technically, the LR statistic may not always be pivotal, even asymptotically, as shown in Section~\ref{asymptotic-results}, although its limiting distribution may then be estimated by bootstrap).
This section describes two specific types of predictive likelihood methods.
However, these predictive likelihood methods can fail to provide desirable prediction intervals in some prediction problems, where the LR prediction method emerges as having better properties.

\subsection{Profile Predictive Likelihood Method}
The profile predictive likelihood $\widetilde{\mathcal{L}}_p(\boldsymbol{x}_n,y)$ function for $y$ given data values $\boldsymbol{X}_n=\boldsymbol{x}_n$ is obtained by maximizing out the parameters in the joint likelihood function,
\[
\widetilde{\mathcal{L}}_p(\boldsymbol{x}_n,y)\equiv\sup_{\btheta}f(y;\btheta)\prod_{i=1}^{n}f(x_i;\btheta).
\]
Then, the predictive likelihood is normalized to give a predictive density function for $Y$,
\[
f_p(y;\boldsymbol{x}_n)=\frac{\widetilde{\mathcal{L}}_p(\boldsymbol{x}_n,y)}{\int_{-\infty}^{\infty}\widetilde{\mathcal{L}}_p(\boldsymbol{x}_n,y)dy},
\]
which is viewed as univariate distribution depending on $\boldsymbol{X}_n=\boldsymbol{x}_n$ for calibrating prediction intervals for $Y$.
Note that $\widetilde{\mathcal{L}}(\boldsymbol{x}_n,y)$ is the numerator of the LR statistic in (\ref{eq:likelihood-ratio}) so that the process of obtaining the profile
predictive likelihood may be viewed as a step in constructing LR-based prediction intervals.
However, in some prediction problems, discussed next, the profile predictive likelihood does not lead to an exact prediction interval for the predictand $Y$ when the LR prediction method does.

To illustrate this, consider a sample $\boldsymbol{X}_n$ from a normal distribution, and consider constructing prediction intervals for a future random variable $Y$ from the same distribution. From \citet{lejeune1982simple}, the profile predictive likelihood for $Y$ given data $\boldsymbol{X}_n=\boldsymbol{x}_n$ (i.e., the distribution to be used for predicting $Y$, as implied by the profile predictive likelihood density) is given by the distribution of
\begin{equation*}
	\bar{x}_n+s\sqrt{\frac{n^2-1}{n^2}}T,
\end{equation*}
where $\bar{x}_n$ is the sample mean, $s^2$ is the sample variance, and $T$ is an independent random variable having a $t$-distribution with $n$ degrees of freedom. However, in order
for the profile predictive likelihood method to produce an exact prediction interval for $Y$, the degrees of freedom for the $t$-distribution of $T$ above should be $n-1$ instead of $n$ (see (\ref{eq:normal-invert-t-test})).
Consequently, the profile predictive likelihood method is not exact in this example.
The LR prediction method, however, has exact coverage for this prediction problem, as shown in Section~\ref{sec:pivotal-quantity}.

\subsection{Approximate Predictive Likelihood Method}

\citet{davison1986} proposed an approximate predictive likelihood method that involves maximizing likelihood functions.
Let $\widehat{\boldsymbol{\theta}}$ be the maximizer of $\mathcal{L}(\btheta;\boldsymbol{x}_n)$, which is
the likelihood function for data $\boldsymbol{x}_n$ alone and $\widehat{\boldsymbol{\theta}}_y$ be the maximizer of the joint likelihood
function for $\boldsymbol{X}_n$ and $Y$, say $\mathcal{L}(\btheta;\boldsymbol{x}_n,y)$.
Then the approximate predictive likelihood is defined as
\begin{equation*}
	\widetilde{\mathcal{L}}(\boldsymbol{x}_n,y)=\frac{\mathcal{L}(\widehat{\boldsymbol{\theta}}_y;\boldsymbol{x}_n,y)|J_1(\widehat{\boldsymbol{\theta}})|^{1/2}}{\mathcal{L}(\widehat{\boldsymbol{\theta}};\boldsymbol{x}_n)|J_2(\widehat{\boldsymbol{\theta}}_y)|^{1/2}},
\end{equation*}
where $J_1(\btheta)$ is the minus Hessian of $\log\mathcal{L}(\btheta;\boldsymbol{x}_n)$, $J_2(\btheta)$ is the minus
Hessian of $\log\mathcal{L}(\btheta;\boldsymbol{x}_n,y)$,
and $|\cdot|$ is the determinant.

Suppose that $\boldsymbol{X}_n$ and $Y$ are mutually independent with a
common exponential distribution. From \citet{davison1986}, the approximate
predictive likelihood for $Y$ is
$$
\widetilde{\mathcal{L}}(\boldsymbol{x}_n,y)\propto\left(\sum_{i=1}^{n}x_i\right)^{n-1}\left(\sum_{i=1}^{n}x_i+y\right)^{-n}.
$$
Then, prediction intervals for $Y$ are computed from density on $y\in(0,\infty)$, which is obtained by normalizing $\widetilde{\mathcal{L}}(\boldsymbol{x}_n,y)$ with respect to $y$.
Moreover, as noted by \citet{hall1999}, the approximate predictive likelihood method is not exact here and has a coverage probability error of order $O(1/n)$.
For the LR prediction method, however, the LR statistic (\ref{eq:likelihood-ratio}) is
$$
\Lambda_n(\boldsymbol{x}_n,y)=\left(\frac{n\bar{x}_n+y}{\bar{x}_n}\right)^n\frac{n\bar{x}_n+y}{y}
$$
which, in this case, is a function of a pivotal quantity $Y/\bar{X}_n$.
This implies that the LR prediction method, based on bootstrap calibration, for example, has exact coverage probability, according to Theorem~\ref{theorem-location-scale-family-11} (see also Section~\ref{sec:pivotal-quantity}).

\section{Concluding Remarks}\label{sec:conclusion}

In this paper, we propose a general prediction procedure based on inverting an LR test.
The construction of the LR test requires enlarging the parameter space to create a quasi ``full model.''
To compute prediction intervals, we need to find the distribution of the LR statistic.
Apart from finding the distribution of the LR statistic analytically when possible, we may use chi-square distribution to calibrate its distribution when Wilks' theorem is applicable; we have demonstrated this for predictions involving discrete random variables.
Furthermore, we can use a parametric bootstrap as a general approach to approximate the distribution of the LR statistic, particularly in those cases where Wilks' theorem does not apply.
The proposed method will generally discover a pivotal quantity if one exists.
In such cases, the procedure will have exact coverage probability.
When a pivotal quantity is not available, we have shown that the LR method is asymptotically correct.
When the LR statistic is unimodal (as a function of $y$), then the proposed prediction region will correspond to an interval.
Relatedly, when the LR statistic is again unimodal, we provide an approach in Section~\ref{subsec:one-sided} to compute one-sided bounds in a computationally efficient manner (which is related to, but simpler than, working directly from the two-sided intervals in Section~\ref{subsec:lrt} in determining the endpoint for a one-sided bound).
While not encountered in any work for this paper, when the LRS is not unimodal, the prediction regions in Section~\ref{subsec:lrt} are still valid but these regions may be a union of several disconnected intervals and the algorithm of Section~\ref{subsec:one-sided} for finding one-sided bounds will not be applicable; one-sided bounds then need to be determined from the prediction regions of Section~\ref{subsec:lrt}.

We see several potential future research topics and list three below: (a) we only consider scalar random variables for prediction in this paper, but the proposed LR prediction framework could be extended to construct 2-d (or even higher dimensional) prediction regions using the same method as in (\ref{eq:bootstrap-find-lik-ratio}).
The main change is that $\boldsymbol{Y}$ in the joint likelihood function $\mathcal{L}(\boldsymbol{X}_n, \boldsymbol{Y})$ becomes a random vector.
(b) The proposed prediction framework could be applied to problems involving complicated data with regressors.
Examples include data with different types of censoring, mixed linear models, and generalized linear model structures.
(c) The LR prediction method could also be extended to dependent data.
We discuss an example involving dependence in Section~\ref{subsec:within-sample-prediction}.
But in future research, we might apply the LR prediction method to problems with non-trivial dependence structure such as time series or Markov Random Fields.

\section*{Acknowledgments}

We want to thank the anonymous reviewers and the editor, Galit Shmueli, who provided comments and suggestions that improved our paper.
Research was partially supported by NSF DMS-2015390.

\bibliographystyle{apalike}
\bibliography{reference}

\end{document}


\baselineskip24pt


\maketitle

\begin{sciabstract}

\ref{sec:proof-of-theorems} provides proof for the theorems in the main paper and \ref{sec:simulation-results} gives additional simulation results.

\end{sciabstract}

\section{Proof of Theoretical Results}
\label{sec:proof-of-theorems}

\subsection{Proof of Theorem~\ref{theorem-location-scale-family-11}}
For completeness, we first restate Theorem~\ref{theorem-location-scale-family-11} and then provide its proof.
\begin{theorem}\label{theorem-location-scale-family-11}
	(i) Suppose the LR-statistic (3) is a pivotal quantity.
	Then, the corresponding $1-\alpha$ prediction region (4) for $Y$ based on the parametric bootstrap will have exact coverage.
	That is,
	\[
	\Pr\left[Y\in\mathcal{P}_{1-\alpha}(\boldsymbol{X}_n)\right]=1-\alpha.
	\]
	(ii) Suppose also that both the data $X_1,\dots,X_n$ and $Y$ are from a location-scale distribution with density $f(\cdot;\mu,\sigma)=\phi\left[(x-\mu)/\sigma\right]$ with parameters $\btheta=(\mu,\sigma)\in\mathbb{R}\times(0,\infty)$.
	In the LR construction (3), suppose the full model involves parameters $\btheta=(\mu,\sigma)$ and $\btheta_{y}=(\mu_y,\sigma)$ (i.e., $X_1,\dots,X_n\sim f(\cdot;\mu,\sigma)$ and $Y\sim f(\cdot;\mu_y,\sigma)$).
	Then the LR statistic $\Lambda_{n}(\boldsymbol{X}_n,Y)$ (or $-2\log\Lambda_{n}(\boldsymbol{X}_n,Y)$) is a pivotal quantity and the result of Theorem~\ref{theorem-location-scale-family-11}(i) holds.
\end{theorem}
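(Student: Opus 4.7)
The plan has two pieces, one for each claim.

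\emph{Part (i).} I would unpack the definition of the parametric bootstrap region in (4): it consists of those $y$ for which $-2\log\Lambda_n(\boldsymbol{X}_n,y)$ does not exceed a data-dependent critical value $c_{1-\alpha}(\boldsymbol{X}_n)$, obtained as the $(1-\alpha)$-quantile of $-2\log\Lambda_n$ under the fitted distribution $\htheta$. Pivotality means the sampling distribution of $-2\log\Lambda_n(\boldsymbol{X}_n,Y)$ is the same at every parameter value, so the bootstrap quantile equals the true quantile $c_{1-\alpha}$, a deterministic constant. Plugging this identity in gives $\Pr\{Y\in\mathcal{P}_{1-\alpha}(\boldsymbol{X}_n)\} = \Pr\{-2\log\Lambda_n\leq c_{1-\alpha}\} = 1-\alpha$, where continuity of the underlying density rules out quantile ambiguity.

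\emph{Part (ii).} The plan is to reduce the LR statistic to a function of standardized residuals via equivariance of the MLE. Writing $X_i = \mu + \sigma Z_i$ and $Y = \mu + \sigma Z_y$ with $Z_1,\ldots,Z_n,Z_y$ iid from $\phi$, the key identity for the log-likelihood of the full model is
\[
\ell(\mu + \sigma a,\, \mu + \sigma b,\, \sigma c;\, \boldsymbol{X}_n, Y) = \ell(a,\, b,\, c;\, \boldsymbol{Z}_n, Z_y) - (n+1)\log\sigma,
\]
with the analogous relation (dropping the second mean parameter) for the restricted model where $\mu_y = \mu$ is enforced. This immediately yields the equivariance relations $\widehat{\mu} = \mu + \sigma\widetilde{\mu}$, $\widehat{\mu}_y = \mu + \sigma\widetilde{\mu}_y$, $\widehat{\sigma} = \sigma\widetilde{\sigma}$ in the full model, and analogously in the restricted model, with tildes denoting MLEs computed from the standardized sample $(\boldsymbol{Z}_n,Z_y)$. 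Substituting back into $\Lambda_n$, the offset $\mu$ cancels in every standardized argument $(X_i - \widehat{\mu})/\widehat{\sigma} = (Z_i - \widetilde{\mu})/\widetilde{\sigma}$, the $\sigma^{-(n+1)}$ Jacobian factors cancel between numerator and denominator, and $\Lambda_n$ collapses to a function of $(\boldsymbol{Z}_n,Z_y)$ alone. Since the joint law of $(\boldsymbol{Z}_n,Z_y)$ is parameter-free, $\Lambda_n$ is pivotal, and part (i) yields exact coverage.

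The main obstacle is verifying the equivariance carefully for the full model, where $\sigma$ is shared across the two sub-likelihoods and the three MLEs are coupled. The displayed identity makes this step essentially a reparameterization argument, so no additional conditions on $\phi$ are needed beyond it being a valid density. A secondary concern is ensuring the MLEs are unique maximizers, which for the standard location-scale families of interest holds automatically.
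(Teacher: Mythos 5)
Your proposal is correct, and both parts follow the same strategy as the paper: part (i) is the identical quantile-matching argument from pivotality, and part (ii) rests on location-scale equivariance of the ML estimators forcing the LR statistic to be a function of the standardized sample alone. The one substantive difference is in how the full model is handled. The paper first profiles out the predictand's location parameter, observing that the single observation $Y$ carries its own free mean so that $\widehat{\mu}_2 = Y - \widehat{\sigma}_1 y_0$ (with $y_0$ the mode of $\phi$) makes the factor $\phi[(Y-\widehat{\mu}_2)/\widehat{\sigma}_1]$ a constant $C=\phi(y_0)$; this reduces the full-model fit of $(\widehat{\mu}_1,\widehat{\sigma}_1)$ to a likelihood in $\boldsymbol{X}_n$ only, after which equivariance is verified for that two-parameter problem and the reduced-model pivots are quoted from Lawless. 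You instead prove joint equivariance of all three coupled full-model MLEs (and of the reduced-model MLEs) directly from the reparameterization identity for the log-likelihood, which is self-contained, avoids any appeal to the mode of $\phi$, and makes the cancellation of $\mu$ and $\sigma$ in every standardized argument explicit. Your caveat about uniqueness of the maximizers is the right one to flag; the paper makes the same implicit assumption.
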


\begin{proof}
\noindent To establish~Theorem~\ref{theorem-location-scale-family-11}(i), the feature of the LR-statistic $\Lambda_n(\boldsymbol{X}_n,Y)$ from (3) being a pivotal quantity implies that its bootstrap counterpart $\Lambda_n(\boldsymbol{X}^\ast_n, Y^\ast)$ has the same pivotal distribution, or that
\[
\Lambda_n(\boldsymbol{X}_n,Y)\stackrel{d}{=}\Lambda_n(\boldsymbol{X}^\ast_n, Y^\ast)
\]
for any observed sample $\boldsymbol{X}_n=\boldsymbol{x}_n$.
Consequently, the $1-\alpha$ quantile $\lambda_{1-\alpha}^\ast$ of $-2\log\Lambda_n(\boldsymbol{X}_n^\ast,Y^\ast)$ is equal to that $\lambda_{1-\alpha}$ of $-2\log\Lambda_n(\boldsymbol{X}_n,Y)$,
that is, $\lambda_{1-\alpha}=\lambda_{1-\alpha}^\ast$.
The coverage probability of the bootstrap prediction region in (4) then follows as
\[
\begin{split}
\Pr\left[Y\in\mathcal{P}_{1-\alpha}(\boldsymbol{X}_n)\right]&=\Pr\left[-2\log\Lambda_n(\boldsymbol{X}_n, Y)\leq\lambda_{1-\alpha}^\ast\right]\\
&=\Pr\left[-2\log\Lambda_n(\boldsymbol{X}_n, Y)\leq\lambda_{1-\alpha}\right]\\&=1-\alpha.
\end{split}
\]

\noindent We next establish~Theorem~\ref{theorem-location-scale-family-11}(ii).  The full model is that $\boldsymbol{X}_n\sim\phi[(x-\mu_1)/\sigma_1]$ and $Y\sim\phi[(y-\mu_2)/\sigma_1]$ and the reduced model is
$\boldsymbol{X}_n\sim\phi[(x-\mu)/\sigma]$ and $Y\sim\phi[(y-\mu)/\sigma]$, where $\phi(\cdot)$ is a known pdf.
Under this formulation, the LR-statistic from (3) at values $(\boldsymbol{X}_n,Y)=(\boldsymbol{x}_n,y)$ is given by
\begin{equation*} \Lambda_n(\boldsymbol{x}_n,y)=\frac{\widehat{\sigma}^{-n-1}\prod_{i=1}^{n}\phi(\frac{x_i-\widehat{\mu}}{\widehat{\sigma}})\phi(\frac{y-\widehat{\mu}}{\widehat{\sigma}})}{\widehat{\sigma}_1^{-n-1}\prod_{i=1}^{n}\phi(\frac{x_i-\widehat{\mu}_1}{\widehat{\sigma}_1})\phi(\frac{y-\widehat{\mu}_2}{\widehat{\sigma}_1})},
\end{equation*}
where $(\widehat{\mu},\widehat{\sigma})$ is the ML estimator of $(\mu,\sigma)$ under reduced model.  and $(\widehat{\mu}_1,\widehat{\mu}_2,\widehat\sigma_1)$ is the ML estimator
of $(\mu_1,\mu_2,\sigma_1)$ under full model.
Denoting the mode of $\phi(\cdot)$ as $y_0$ and by letting $\widehat\mu_2=y-\widehat\sigma_1y_0$ for clarity, the value of $\phi[(y-\widehat{\mu}_2)/\widehat{\sigma}_1]$ may be seen as a constant
$C\equiv\phi(y_0)>0$. Then, the likelihood ratio can be re-written as
\begin{equation}\label{eq:exact-ls-dist}
	\Lambda_n(\boldsymbol{x}_n,y)=\frac{1}{C}\left(\frac{\widehat{\sigma_1}/\widehat{\sigma}}{\widehat{\sigma}/\sigma}\right)^{n+1}\frac{\prod_{i=1}^{n}\phi\left(\frac{x_i-\mu}{\sigma}\frac{\sigma}{\widehat{\sigma}}+\frac{\mu-\widehat{\mu}}{\sigma}\frac{\sigma}{\widehat{\sigma}}\right)\phi\left(\frac{y-\mu}{\sigma}\frac{\sigma}{\widehat{\sigma}}+\frac{\mu-\widehat{\mu}}{\sigma}\frac{\sigma}{\widehat{\sigma}}\right)}{\prod_{i=1}^{n}\phi\left(\frac{x_i-\mu}{\sigma}\frac{\sigma}{\widehat{\sigma}_1}+\frac{\mu-\widehat{\mu}_1}{\sigma}\frac{\sigma}{\widehat{\sigma}_1}\right)}.
\end{equation}
In (\ref{eq:exact-ls-dist}), the distribution of quantities $\widehat{\sigma}/\sigma$, $(\mu-\widehat{\mu})/\sigma$, $(Y-\mu)/\sigma$, and $(X_i-\mu)/\sigma$, $i=1,\ldots,n$, do not depend
on any parameters (cf.~\citet[Appendix E]{lawless_statistical_2002}). Hence,
if $\widehat{\sigma}_1/\sigma$  and $(\mu-\widehat{\mu}_1)/\sigma$ are likewise shown to have distributions not depending on any parameters (i.e., are
pivots), then the LR-statistic will be a pivot.  The parametric
bootstrap  will also then yield the exact distribution of $\Lambda_n(\boldsymbol{x}_n,y)$, so that the bootstrap prediction method is exact.

To prove that $\widehat{\sigma}_1/\sigma$  and $(\mu-\widehat{\mu}_1)/\sigma$ are indeed pivots,
we observe from (\ref{eq:exact-ls-dist}) that
$( \widehat{\mu}_1, \widehat{\sigma}_1)$ is  the maximizer of
\begin{equation}
\label{eqn:shift}
	\mathcal{L}(\mu_1,\sigma_1;\boldsymbol{x}_n)=\frac{1}{\sigma_1^{n+1}}\prod_{i=1}^{n}\phi\left(\frac{x_i-\mu_1}{\sigma_1}\right).
\end{equation}
If data $\boldsymbol{x}_n$ and parameter values $(\mu_1,\sigma_1)$ in (\ref{eqn:shift}) are scaled by a given positive $d>0$ and shifted by a given $c\in\mathbb{R}$, we may denote the resulting values as $\boldsymbol{x}_n^\prime=d\boldsymbol{x}_n+c$, $\mu^\prime_1=d\mu_1+c$, and $\sigma_1^\prime=d\sigma_1$, and we note that the corresponding objective function (\ref{eqn:shift}) would then become
\begin{equation*}
	\mathcal{L}(\mu_1^\prime,\sigma_1^\prime;\boldsymbol{x}_n^\prime)=(\sigma_1^\prime)^{-n-1}\prod_{i=1}^{n}\phi\left(\frac{x_i^\prime-\mu_1^\prime}{\sigma_1^\prime}\right)=(d\sigma_1)^{-n-1}\prod_{i=1}^{n}\phi\left(\frac{x_i-\mu_1}{\sigma_1}\right),
\end{equation*}
which would have a maximizer as $(\widehat{\mu}_1^\prime, \widehat{\sigma}_1^\prime)=(d\widehat{\mu}_1+c,d\widehat{\sigma}_1)$. This result implies that
$(\widehat{\mu}_1,\widehat{\sigma}_1)$ is an equivariant estimator.
Because $(\widehat{\mu}_1,\widehat{\sigma}_1)$ is equivariant and
$(X_i-\mu)/\sigma,i=1,\dots,n$ are pivots, the two quantities
\begin{equation*}
\begin{split}
	\widehat{\mu}_1\left(\frac{X_1-\mu}{\sigma},\dots,\frac{X_n-\mu}{\sigma}\right)=\frac{1}{\sigma}\left[\widehat{\mu}_1(X_1,\dots,X_n)-\mu\right],\\
	\widehat{\sigma}_1\left(\frac{X_1-\mu}{\sigma},\dots,\frac{X_n-\mu}{\sigma}\right)=\frac{\widehat{\sigma}_1(X_1,\dots,X_n)}{\sigma}
\end{split}
\end{equation*}
do not have any unknown parameters; thus, $\widehat{\sigma}_1/\sigma$ and $(\mu-\widehat{\mu}_1)/\sigma$ are pivotal quantities.
\end{proof}

\subsection{Proof of Theorem~\ref{theorem-1}}

For simplicity and clarity in presentation, we first state and prove a version of Theorem~\ref{theorem-1} in the case of single-parameter distributions.  The assumptions and regularity
conditions  described in the theorem statement are mild and will be discussed further after the theorem statement.  After establishing
this version of   Theorem~\ref{theorem-1},
we then discuss the extension to the case of multiple-parameters.

\begin{theorem} (Scalar parameter case.) \label{theorem-1}
Assume iid data $X_1,\dots,X_n$ and independent predictand $Y$ have common pdf $f(\cdot;\theta)$ depending on real-valued $\theta\in\Theta$. Let $\theta_0$ denote  the true parameter value.\\
(1)  Additionally, suppose the following conditions (a)-(e)  hold:
\begin{enumerate}[label=(\alph*)]
	\item  The ML estimator  $\tilde\theta_n$ based on $(X_1,\dots,X_n)$ satisfies $\tilde\theta_n\xrightarrow{p}\theta_0$.
	\item The log-density $\log f(x;\theta)$ is twice continuously differentiable in a neighborhood $O\subset\Theta$ of $\theta_0$.
	\item The first and second derivatives of $\log f(X_1;\theta)$ at $\theta_0$ have moments as
	\begin{equation*}
		\begin{split}
			&\text{E}_{\theta_0}\left[\frac{d\log f(X_1;\theta)}{d\theta}\bigg|_{\theta=\theta_0}\right]=0,\\
			&\text{I}(\theta_0)\equiv\text{E}_{\theta_0}\left[\frac{d\log f(X_1;\theta)}{d\theta}\bigg|_{\theta=\theta_0}\right]^2=-\text{E}_{\theta_0}\left[\frac{d^2\log f(X_1;\theta)}{d\theta^2}\bigg|_{\theta=\theta_0}\right]\in(0,\infty).
		\end{split}
	\end{equation*}
	\item Moments of second derivatives of $\log f(X_1;\theta)$ satisfy a continuity condition at $\theta_0$:
	\begin{equation*}
		 \text{E}_{\theta_0}\left[\sup_{|\theta-\theta_0|<\delta}\left|\frac{d^2\log f(X_1;\theta_0)}{d\theta^2} -\frac{d^2\log f(X_1;\theta)}{d\theta^2} \right|\right]\to0\text{ as }\delta\to0.
	\end{equation*}
	\item $\sup_{\theta\in\Theta}f(X_1;\theta)$ and $f(X_1;\theta_0)/\sup_{\theta\in\Theta}f(X_1;\theta)$ are  positive continuous random variables.
\end{enumerate}
Then, the asymptotic distribution of the LR statistic, as $n\to \infty$, is given by
\[
-2\log\Lambda_n(\boldsymbol{X}_n,Y) \xrightarrow{d}-2\log\left[\frac{f(Y;\theta_0)}{\sup_{\theta\in\Theta}f(Y;\theta)}\right],
\]
(2) In addition to conditions (a)-(e), further assume the following:
\begin{enumerate}[label=(\alph*)]
    \addtocounter{enumi}{5}
\item  Moments of second derivatives of $\log f(X_1;\theta)$ satisfy integrability conditions
		\begin{equation*}
			\begin{split}
				&\sup_{|\theta-\theta_0|<\delta}\text{E}_{\theta}\left[\sup_{|\theta^\dagger-\theta|<\delta}\left|\frac{d^2\log f(X_1;\theta^\dagger)}{d\theta^2}-\frac{d^2\log f(X_1;\theta )}{d\theta^2}\right|\right]\to0\text{ as }\delta\to0,\\
				&\sup_{\theta\in O}\text{E}_{\theta}\left[\left|\frac{d^2\log f(X_1;\theta_0)}{d\theta^2}\right|\text{I}\left(\left|\frac{d^2\log f(X_1;\theta_0)}{d\theta^2}\right|\geq M\right)\right]\to0\text{ as }M\to\infty,
			\end{split}
		\end{equation*}
where above $X_1\sim f(\cdot;\theta)$ under expectation $\text{E}_{\theta}$.
		\item   $x$-discontinuities of $f(x;\theta_0)$,  $d  \log f(x;\theta_0)/ d \theta $,  $d^2 \log f(x;\theta_0)/ d \theta^2$,  or of $\sup_{\theta\in\Theta}f(x;\theta)$ have probability 0 under $\Pr_{\theta_0}$.
		\item For any generic sequence $\{\theta_m\}$ of parameter values and associated random variables $Z_m\sim f(\cdot;\theta_m)$, if $\theta_m \to \theta_0$ holds, then $Z_m\xrightarrow{d} X_1$ where $X_1\sim f(\cdot;\theta_0)$.
\item
Letting $X_1^*,\ldots,X_n^*,Y^*$ denote iid  bootstrap  observations from $f(\cdot; \tilde\theta_n )$  (where $\tilde\theta_n$ is the ML estimator from    $(X_1,\dots,X_n)$) and letting $\widehat{\theta}^\ast_{n}$ and $\tilde{\theta}_n^\ast$ denote
    ML estimators from $(X_1^*,\ldots,X_n^*,Y^*)$ and $(X_1^*,\ldots,X_n^*)$, respectively, it holds that
    $\rho(|\widehat{\theta}^\ast_{n}-\tilde{\theta}_n|,0)+\rho(|\tilde{\theta}_n^\ast-\tilde{\theta}_n|,0) \stackrel{p}{\rightarrow}0$ as $n\to \infty$,
    where $\rho(\cdot,\cdot)$ denotes any metric for the distance between distributions that can be used to describe weak convergence.

		\end{enumerate}
		Then, the bootstrap method is asymptotically correct for the distribution of the LR-statistic: as $n\to \infty$,
	\[
	\sup_{\lambda\in\mathbb{R}^{+}}\left|\Pr{}_{\!\!\ast}\left[-2 \log \Lambda_n(\boldsymbol{X}_n^*,Y^*)\leq\lambda\right]-\Pr\left[-2 \log \Lambda_n(\boldsymbol{X}_n,Y) \leq\lambda\right]\right|\xrightarrow{p}0.
	\]

\end{theorem}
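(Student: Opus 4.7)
The plan is to show that the bootstrap LR-statistic shares the weak limit of Part~(1) and then upgrade pointwise CDF convergence to uniform convergence via Polya's theorem. First, I would replay the argument underlying Part~(1) inside the bootstrap world: conditional on the data, $X_1^\ast,\dots,X_n^\ast, Y^\ast$ are iid from $f(\cdot;\tilde\theta_n)$, so $\tilde\theta_n$ plays the role of the ``true parameter'' $\theta_0$. If the assumptions (a)--(e) held uniformly over parameters in a neighborhood of $\theta_0$, the proof of Part~(1) would carry over verbatim to yield
\[
-2\log\Lambda_n(\boldsymbol{X}_n^\ast, Y^\ast)\;+\;2\log\frac{f(Y^\ast;\tilde\theta_n)}{\sup_{\theta\in\Theta}f(Y^\ast;\theta)}\;\xrightarrow{p_\ast}\;0
\]
in bootstrap probability. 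Condition~(i) supplies the bootstrap analogue of the MLE consistency in~(a), while the integrability/continuity assumptions in~(f) are the uniform-in-$\theta$ strengthenings of~(c)--(d) needed so that the Taylor expansion around the (now random) center $\tilde\theta_n$, together with the associated LLN control, still go through. Passing to almost-sure subsequences along which $\tilde\theta_n\to\theta_0$ reduces the random-center problem to a deterministic-parameter one to which the Part~(1) proof applies directly.

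Second, I would show that the ``profile'' term $-2\log[f(Y^\ast;\tilde\theta_n)/\sup_\theta f(Y^\ast;\theta)]$ converges in distribution (in bootstrap probability) to the Part~(1) limit $-2\log[f(Y;\theta_0)/\sup_\theta f(Y;\theta)]$. Since $\tilde\theta_n\xrightarrow{p}\theta_0$ by~(a), condition~(h) gives $Y^\ast\xrightarrow{d} Y$ with $Y\sim f(\cdot;\theta_0)$; condition~(g) ensures that the map $y\mapsto -2\log[f(y;\theta_0)/\sup_\theta f(y;\theta)]$ is continuous on a set of full $\Pr_{\theta_0}$-measure, so a continuous-mapping argument (again along a.s.\ subsequences, because $\tilde\theta_n$ also appears in the numerator) delivers the weak limit. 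Combining with the first step shows that the bootstrap LR-statistic has the same weak limit as in Part~(1).

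Finally, condition~(e) implies the limiting random variable has a continuous distribution function, so Polya's theorem converts the pointwise CDF convergence into the uniform convergence required. The main obstacle is the first step: running the Part~(1) Taylor-expansion/LLN argument when the ``true'' bootstrap parameter $\tilde\theta_n$ is random and drifts with $n$. This is precisely why~(f) is stated as a uniform-over-a-neighborhood condition and why~(i) is required to control the bootstrap MLEs' concentration around $\tilde\theta_n$; coupling the two via the standard almost-sure subsequence device should be the technical heart of the proof.
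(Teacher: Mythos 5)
Your proposal follows essentially the same route as the paper's proof: replay the Part~(1) Taylor-expansion/LLN argument in the bootstrap world along almost-sure subsequences where $\tilde\theta_{n}\to\theta_0$ (with condition~(f) supplying the uniform moment control and condition~(i) the bootstrap MLE consistency), obtain the weak limit of the profile term from conditions~(h), (g), and (e) via the continuous mapping theorem, and finish with Polya's theorem. The only cosmetic difference is that the paper expands $\log f(Y^{\ast};\widehat\theta_n^{\ast})$ directly around the fixed $\theta_0$ rather than around the drifting center $\tilde\theta_n$, so it can apply the continuous mapping theorem to $f(Y^{\ast};\theta_0)/\sup_{\theta}f(Y^{\ast};\theta)$ without the extra (but easily supplied) step your version would need to replace $\tilde\theta_n$ by $\theta_0$ in the numerator.
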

We comment on the conditions before presenting the proof of the scalar-parameter version of Theorem~\ref{theorem-1}.
Conditions (a)--(e) are used to determine the limit distribution of the log-LR statistic and correspond
to standard regularity conditions often applied in likelihood theory.  For example, conditions (b)--(c) require that the score-function
exists, with the usual mean zero and a variance interpretable as an information number.
Condition~(e) is natural for continuous data,
ensuring that the log of this quantity is a well-defined random variable.
Conditions~(f)--(i) are then further imposed
to establish the validity of the bootstrap approximation, though these conditions are also generally mild.
Condition~(f) provides a type of uniform integrability and convergence of moments arising from derivatives of the score-function (i.e., in a neighborhood
of the true parameter $\theta_0$).  Conditions~(g)--(h) are basic smoothness conditions on the   marginal data density $f(\cdot;\theta_0)$;
condition~(h) states that convergence of parameters implies convergence of underlying distributions, which would follow by Scheffe's theorem for example (i.e., verifying pointwise convergence of densities).  Condition~(i) corresponds to a basic condition on bootstrap parameter estimates,
which is technical due to the nature of bootstrap distributions being defined by observed data; this condition says that the distributions of bootstrap quantities $|\widehat{\theta}^\ast_{n}-\tilde{\theta}_n|$
and $|\tilde{\theta}^\ast_{n}-\tilde{\theta}_n|$, which are analogous to $|\tilde{\theta}_n-\theta_0|$ in the bootstrap world, converge to  zero in distribution (i.e., holding with high probability as sample size $n$ increases).
This condition is much weaker than the standard use of the bootstrap for parameter estimation, often requiring that $\sqrt{n}(\tilde{\theta}^\ast_{n}-\tilde{\theta}_n)$  and $\sqrt{n}( \tilde{\theta}_n-\theta_0)$ have the same non-degenerate limit distributions.  In condition~(i), the exact distance metric $\rho(\cdot,\cdot)$ is not important (e.g., Prokhorov or Levy metrics may be used);   for generic random variables, $\rho(Z_m,Z_0)\to 0$ must simply have the equivalent interpretation that $Z_m \stackrel{d}{\rightarrow}Z_0$.

\begin{proof}

\noindent To establish Theorem~\ref{theorem-1}(i) for a scalar parameter,  recall the ML estimator $\tilde\theta_n\equiv \tilde\theta_n(X_1,\dots,X_n)$ from the data $(X_1,\dots,X_n)$ (i.e., an iid sample of size $n$) maximizes the log-likelihood function
\[
l_n(\theta) \equiv \sum_{i=1}^n \log f(X_i;\theta),
\]
while the ML  estimator $\widehat\theta_n \equiv \widehat\theta_n(X_1,\dots,X_n,Y)$ from the data $(X_1,\dots,X_n,Y)$
(i.e., an iid sample $X_1,\dots,X_n,Y\equiv X_{n+1}$ of size $n+1$) maximizes the log-likelihood function
\[
l_{2,n}(\theta) \equiv \log f(Y;\theta) + \sum_{i=1}^n \log f(X_i;\theta) =\log f(Y;\theta) +  l_n(\theta).
\]
For simplicity in the following, we denote the first and second derivatives of $l_n(\theta),l_{2,n}(\theta)$ with respect to $\theta$ as
 $l_n^\prime(\theta),l_{2,n}^\prime(\theta)$ and  $l_n^{\prime\prime}(\theta),l_{2,n}^{\prime\prime}(\theta)$, when such derivatives appropriately exist.

Note that conditions~(a)-(b) imply that, with arbitrarily high probability
for large $n$, both $\tilde\theta_n, \widehat\theta_n $ lie in a neighborhood $O$ of $\theta_0$ and are solutions/roots of the log-likelihood equations
\begin{equation}
\label{eqn:score}
  0=l_n^\prime(\tilde\theta_n)\equiv \sum_{i=1}^n \frac{d \log f(X_i; \tilde\theta_n)}{d\theta}, \qquad 0= l_{2,n}^\prime(\widehat\theta_n) \equiv \frac{d \log f(Y; \widehat{\theta}_n)}{d\theta} +l_n^{\prime}(\widehat\theta_n).
\end{equation}
By Taylor expansion under condition~(b) and $l^\prime_n(\tilde{\theta}_n)=0$, we may then write the difference of the log-likelihood functions as
\begin{equation}
\label{eqn:a1}
l_n(\widehat\theta_n)-l_n(\tilde{\theta}_n)
=l_n^\prime(\tilde{\theta}_n)(\widehat{\theta}_n-\tilde{\theta}_n)
+\frac{1}{2}l_n^{\prime\prime}({\theta}_n^\ast)(\tilde{\theta}_n-\widehat{\theta}_n)^2 = \frac{1}{2}l^{\prime\prime}_n(\tau_n)(\tilde{\theta}_n-\widehat{\theta}_n)^2,
\end{equation}
in terms of some value $\tau_n $  between $\tilde{\theta}_n$ and $\widehat{\theta}_n$.   In addition, we may
re-write the score equation $0=l_{2,n}^\prime(\widehat\theta_n )$ in (\ref{eqn:score}) as
\begin{eqnarray*}
0 =  l_{2,n}^\prime(\widehat\theta_n )  =  \frac{d\log f(Y;\widehat{\theta}_n)}{d\theta}+l_n^\prime(\tilde{\theta}_n)+l_n^{\prime\prime}(\tau_{2,n})(\tilde{\theta}_n-\widehat{\theta}_n),
\end{eqnarray*}
using Taylor expansion $l_n^\prime(\widehat{\theta}_n)$ around $\tilde{\theta}_n$, where above $\tau_{2,n}$ denotes some value between $\tilde{\theta}_n$ and $\widehat{\theta}_n$;
because $l_n^\prime(\tilde{\theta}_n)=0$ in (\ref{eqn:score}), the above leads to
\begin{equation}
\label{eqn:a2}
	\frac{d\log f(Y;  \widehat{\theta}_n)}{d\theta} +l_n^{\prime\prime}(\tau_{2,n})(\tilde{\theta}_n-\widehat{\theta}_n)=0.
\end{equation}

Based on the developments above, define a scaled difference
$\Delta_n$ of second derivatives of log-likelihood functions as either $\Delta_n \equiv [ l_n^{\prime\prime}(\tau_{2,n}) - l_n^{\prime\prime}(\theta_0)]/n $
based on $\tau_{2,n}$ from (\ref{eqn:a2}) or as $\Delta_n \equiv [ l_n^{\prime\prime}(\tau_n) - l_n^{\prime\prime}(\theta_0)]/n $
based on $\tau_n$ from (\ref{eqn:a1}); both $\tau_{2,n}$ and $\tau_n$ similarly denote values between
  $\widehat{\theta}_n$ and $\tilde{\theta}_n$.   To complete the proof of Theorem~\ref{theorem-1}(i), we shall establish
the following three results, appearing in (\ref{eqn:a3})-(\ref{eqn:a5}) as $n\to \infty$:
\begin{equation}
\label{eqn:a3}
\Delta_n \stackrel{p}{\rightarrow} 0;
\end{equation}
\begin{eqnarray}
\label{eqn:a4} \frac{d\log f(Y; \widehat{\theta}_n)}{d\theta} =\frac{d\log f(Y;\theta_0)}{d\theta} + o_p(1), \\
\nonumber \log f(Y; \widehat{\theta}_n)  = \log f(Y;\theta_0) + o_p(1);\end{eqnarray}
\begin{equation} \label{eqn:a5} n|\widehat{\theta}_n - \tilde{\theta}_n| = O_p(1).
\end{equation}
When the above hold, we may re-write (\ref{eqn:a1}) as
\begin{equation}
\label{eqn:a6}
l_n(\widehat{\theta}_n)-l_n(\tilde{\theta}_n)=\frac{l_n^{\prime\prime}(\tau_{2,n})}{2n}\left[\sqrt{n}(\widehat{\theta}_n-\tilde{\theta}_n)\right]^2 = O_p(1)o_p(1)=o_p(1)
\end{equation}
using that $l_n^{\prime\prime}(\tau_{2,n})/n\xrightarrow{p} -I(\theta_0) \in (-\infty,0)$ by  (\ref{eqn:a3}) combined with
$l_n^{\prime\prime}(\theta_0)/n\xrightarrow{p} -I(\theta_0) $ by the SLLN (cf.~condition~(c)), while $\sqrt{n}|\widehat{\theta}_n-\tilde{\theta}_n|=o_p(1)$ by (\ref{eqn:a5}). From  (\ref{eqn:a4})  with (\ref{eqn:a6}), we then obtain
the distribution of the log-LR statistic as
\begin{eqnarray*}
-2\log\Lambda_n(\boldsymbol{X}_n,Y) &=& -2 l_n(\widehat{\theta}_n) -2   \log f(Y;\widehat{\theta}_n) + 2 l_n(\tilde{\theta}_n) + 2 \log \sup_{\theta\in\Theta} f(Y;\theta)\\&=& -2   \log f(Y;\theta_0) + 2 \log \sup_{\theta\in\Theta} f(Y;\theta) + o_p(1),
\end{eqnarray*}
where $\log \sup_{\theta\in\Theta} f(Y;\theta)$  and  $\log   f(Y;\theta_0) -\log \sup_{\theta\in\Theta} f(Y;\theta)$ are well-defined random variables (cf.~condition~(e) where $Y\stackrel{d}{=}X_1$);
that is,
\[
-2\log\Lambda_n(\boldsymbol{X_n},Y)\xrightarrow{d}-2\left[\log f(Y;\theta_0)-\log\sup_{\theta\in\Theta}f(Y;\theta)\right].
\]

We next establish (\ref{eqn:a3})-(\ref{eqn:a5}), beginning with (\ref{eqn:a3}).
 Consider $\Delta_n \equiv [ l_n^{\prime\prime}(\tau_{2,n}) - l_n^{\prime\prime}(\theta_0)]/n $
with $\tau_{2,n}$ from (\ref{eqn:a2}), lying between  $\widehat{\theta}_n$ and  $\tilde{\theta}_n$;
both latter estimators are consistent for $\theta_0$ by condition~(a).   For any $\epsilon>0$, we may pick $\delta$ to bound the probability
(under $\Pr\equiv \Pr_{\theta_0}$)
\begin{equation}\label{eq:conv}
	\begin{split}
		\Pr(|\Delta_n|>\epsilon)&\leq\Pr\left(\left|\Delta_n\right|>\epsilon,|\widehat{\theta}_n-\theta_0|<\delta/2, |\tilde{\theta}_n-\theta_0|<\delta/2\right)+\\
		&\Pr(|\widehat{\theta}_n-\theta_0|\geq\delta/2)+\Pr(|\tilde{\theta}_n-\theta_0|\geq\delta/2).
	\end{split}
\end{equation}
Using condition~(c) with Markov's inequality, we may bound the probability
\begin{equation*}
	\begin{split}
	&\Pr\left(\left|\Delta_n\right|>\epsilon,|\widehat{\theta}_n-\theta_0|<\delta/2, |\tilde{\theta}_n-\theta_0|<\delta/2\right)
	\leq\Pr(\left|\Delta_n\right|>\epsilon, \left|\tau_{2,n}-\theta_0\right|<\delta)\\
	\leq&\Pr\left[\frac{1}{n}\sum_{i=1}^{n}\sup_{|\theta-\theta_0|<\delta}\left|\frac{d^2\log f(X_i;\theta)}{d\theta^2} -\frac{d^2\log f(X_i;\theta_0)}{d\theta^2} \right|>\epsilon\right]\\
	\leq&\frac{h(\delta)}{\epsilon}, \quad h(\delta)\equiv \text{E}_{\theta_0} \sup_{|\theta-\theta_0|<\delta}\left|\frac{d^2\log f(X_1;\theta)}{d\theta^2} -\frac{d^2\log f(X_1;\theta_0)}{d\theta^2} \right|.
	\end{split}
\end{equation*}
Hence, for fixed $\epsilon,\delta>0$, taking limits as $n\to \infty$   in (\ref{eq:conv}) yields
\begin{eqnarray*}
 \limsup_{n\to \infty}\Pr\left(|\Delta_n|>\epsilon\right)
&\leq&\frac{h(\delta)}{\epsilon}+\limsup_{n\to \infty}\Pr(|\widehat{\theta}_n-\theta_0|\geq\delta/2)+\limsup_{n\to \infty}\Pr(|\tilde{\theta}_n-\theta_0|\geq\delta/2)\\
&
\leq& \frac{h(\delta)}{\epsilon},
\end{eqnarray*}
by consistency of $\widehat{\theta}_n$ and  $\tilde{\theta}_n$ (e.g., $\lim_{n\to \infty}\Pr(|\tilde{\theta}_n-\theta_0|\geq\delta/2)=0$).  Because $\delta>0$ was arbitrary, letting $\delta \to 0$ (i.e. using $\lim_{\delta \to 0}h(\delta) =0$ by condition~(c)) then yields $\limsup_{n\to \infty}\Pr\left(|\Delta_n|>\epsilon\right)=0$.
As $\epsilon>0$ was arbitrary,   (\ref{eqn:a3}) now follows for $\Delta_n \equiv [ l_n^{\prime\prime}(\tau_{2,n}) - l_n^{\prime\prime}(\theta_0)]/n $; the argument is the same for $\Delta_n \equiv [ l_n^{\prime\prime}(\tau_n) - l_n^{\prime\prime}(\theta_0)]/n $ with $\tau_n $ from (\ref{eqn:a1}).

To show (\ref{eqn:a4}),  we first consider expanding $d\log f(Y;\widehat{\theta}_n)/d\theta$ around $\theta_0$; in which case, we may write
\begin{equation}\label{eq:exp2}
\frac{d\log f(Y;\widehat{\theta}_n)}{d\theta} =\frac{d\log f(Y;\theta_0)}{d\theta} +\frac{d^2\log f(Y;\tau_{3,n})}{d\theta^2} (\widehat{\theta}_n-\theta_0),
\end{equation}
where $\tau_{3,n}$ is some value between $\theta_0$ and $\widehat{\theta}_n$.  Picking some small $\delta>0$,
when $|\widehat{\theta}_n-\theta_0|\leq\delta$ holds (with arbitrarily high probability when $n$ is large),
then $|\tau_{3,n}-\theta_0|\leq\delta$ also holds so that we may bound
\[
\left|\frac{d^2\log f(Y;\tau_{3,n})}{d\theta^2} \right|\leq\left|\frac{d^2\log f(Y;\theta_0)}{d\theta^2} \right|+\sup_{|\theta-\theta_0|<\delta}\left|\frac{d^2\log f(Y;\theta)}{d\theta^2}-\frac{d^2\log f(Y;\theta_0)}{d\theta^2} \right|,
\]
where the sum on the right-hand side above has finite/bounded expectation by conditions (c)--(d).  Consequently,
it follows that $ | d^2\log f(Y;\tau_{3,n})/d\theta^2 |=O_p(1)$ so that, from (\ref{eq:exp2}), we have
 \[
 \frac{d\log f(Y;\widehat{\theta}_n)}{d\theta} =\frac{d\log f(Y;\theta_0)}{d\theta} + O_p(1)o_p(1)
\]
in (\ref{eqn:a4}); the argument for  $\log f(Y;\widehat{\theta}_n) =  \log f(Y;\theta_0) + o_p(1)$ follows similarly.

To show (\ref{eqn:a5}), we may apply (\ref{eqn:a2}), (\ref{eqn:a3}), and (\ref{eqn:a4}) to write
\begin{equation}\label{find-diff-tilde-hat}
\frac{d\log f(Y|\theta)}{d\theta}\bigg|_{\theta=\theta_0}=o_p(1)+n(\tilde{\theta}_n-\widehat{\theta}_n)\cdot I(\theta_0)(1+o_p(1)),
\end{equation}
where $l_n^{\prime\prime}(\tau_{2,n})/n=-I(\theta_0)(1+ o_p(1))$ follows by (\ref{eqn:a3}) along with
$l_n^{\prime\prime}(\theta_0)/n\xrightarrow{p} -I(\theta_0)\in(-\infty,0) $ by the SLLN.
Because $o_p(1)$ terms  in (\ref{find-diff-tilde-hat}) will be less than $1/2$ (say) with high probability for large $n$, this aspect implies that
\[
n\left|\tilde{\theta}_n-\widehat{\theta}_n\right|\leq \frac{1}{I(\theta_0)}+\frac{2}{I(\theta_0)} \left|\frac{d\log f(Y;\theta_0)}{d\theta}\right|
\]
holds (with high probability for large $n$); note that the right-hand side has finite expectation (i.e., bounded in probability).
Consequently, it follows that $n|\tilde{\theta}_n-\widehat{\theta}_n|=O_p(1)$ in (\ref{eqn:a6}).

To establish  the bootstrap result of Theorem~\ref{theorem-1}(i) in the scalar parameter case, we establish
convergence in probability for the bootstrap approximation through a characterization of almost sure convergence
along subsequences.  Let $\{n_j\}\subset \{n\}$ denote a positive integer subsequence, and note that we may assume
the random variables $X_{n}$, $n \geq 1$, and $Y$ are defined on a common probability space $(\Omega,\mathcal{F}, \Pr)$ with $\sigma$-algebra $\mathcal{F}$ (and $\Pr \equiv \Pr_{\theta_0}$).
Then, under conditions~(a) and (i), there
exists a further subsequence $\{n_k\}\subset \{n_j\}$ and an event $A\in \mathcal{F}$ with $\Pr(A)=1$ such that, pointwise for $\omega \in A$,   it holds that $\tilde{\theta}_{n_k} \equiv \tilde{\theta}_{n_k}(\omega)\rightarrow \theta_0$ as $n_k \to \infty$ and
while $\rho( |\widehat{\theta}_{n_k}^\ast-\tilde{\theta}_{n_k}|,0) + \rho( |\tilde{\theta}_{n_k}^\ast-\tilde{\theta}_{n_k}|,0)\equiv
\rho( |\widehat{\theta}_{n_k}^\ast-\tilde{\theta}_{n_k}|,0)(\omega) + \rho( |\tilde{\theta}_{n_k}^\ast-\tilde{\theta}_{n_k}|,0)(\omega)
\rightarrow 0$
as $n_k \to \infty$.  Note that, pointwise   for each $\omega \in A$ (we will suppress the dependence of random variables $X_i, Y$ and estimators $\tilde{\theta}_{n}$ on $\omega$ hereafter), there is a sequence of bootstrap distributions, with each distribution indexed by $n_k$ and  defined by bootstrap observations $X_1^*,\ldots, X_{n_k}^*, Y^*\equiv Y^*_{n_k}$ as iid draws
from $f(\cdot; \tilde{\theta}_{n_k})$ based on the ML estimator $\tilde{\theta}_{n_k}$ from observed data $X_1,\ldots, X_{n_k}$.
That is, for fixed $\omega \in A$, we have  $\tilde{\theta}_{n_k} \rightarrow \theta_0$, $\widehat{\theta}_{n_k}^\ast-\tilde{\theta}_{n_k}
 \stackrel{p*}{\rightarrow}0$ and $\tilde{\theta}_{n_k}^\ast-\tilde{\theta}_{n_k}
 \stackrel{p*}{\rightarrow}0$ as $n_k \to \infty$, where  $\stackrel{p*}{\rightarrow}$ denotes convergence in bootstrap probability, and
we consider  establishing that the bootstrap log-LR statistic $-2\log\Lambda_{n_k}^\ast(\boldsymbol{X}_{n_k}^*,Y^*)$ converges in distribution as
$n_k \to \infty$, denoted as $\stackrel{d*}{\rightarrow}$, to the same limit $-2 \log [ f(Y;\theta_0)/\sup_{\theta \in \Theta} f(Y;\theta)]$ as the original log-LR statistic (i.e., $Y \sim f(\cdot;\theta_0)$).
The bootstrap proof is similar to that of  Theorem~\ref{theorem-1}(i), with some modifications for the bootstrap version $Y^*\equiv Y_{n_k}^*\sim f(\cdot;\tilde{\theta}_{n_k})$
of the predictand described next.

As $\widehat{\theta}^\ast_{n_k}\stackrel{p*}{\rightarrow} \theta_0$  and $\tilde{\theta}^\ast_{n_k}\stackrel{p*}{\rightarrow} \theta_0$ (so that $\widehat{\theta}^\ast_{n_k},\tilde{\theta}^\ast_{n_k}$ lie  in a neighborhood $O$ of $\theta_0$ in condition~(b)), we may
apply the same Taylor expansions used in the Theorem~\ref{theorem-1}(i) at the bootstrap level.
Define $l_{n_k}^*(\theta) $ and $l_{2,n_k}^*(\theta)$ to be the bootstrap counterparts of log-likelihood functions $l_{n_k}(\theta) $ and $l_{2,n_k}(\theta)$ (based on $X_1^*,\ldots,X_{n_k}^*,Y^*$ in place of $X_1,\ldots,X_{n_k},Y$), with corresponding derivatives $l_{n_k}^{*\prime}(\theta) ,l_{2,n_k}^{*\prime}(\theta)$  and second derivatives $l_{n_k}^{*\prime\prime}(\theta) ,l_{2,n_k}^{*\prime \prime}(\theta)$.  Then, the same expansions as in (\ref{eqn:a1})-(\ref{eqn:a2}) at the original data level apply for bootstrap data as
   \begin{eqnarray}
\label{eqn:aa1}
l_{n_k}^*(\widehat\theta_{n_k}^*)-l_{n_k}^*(\tilde{\theta}_{n_k}^*) = \frac{1}{2}l^{*\prime\prime}_{n_k}(\tau_{n_k}^*)(\tilde{\theta}_{n_k}^*-\widehat{\theta}_{n_k}^*)^2,\\
\label{eqn:aa2}\frac{d\log f(Y_{n_k}^*;  \widehat{\theta}_{n_k}^*)}{d\theta} +l_{n_k}^{*\prime\prime}(\tau_{2,n_k}^*)(\tilde{\theta}_{n_k}^*-\widehat{\theta}_{n_k}^*)=0,
\end{eqnarray}
in terms of some values $\tau_{n_k}^*, \tau_{2, n_k}^*$ between $\tilde{\theta}_{n_k}^*$ and $\widehat{\theta}_{n_k}^*$.
Define a bootstrap difference as $\Delta_{n_k}^* \equiv [ l_{n_k}^{*\prime\prime}(\tau_{2,n_k}^*) - l_{n_k}^{*\prime\prime}(\theta_0)]/n $
based on $\tau_{2,n_k}^*$ from (\ref{eqn:aa2}) or as $\Delta_{n_k}^* \equiv [ l_{n_k}^{*\prime\prime}(\tau_{n_k}^*) - l_{n_k}^{*\prime\prime}(\theta_0)]/n $
based on $\tau_{n_k}^*$ from (\ref{eqn:aa1}).  Similar to the proof of  Theorem~\ref{theorem-1}(i), to complete the proof of Theorem~\ref{theorem-1}(ii), we shall establish analog versions of (\ref{eqn:a3})-(\ref{eqn:a5}) given by
\begin{equation}
\label{eqn:aa23}
l_{n_k}^{*\prime\prime}(\theta_0) \stackrel{p*}{\rightarrow} - I(\theta_0) \in (0-\infty);
\end{equation}
\begin{equation}
\label{eqn:aa3}
\Delta_{n_k}^* \stackrel{p*}{\rightarrow} 0;
\end{equation}
\begin{eqnarray}
\label{eqn:aa4} \frac{d\log f(Y_{n_k}^*; \widehat{\theta}_{n_k}^*)}{d\theta} =\frac{d\log f(Y_{n_k}^*;\theta_0)}{d\theta} + o_{p*}(1), \\
\nonumber \frac{d\log f(Y_{n_k}^*;\theta_0)}{d\theta} =O_{p*}(1),\\
\nonumber \log f(Y_{n_k}^*; \widehat{\theta}_{n_k}^*)  = \log f(Y_{n_k}^*; \theta_0) + o_{p*}(1);\end{eqnarray}
\begin{equation} \label{eqn:aa5} n|\widehat{\theta}_{n_k}^* - \tilde{\theta}_{n_k}^*| = O_{p*}(1),
\end{equation}
where above $O_{p*}(1)$ (i.e., bounded in bootstrap probability) and $o_{p*}(1)$ (i.e., converging to zero in bootstrap probability) denote bootstrap probability orders   along the subsequence $n_k$.
When  (\ref{eqn:aa23})-(\ref{eqn:aa5}) hold, we can  re-write (\ref{eqn:aa1}) as
\begin{equation}
\label{eqn:aa6}
l_{n_k}^*(\widehat\theta_{n_k}^*)-l_{n_k}^*(\tilde\theta_{n_k}^*) = \frac{1}{2n}l^{*\prime\prime}_{n_k}(\tau_{n_k}^*)\left[\sqrt{n_k}(\widehat{\theta}_{n_k}^*-\tilde{\theta}_{n_k}^*)\right]^2=
 O_{p*}(1)o_{p*}(1)=o_{p*}(1)
\end{equation}
using  (\ref{eqn:aa23})-(\ref{eqn:aa3}) (i.e., establish  $l^{*\prime\prime}_{n_k}(\tau_{n_k}^*)/n \stackrel{p*}{\rightarrow} -I(\theta_0) \in (-\infty,0)$)  combined with (\ref{eqn:aa5}).
  From  (\ref{eqn:aa4})  with (\ref{eqn:aa6}), we then write the bootstrap log-LR statistic as
\begin{eqnarray}
\label{eqn:end}
&&-2\log\Lambda_{n_k}^*(\boldsymbol{X_n^*},Y_{n_k}^*) \\
\nonumber &=& -2 l_{n_k}^*(\tilde{\theta}_{n_k}^*) -2   \log f(Y_{n_k^*};\widehat{\theta}_{n_k}^*) + 2 l_{n_k}^*(\tilde{\theta}_{n_k}^*) + 2 \log \sup_{\theta\in\Theta} f(Y_{n_k}^*;\theta)\\&=& \nonumber -2   \log f(Y_{n_k}^*;\theta_0) + 2 \log \sup_{\theta\in\Theta} f(Y_{n_k}^*;\theta) + o_{p*}(1).
\end{eqnarray}
Because $Y^*\equiv Y_{n_k}^\ast\sim f(\cdot;{\tilde{\theta}_{n_k}})$ and $\tilde{\theta}_{n_k}\to\theta_0$, we have $Y^\ast_{n_k} \stackrel{d*}{\rightarrow} Y_0\sim f(\cdot|\theta_0)$ under condition~(h); from this, by the continuous mapping theorem under conditions~(e) and (g), it holds that the random pair
\[
	\left(\log f(Y^\ast_{n_k};\theta_0), \log \sup_{\theta \in \Theta}f(Y^\ast_{n_k};\theta) \right) \stackrel{d*}{\rightarrow}
\left(\log f(Y_0;\theta_0),
\log \sup_{\theta \in \Theta}f(Y_0;\theta)\right)
	\]
converges in distribution. Consequently,  the limit
\[-2\log\Lambda_{n_k}^*(\boldsymbol{X_n^*},Y_{n_k}^*) \stackrel{d*}{\rightarrow}  -2   \log f(Y_0;\theta_0) + 2 \log \sup_{\theta\in\Theta} f(Y_0;\theta)
\]
then follows in (\ref{eqn:end}) by Slutsky's theorem and the continuous mapping theorem; note that this limit corresponds to the same (continuous) distributional limit as the original log-LR statistic in Theorem~\ref{theorem-1}(i) (i.e., $Y_0\sim f(\cdot|\theta_0)$).
 By Polya's theorem, we then may write
	\[
	\lim_{n_k \to \infty}\sup_{\lambda\in\mathbb{R}^{+}}|\Pr{}_{\!\!\ast}[-2\log\Lambda_{n_k}^*(\boldsymbol{X}_{n_k}^*,Y_{n_k}^*)\leq\lambda]-
\Pr[-2\log\Lambda_{n_k}(\boldsymbol{X}_{n_k},Y)\leq\lambda]| =0.
	\]
As this convergence above holds (almost surely) along the subsequence $\{n_k\}\subset \{n_j\}$, where the latter subsequence was arbitrary, we now have
\[\sup_{\lambda\in\mathbb{R}^{+}}|\Pr{}_{\!\!\ast}[-2\log\Lambda_{n}^*(\boldsymbol{X}_n^*,Y_{n}^*)\leq\lambda]-
\Pr[-2\log\Lambda_{n}(\boldsymbol{X}_{n},Y)\leq\lambda]| \stackrel{p}{\rightarrow}0
\]
as $n\to \infty$ in Theorem~\ref{theorem-1}(ii).

 Finally, we establish (\ref{eqn:aa23})-(\ref{eqn:aa5}) to complete the proof, beginning with (\ref{eqn:aa23}).
 To show (\ref{eqn:aa23}), we first require some bootstrap moment results for the second derivatives of log-densities.
Because  again $\tilde{\theta}_{n_k} \to \theta_0$ and  $X_1^\ast\equiv X_{1,n_k}^\ast\sim f(\cdot;\tilde{\theta}_{n_k})$, then $X_1^\ast \stackrel{d*}{\rightarrow} Y_0\sim f(\cdot;\theta_0)$ as $n_k\to \infty$ by condition~(h) and we also subsequently have
\begin{equation}
\label{eqn:help}
	\frac{d^2\log f(X_1^\ast;\theta_0)}{d\theta^2}\stackrel{d*}{\rightarrow}\frac{d^2\log f(Y_0;\theta_0)}{d\theta^2}
	\end{equation}
by condition~(g) and the continuous mapping theorem.  Noting that the bootstrap sample $X_1^*,\ldots,X_{n_k}^*$ is iid under  $f(\cdot;\tilde{\theta}_{n_k})$, we shall truncate each $d^2\log f(X_i^\ast; \theta_0)/d\theta^2$, $i=1,\ldots,n_k$, as
	\[
	T_{i,{n_k}}^\ast(M)\equiv\frac{d^2\log f(X_i^\ast;\theta_0)}{d\theta^2}\text{I}\left(\left|\frac{d^2\log f(X_i^\ast;\theta_0)}{d\theta^2}\right|\leq M\right),
	\]
	where $M$ is a continuity point in the distribution (cdf) of $d^2\log f(Y_0;\theta_0)/d\theta^2$ and
	$Y_0\sim f(\cdot;\theta_0)$ (i.e., there can be only countably many discontinuity points).  Then, at any given continuity point $M$,
we have from the continuous mapping theorem and (\ref{eqn:help}) that
\[
T_{1,{n_k}}^\ast(M)\stackrel{d*}{\rightarrow} \frac{d^2\log f(Y_0;\theta_0)}{d\theta^2}\text{I}\left(\left|\frac{d^2\log f(Y_0;\theta_0)}{d\theta^2}\right|\leq M\right)
\]
for $Y_0\sim f(\cdot;\theta_0)$.  Because $T_{1,{n_k}}^\ast(M)$ is bounded (i.e., uniformly integrable), the latter convergence in distribution implies the following convergence of bootstrap moments
\begin{eqnarray}\label{key1}
	\text{E}_{\tilde{\theta}_{n_k}}^\ast\left(\frac{1}{n_k}\sum_{i=1}^{n_k}T_{i,{n_k}}^\ast(M)  \right) &=& \text{E}_{\tilde{\theta}_{n_k}}^\ast [T_{1,{n_k}}^\ast(M)] \\\nonumber &\rightarrow &\text{E}_{\theta_0}\left[\frac{d^2\log f(Y_0;\theta_0)}{d\theta^2}\text{I}\left(\left|\frac{d^2\log f(Y_0;\theta_0)}{d\theta^2}\right|\leq M\right)\right]\\
\nonumber &\equiv& Q(M),
	\end{eqnarray}
 using that $\{T_{i,{n_k}}^\ast(M)\}_{i=1}^{n_k}$ are iid bootstrap variables.
By condition~(b) and the dominated convergence theorem, note that
	  $Q(M)\to-I(\theta_0)$ as $M\to\infty$.    We now bound the  bootstrap expectation (absolutely) between
$l_{n_k}^{\ast\prime\prime}(\theta_0)$ and $-I(\theta_0)$ as
\begin{eqnarray}\label{eq:2333}
	&&\text{E}_{\tilde{\theta}_{n_k}}^\ast\left|\frac{1}{n_k}l_{n_k}^{\ast\prime\prime}(\theta_0)+I(\theta_0)\right|\\
\nonumber &\leq&
\text{E}_{\tilde{\theta}_{n_k}}^\ast\left|\frac{1}{n_k}\sum_{i=1}^{n_k} [T_{i,{n_k}}^\ast(M)-\text{E}^\ast_{\tilde{\theta}_{n_k}}T_{i,{n_k}}^\ast(M)]\right|  +\left|\text{E}^\ast_{\tilde{\theta}_{n_k}}T_{1,{n_k}}^\ast(M)+\text{I}(\theta_0)\right|\\
	\nonumber \qquad && + \frac{1}{n_k}\sum_{i=1}^{n_k}\text{E}_{\tilde{\theta}_{n_k}}^\ast\left|\frac{d^2\log f(X_i^\ast;\theta_0)}{d\theta^2}\right|\text{I}\left(\left|\frac{d^2\log f(X_i^\ast;\theta_0)}{d\theta^2}\right|>M\right)\\
 \nonumber & \equiv&  a_{1,n_k}(M) + a_{2,n_k}(M)+a_{3,n_k}(M).
	\end{eqnarray}
Note that, for fixed $M>0$, $[a_{1,n_k}(M)]^2$ is bounded by the bootstrap variance
\begin{eqnarray}
\label{eqn:help2}
  [a_{1,n_k}(M)]^2 &\leq &\text{E}^\ast_{\tilde{\theta}_{n_k}}\left\{\frac{1}{n_k}\sum_{i=1}^{n_k}\left[T_{i,{n_k}}^\ast(M) -\text{E}^\ast_{\tilde{\theta}_{n_k}}(T_{i,{n_k}}) \right]\right\}^2\\
  \nonumber &=& \frac{1}{n_k}\text{E}^\ast_{\tilde{\theta}_{n_k}}\left[ T_{1,{n_k}}^\ast(M) -\text{E}^\ast_{\tilde{\theta}_{n_k}}(T_{1,{n_k}} )\right]^2 \\
   \nonumber &\leq& \frac{M^2}{n_k},
\end{eqnarray}
using that $T_{i,{n_k}}^\ast(M)$, $i=1,\ldots,n_k$ are iid and bounded by $M$.  Additionally, we may bound
\begin{eqnarray*}
a_{3,n_k}(M)& \equiv &\frac{1}{n_k}\sum_{i=1}^{n_k}\text{E}_{\tilde{\theta}_{n_k}}^\ast\left|\frac{d^2\log f(X_i^\ast;\theta_0)}{d\theta^2}\right|\text{I}\left(\left|\frac{d^2\log f(X_i^\ast;\theta_0)}{d\theta^2}\right|>M\right)
\\&= &\text{E}_{\tilde{\theta}_{n_k}}^\ast\left|\frac{d^2\log f(X_1^\ast;\theta_0)}{d\theta^2}\right|\text{I}\left(\left|\frac{d^2\log f(X_1^\ast;\theta_0)}{d\theta^2}\right|>M\right)\\& \leq &\sup_{\theta \in O}\text{E}_{\theta}\left|\frac{d^2\log f(X_1;\theta_0)}{d\theta^2}\right|\text{I}\left(\left|\frac{d^2\log f(X_1;\theta_0)}{d\theta^2}\right|>M\right) \equiv a_3(M),
\end{eqnarray*}
under condition~(f) (e.g., note $\tilde{\theta}_{n_k}\to\theta_0\in O$) where $\lim_{M\to \infty}a_3(M)=0$.
Now fixing $M>0$ and taking limits as $n_k\to \infty$ in (\ref{eq:2333}), we have
\[
\limsup_{n_k \to \infty}\text{E}_{\tilde{\theta}_{n_k}}^\ast\left|\frac{1}{n_k}l_{n_k}^{\ast\prime\prime}(\theta_0)+I(\theta_0)\right| \leq  |Q(M)+I(\theta_0)| + a_3(M)
\]
using that  $\limsup_{n_k \to \infty} a_{1,n_k}(M)=0$ by (\ref{eqn:help2}), $\limsup_{n_k \to \infty} a_{2,n_k}(M)=|Q(M)+I(\theta_0)|$
by (\ref{eqn:help}), and $\limsup_{n_k \to \infty} a_{3,n_k}(M) \leq a_3(M)$.  Now letting $M\to \infty$ and using $\lim_{M\to \infty}Q(M)=-I(\theta_0)$ and $\lim_{M\to \infty}a_3(M)=0$, we find
\[
\limsup_{n_k \to \infty}\text{E}_{\tilde{\theta}_{n_k}}^\ast\left|\frac{1}{n_k}l_{n_k}^{\ast\prime\prime}(\theta_0)+I(\theta_0)\right| =0;
\]
this establishes (\ref{eqn:aa23}).

To   show   (\ref{eqn:aa3}), the fact that $\Delta_{n_k}^* \stackrel{p*}{\rightarrow} 0$  follows from  $\tilde{\theta}_{n_k}^* \stackrel{p*}{\rightarrow} \theta_0$ and $\widehat{\theta}_{n_k}^* \stackrel{p*}{\rightarrow} \theta_0$ along with  condition~(f) (e.g., note $\tilde{\theta}_{n_k}\to\theta_0\in O$).  To   establish   (\ref{eqn:aa4}),  as $\widehat{\theta}^\ast_{n_k}\stackrel{p*}{\rightarrow} \theta_0$ (so that $\widehat{\theta}^\ast_{n_k}$ lies in a neighborhood $O$ of $\theta_0$ in condition~(b)), we may
 expand $d\log f(Y_{n_k}^\ast;\widehat{\theta}^\ast_{n_k})/d \theta$ around $\theta_0$ as
 \[
	\frac{d\log f(Y_{n_k}^\ast; \widehat{\theta}_{n_k}^\ast)}{d \theta}=\frac{d\log f(Y_{n_k}^\ast; \theta_0)}{d \theta} +\frac{d^2\log f(Y^\ast_{n_k};\zeta_{n_k}^\ast)}{d\theta^2}(\widehat{\theta}^\ast_{n_k}-\theta_0),
\]
where $\zeta_{n_k}^\ast$ is between $\theta_0$ and $\widehat{\theta}_{n_k}^\ast$. Because
$|\zeta_{n_k}^\ast-\theta_0| \leq |\widehat{\theta}_{n_k}^\ast-\theta_0| \stackrel{p*}{\rightarrow}0$
and because $Y^\ast_{n_k} \sim f(;\tilde{\theta}_{n_k})$ where $ \tilde{\theta}_{n_k} \rightarrow \theta_0 $,
it follows from condition~(f) that the bootstrap expectation $\text{E}_{\tilde{\theta}_{n_k}}^*  |d^2\log f(Y_{n_k}^\ast;\zeta_{n_k}^\ast)/d\theta^2|I(| \zeta_{n_k}^\ast-\theta_0| <\delta)$ is a bounded sequence in $n_k$ (for a given
small $\delta$), which implies  $d^2\log f(Y_{n_k}^\ast;\zeta_{n_k}^\ast)/d\theta^2=O_{p*}(1)$ is tight in bootstrap probability.
Consequently, we have
\[
	\frac{d\log f(Y_{n_k}^\ast; \widehat{\theta}_{n_k}^\ast)}{d \theta}=\frac{d\log f(Y_{n_k}^\ast; \theta_0)}{d \theta} + o_{p*}(1)
\]
in (\ref{eqn:aa4}).
Additionally,  because $Y^\ast_{n_k} \sim f(;\tilde{\theta}_{n_k})$ where $ \tilde{\theta}_{n_k} \rightarrow \theta_0 $, it follows from conditions (g)-(h) with the continuous mapping theorem that   $d\log f(Y_{n_k}^\ast; \theta_0)/d \theta \stackrel{d*}{\rightarrow} d\log f(Y_0; \theta_0)/d \theta $
and $\log \sup_{\theta \in \Theta}f(Y^\ast_{n_k};\theta)\stackrel{d*}{\rightarrow}\log \sup_{\theta \in \Theta}f(Y_0;\theta)$
for $Y_0\sim f(\cdot;\theta_0)$ in  (\ref{eqn:aa4}) (e.g.,  consequently $d\log f(Y_{n_k}^\ast; \theta_0)/d \theta =O_{p*}(1)$ is  tight in bootstrap probability).  Finally,  to establish (\ref{eqn:aa5}),
we may apply the same arguments that used to derive (\ref{find-diff-tilde-hat}) to find
	\[
	\frac{d\log f(Y^\ast_{n_k};\widehat{\theta}^*_{n_k})}{d\theta} =o_{p*}(1)+I(\theta_0)(\tilde{\theta}_{n_k}^\ast-\widehat{\theta}_{n_k}^\ast)n_k[1+o_{p^\ast}(1)],
	\]
	implying $n|\tilde{\theta}_{n_k}^\ast-\widehat{\theta}_{n_k}^\ast|=O_{p*}(1)$.
\end{proof}
\noindent\textbf{Remark~1}.
Here we discuss the extension to the case  of multiple parameters $\theta$.  To connect the multiple parameter
setting to scalar parameter version of Theorem~\ref{theorem-1} and the notation/proof developed there,
we write the multiple parameter as $\btheta = (\phi, \vartheta)\in \boldsymbol\Theta$, where $\phi$ is real-valued and  $\vartheta$
may be vector valued.  For constructing an LR statistic (3), in this notation the full model
is given by $X_1\ldots,X_n \sim f(\cdot; \phi, \vartheta)$ and $Y\sim f(\cdot; \varphi, \vartheta)$
while the reduced model is $X_1\ldots,X_n,Y \sim f(\cdot; \phi, \vartheta)$; that is, under the full model,
the density of the predictand $Y$ involves a real parameter $\varphi$ that may differ from the counterpart parameter
$\phi$ in the density of the data $X_1,\ldots,X_n$, though the remaining parameters $\vartheta$ are common to these densities in the full model.

In this notation, define $\tilde{\btheta}_n$ as the maximizer of the log-likelihood $l_n(\btheta) \equiv \sum_{i=1}^n \log f(X_i;\btheta)$
and define $\widehat{\btheta}_n$ as the maximizer of the log-likelihood $l_{2,n}(\btheta) \equiv \log f(Y;\btheta) +\sum_{i=1}^n \log f(X_i;\btheta)$.
The definitions of $\tilde{\btheta}_n,\widehat{\btheta}_n$ here also match those used in the proof of the scalar case
of  Theorem~\ref{theorem-1}.  Next define $\bar{\btheta}_n$ as the maximizer
of $l_{3,n}(\btheta) \equiv l_{3,n}(\phi,\vartheta) \equiv  \log \sup_{\varphi} f(Y;\varphi,\vartheta) +\sum_{i=1}^n \log f(X_i;\btheta)$.
In the scalar parameter case, the distinction between $l_n(\btheta)$ and $l_{3,n}(\btheta)$ as log-likelihood functions is small
in the sense that the difference $l_{3,n}(\btheta)-l_n(\btheta)$ does not depend on $\btheta$ (i.e., when there
is one parameter, we can write the difference $l_{3,n}(\btheta)-l_n(\btheta) = \log \sup_{\phi} f(Y;\phi)$ in a way that does not depend on the parameter $\btheta$).  That is, when there is one parameter,  $\tilde{\btheta}_n$ and  $\bar{\btheta}_n$ are the same; however, $\tilde{\btheta}_n$ and  $\bar{\btheta}_n$ may not be the same estimators of $\btheta$ in the multi-parameter setting.  Hence, to connect the construction
of LR statistics between the multiple and scalar parameter case, we can write the LR statistic (3) as
\begin{eqnarray*}
-2 \log \Lambda_n(\boldsymbol{X}_n,Y) &=& -2 [l_n(\widehat{\btheta}_n) - l_{3,n}(\bar{\btheta}_n)  ]\\
& =&   -2 [l_n(\widehat{\btheta}_n) - l_{2,n}(\tilde{\btheta}_n)  ] +  2 [l_{3,n}(\bar{\btheta}_n)-l_{2,n}(\tilde{\btheta}_n)].
\end{eqnarray*}
Under the same  conditions and arguments used in the scalar version of  Theorem~\ref{theorem-1}, the component $-2 [l_n(\widehat{\btheta}_n) - l_{3,n}(\bar{\btheta}_n)  ]$ above behaves the same way in the multiple parameter case as in the scalar case; that is,
\[
-2 [l_n(\widehat{\btheta}_n) - l_{3,n}(\bar{\btheta}_n)  ] = -2 \log f(Y;\btheta_0) + o_p(1)
\]
holds where $\btheta_0$ denotes the true parameter value (e.g., $\btheta_0=(\phi_0, \vartheta_0)$ for multiple parameters).
However, the component  $[l_{3,n}(\bar{\btheta}_n)-l_{2,n}(\tilde{\btheta}_n)]$ behaves slightly differently between scalar and multiple parameter cases.  For a single parameter $\btheta$, we have $2[l_{3,n}(\bar{\btheta}_n)-l_{2,n}(\tilde{\btheta}_n)]=2 \log\sup_{\varphi} f(Y;\varphi)$ exactly, where $Y\sim f(\cdot;\btheta_0)$ again, and there are then no further steps needed to obtain the limit of
$-2 \log \Lambda_n(\boldsymbol{X}_n,Y)$.  For multiple parameters though, we  repeat an expansion of $2[l_{3,n}(\bar{\btheta}_n)-l_{2,n}(\tilde{\btheta}_n)]$, that is similar to that for  $-2 [l_n(\widehat{\btheta}_n) - l_{2,n}(\tilde{\btheta}_n)  ]$, to obtain
\[
2[l_{3,n}(\bar{\btheta}_n)-l_{2,n}(\tilde{\btheta}_n)] = 2 \log \sup_{\phi} f(Y;\phi,\vartheta_0) + o_p(1),
\]
where    $\btheta_0=(\phi_0, \vartheta_0)$ again denotes the true parameter value.  Upon this step, the limit of the log-LR statistic $-2 \log \Lambda_n(\boldsymbol{X}_n,Y)$ follows in a unified way for both scalar and multiple parameter cases.

The conditions required in the multiple parameter case   largely match those given in the scalar version of Theorem~\ref{theorem-1},
with the understanding that first/second derivatives in the condition statements are replaced by first/second partial derivatives
and that parameter estimators $\tilde{\btheta}_n, \widehat{\btheta}_n, \bar{\btheta}_n$ are defined as above (in agreement between scalar and multiple parameter cases).
The other modifications to the conditions given in the scalar version of Theorem~\ref{theorem-1} are as follows.
Let $\btheta_0=(\phi_0,\vartheta_0)$ denote the true parameter value.
Condition (a) is augmented to include  $\bar{\btheta}_n \stackrel{p}{\rightarrow}\btheta_0$; condition~(h)
is augmented to include $\rho(|\bar{\btheta}_n^*- \tilde{\btheta}_n|,0)\stackrel{p}{\rightarrow}$; and ``$\sup_{\btheta \in \Theta} f(x;\btheta)$"
is replaced by ``$\sup_{\phi} f(x;\phi,\vartheta_0)$" in conditions (e) and (g). Finally, one additional assumption is required
that $g(x;\vartheta)\equiv \sup_{\phi} f(x;\phi,\vartheta)$ is continuously differentiable in a neighborhood $\mathcal{O}$ of $\vartheta_0$ and
that, over a neighborhood $O$ of $\btheta_0$, expectations $\sup_{\btheta \in O}E_{\btheta} \sup_{ \vartheta \in \mathcal{O}} |\partial g(X_1;\vartheta)/\partial \vartheta  |<\infty$ are bounded,
 where $\text{E}_{\btheta}$ denotes expected valued with respect to $X_1\sim f(\cdot;\btheta)$.\\

 \noindent\textbf{Remark~2}.  Here we describe the derivation of the limit distribution of the signed log-LR statistic, as given in Remark~2 from
the main manuscript.  Let $\Lambda_n(\boldsymbol{X}_n,y)$ denote an LR statistic (3), as a function of $y$ for a given data $\boldsymbol{X}_n$,
which is assumed to be unimodal (either with probability 1 or with probability approaching 1 as $n\to \infty$) with a mode at $y_0\equiv y_0(\boldsymbol{X}_n)$.  The limit of the signed log-LR statistic $-(-1)^{I(Y \leq y_0)}2 \Lambda_n(\boldsymbol{X}_n,Y)$ follows
from the established limit of the log-LR statistic $-2 \Lambda_n(\boldsymbol{X}_n,Y)$ in Theorem~\ref{theorem-1} combined with
showing $(-1)^{I(Y \leq y_0)} \stackrel{p}{\rightarrow}  (-1)^{I(Y \leq m_0)}$ as $n\to \infty$, where $m_0$ is the mode/maximizer
of $h(y) = f(y;  \boldsymbol{\theta}_0)/ \sup_{\vartheta} f(y;\vartheta,\boldsymbol{\theta}_0^\prime)$ and
$\boldsymbol{\theta}_0 = (\vartheta_0, \boldsymbol{\theta}_0^\prime)$ denotes the true parameter value (in the notation of Remark~2).
Because $Y\sim f(\cdot; \boldsymbol{\theta}_0)$ is a continuous random variable, the convergence   $(-1)^{I(Y \leq y_0)} \stackrel{p}{\rightarrow}  (-1)^{I(Y \leq m_0)}$ follows by showing  $  y_0  \stackrel{p}{\rightarrow}  m_0$; from this, $(Y,y_0) \stackrel{d}{\rightarrow} (Y,m_0)$ then holds and  the continuous mapping theorem  then yields  $(-1)^{I(Y \leq y_0)} \stackrel{p}{\rightarrow}  (-1)^{I(Y \leq m_0)}$.
To establish $  y_0(\boldsymbol{X_n})\equiv y_0  \stackrel{p}{\rightarrow}  m_0$, we pick/fix some small $\epsilon>0$.  Then, the LR statistic $\Lambda_n(\boldsymbol{X}_n,m)\stackrel{p}{\rightarrow} h(m)$ for $m\in\{m_0,\ m_0 \pm \epsilon\}$; this implies, because $h(m_0-\epsilon)<h(m_0)$
and $h(m_0+\epsilon)<h(m_0)$ and because $\Lambda_n(\boldsymbol{X}_n,y)$ is unimodal in $y$,
it must be that the maximizer $y_0$  of $\Lambda_n(\boldsymbol{X}_n,y)$
lies in the interval $(m_0-\epsilon,m_0+\epsilon)$ (with arbitrarily high probability for large $n$) due to the fact
that $\Lambda_n(\boldsymbol{X}_n,y)$ at the endpoints $y=m_0\pm \epsilon$ is smaller than  $\Lambda_n(\boldsymbol{X}_n,y)$ at $y=m_0$.
This shows $\lim_{n\to \infty} \Pr(|y_0(\boldsymbol{X_n}) - m_0|<\epsilon)=1$ and, since $\epsilon>0$ was arbitrary,  $y_0(\boldsymbol{X_n})\equiv y_0  \stackrel{p}{\rightarrow}  m_0$ follows.

\subsection{Proof of Theorem~\ref{theorem-chi-square-1}}
After presenting a proof of Theorem~\ref{theorem-chi-square-1}, we provide some further explanation (cf.~Remark~3 below) of the theoretical details mentioned in Section 6.4 of the main manuscript.  These details concern how Theorem~\ref{theorem-chi-square-1} applies for justifying likelihood ratio statistics used in prediction problems for discrete random variables (e.g., binomial and Poisson predictions from Sections~6.1-6.2).
\begin{theorem}\label{theorem-chi-square-1}
Suppose iid data $X_1,\dots,X_n$ have a marginal density
$f(\cdot;\theta)$, depending on scalar parameter $\theta\in \Theta$, and satisfy Assumptions~(a)-(d) of Theorem~\ref{theorem-1}
with true parameter value $\theta_0$.  Suppose further that, independently,  $Y_1,\dots,Y_m$ are iid random variables with the same marginal density.

Consider a hypothesis test where
the null hypothesis (or reduced model)  is that $X_1,\dots,X_n$ and $Y_1,\dots,Y_m$ have the same density $f(\cdot;\theta)$, while
the alternative hypothesis (or full model)  is that  
\[
X_1,\dots,X_n\sim f(\cdot;\theta),\quad Y_1,\dots Y_m\sim f(\cdot;\theta+\delta);
\]
above $\delta=0$ corresponds to the null hypothesis and the assumed true data distribution.

Denoting the parameter vector as $\boldsymbol{\xi}=(\delta,\theta)$ and the log-likelihood function based on $(X_1,\dots,X_n,Y_1,\dots,Y_m)$ as $l_{n,m}(\boldsymbol{\xi})$, let $\tilde{\boldsymbol\xi} = (0,\tilde{\theta})$ and $\widehat{\boldsymbol\xi} = (\widehat{\delta},\widehat{\theta})$ denote the ML estimators under the reduced and full models, respectively.
Then, as $n,m\to \infty$,
the likelihood ratio statistic has a chi-square limit with 1 degree of freedom, namely
\[
-2\log\Lambda_{n,m}\equiv 2\left[l_{n,m}(\widehat{\boldsymbol{\xi}})-l_{n,m}(\tilde{\boldsymbol{\xi}})\right]\stackrel{d}{\rightarrow} \chi_1^2.
\]
%

\end{theorem}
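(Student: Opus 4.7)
The plan is to reduce this to a classical Wilks-type argument: identify the efficient score for the one-dimensional tested parameter $\delta$, apply a bivariate CLT to the joint score vector, and use a second-order Taylor expansion of the log-likelihood to show that $-2\log\Lambda_{n,m}$ equals a squared standardized efficient score plus asymptotically negligible terms.

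I would begin by parametrizing with $\boldsymbol{\xi}=(\delta,\theta)$ and true value $\boldsymbol{\xi}_0=(0,\theta_0)$. At $\boldsymbol{\xi}_0$, the score components of $l_{n,m}$ are
\[
S_\delta=\sum_{j=1}^m \frac{d\log f(Y_j;\theta_0)}{d\theta}, \qquad S_\theta=\sum_{i=1}^n \frac{d\log f(X_i;\theta_0)}{d\theta}+\sum_{j=1}^m \frac{d\log f(Y_j;\theta_0)}{d\theta},
\]
both mean zero under condition~(c). Because $\partial/\partial\delta=\partial/\partial\theta$ applied to $\log f(Y_j;\theta+\delta)$ at $\delta=0$, the joint Fisher information matrix is
\[
I_{n,m}=I(\theta_0)\begin{pmatrix} m & m \\ m & n+m \end{pmatrix},
\]
positive definite for $m,n\geq 1$. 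A bivariate CLT---applied to the independent sums $A=\sum_{i=1}^n s(X_i;\theta_0)$ and $B=\sum_{j=1}^m s(Y_j;\theta_0)$ with $s\equiv d\log f/d\theta$, noting $(S_\delta,S_\theta)=(B,A+B)$---then yields $I_{n,m}^{-1/2}(S_\delta,S_\theta)^{\top}\xrightarrow{d} N(0,I_2)$ as $n,m\to\infty$.

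Next I would extend the scalar Taylor-expansion argument of Theorem~\ref{theorem-1}'s proof to the bivariate parameter $\boldsymbol{\xi}$ under conditions~(a)-(d), obtaining consistency of both $\tilde\theta$ (constrained, $\delta=0$) and $\widehat{\boldsymbol{\xi}}$ (unconstrained), together with the usual local linearizations
\[
\tilde\theta-\theta_0=S_\theta/I_{\theta\theta}+o_p(\cdot), \qquad \widehat{\boldsymbol{\xi}}-\boldsymbol{\xi}_0=I_{n,m}^{-1}S+o_p(\cdot),
\]
where $I_{\theta\theta}=(n+m)I(\theta_0)$ and $S=(S_\delta,S_\theta)^{\top}$. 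Substituting into the second-order Taylor expansion of $l_{n,m}$ about $\boldsymbol{\xi}_0$ and simplifying via the Schur complement gives
\[
-2\log\Lambda_{n,m}=S^{\top}I_{n,m}^{-1}S-S_\theta^{2}/I_{\theta\theta}+o_p(1)=\frac{\bigl[S_\delta-\tfrac{m}{n+m}S_\theta\bigr]^{2}}{mn\,I(\theta_0)/(n+m)}+o_p(1),
\]
the numerator being the squared efficient score for $\delta$ and the denominator its exact variance. From the bivariate CLT and the continuous mapping theorem, the right-hand side converges in distribution to $\chi_1^{2}$.

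I expect the main technical obstacle to be the bivariate extension of Theorem~\ref{theorem-1}'s consistency and Taylor-expansion machinery: conditions~(a)-(d) are stated for scalar $\theta$, so the $\Delta_n$-type bounds must be redone using suprema of second partial derivatives over two-dimensional neighborhoods of $\boldsymbol{\xi}_0$, and joint consistency of $\widehat{\boldsymbol{\xi}}$ must be verified despite the coupling introduced by both samples sharing $\theta$---which is precisely what produces the Schur-complement structure in the limit. Once the linearizations and the normalized observed-information convergence (i.e., $-\{\text{Hessian of }l_{n,m}\}=I_{n,m}(1+o_p(1))$ componentwise, after appropriate scaling by $m$ or $n+m$) are in hand, the remainder---substitution into the expansion, Slutsky's theorem, and continuous mapping---is routine.
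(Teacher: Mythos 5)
Your proposal is correct in substance but follows a genuinely different route from the paper's. The paper normalizes the whole Hessian by $n+m$, which forces a subsequence argument in which $m/(n+m)\to c\in[0,1]$ and the limiting information matrix $I_{\theta_0}\left(\begin{smallmatrix}c&c\\ c&1\end{smallmatrix}\right)$ can be singular; it then treats three cases separately, with $0<c<1$ handled by an idempotent-projection quadratic form (essentially your efficient-score computation in disguise) and the boundary cases $c=0$ and $c=1$ handled by reducing to a one-dimensional statement $-2\log\Lambda_{n,m}=m\widehat{\delta}^2 I_{\theta_0}(1+o_p(1))$ with $\sqrt{m}\,\widehat{\delta}\stackrel{d}{\to}N(0,1/I_{\theta_0})$ (and symmetrically for $c=1$). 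Your approach keeps the exact finite-sample information $I_{n,m}=I(\theta_0)\left(\begin{smallmatrix}m&m\\ m&n+m\end{smallmatrix}\right)$, which is nonsingular for all $m,n\geq1$ (determinant $mn\,I(\theta_0)^2$), and reduces everything to the standardized efficient score $\bigl[S_\delta-\tfrac{m}{n+m}S_\theta\bigr]/\sqrt{mnI(\theta_0)/(n+m)}=\sqrt{n/(n+m)}\,B/\sqrt{mI}-\sqrt{m/(n+m)}\,A/\sqrt{nI}$, whose limit is $N(0,1)$ uniformly in how $m/n$ behaves; this buys a single unified argument with no case analysis and no subsequence extraction. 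The one point you should not call routine is the remainder control in the unbalanced regimes: componentwise statements like ``$-\{\text{Hessian}\}=I_{n,m}(1+o_p(1))$'' do not by themselves make the quadratic-form errors $o_p(1)$, because $\widehat{\delta}$ lives at scale $\sqrt{(n+m)/(mn)}$ rather than $(n+m)^{-1/2}$, and a generic error matrix with entries $o_p(m)$, $o_p(m)$, $o_p(n+m)$ contributes terms of order $o_p(1+m/n)$. What saves the argument is that the Hessian error has the special structure $\epsilon_m$ times the $Y$-block plus $\epsilon_n$ times the $X$-block of $I_{n,m}$, so that $I_{n,m}^{-1}\bigl(-H_{n,m}(\boldsymbol{\xi}^\ast)-I_{n,m}\bigr)=\left(\begin{smallmatrix}\epsilon_m&\epsilon_m-\epsilon_n\\ 0&\epsilon_n\end{smallmatrix}\right)=o_p(1)$ in operator norm; you need to carry out this sandwich-normalized computation explicitly (it is exactly the difficulty that drives the paper into its three-case analysis). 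With that step made explicit, and the score convergence justified as a triangular-array (Lindeberg) CLT since the normalizing coefficients depend on $(n,m)$, your argument is complete and arguably cleaner than the paper's.
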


\begin{proof} Let $\boldsymbol{\xi}_0=(0,\theta_0)$ denote the true parameter. We first provide some notation to describe partial derivatives of the log-likelihood function.  For the density $f(y;\theta+\delta)$ of $Y_i$ under the full model and letting $\eta=\theta+\delta$, note that
\[
\frac{\partial\log f(y;\theta+\delta)}{\partial\theta}=\frac{d\log f(y;\eta)}{d\eta}=\frac{\partial\log f(y;\theta+\delta)}{\partial\delta}, 
\]
and that
\begin{equation}\label{second-for-y}
\frac{\partial^2\log f(y;\theta+\delta)}{\partial\theta^2}=\frac{\partial^2\log f(y;\theta+\delta)}{\partial\theta\partial\delta}=\frac{\partial^2\log f(y;\theta+\delta)}{\partial\delta^2};
\end{equation}
for the density $f(x;\theta)$ of $X_i$ (under full or reduced models), we have
\begin{equation}\label{second-for-x}
\frac{\partial\log f(x;\theta)}{\partial\delta}= \frac{\partial^2\log f(x;\theta)}{d\delta^2}=\frac{\partial^2\log f(x;\theta)}{d\theta d\delta}=\frac{\partial^2\log f(x;\theta)}{ d\delta d\theta}=0.
\end{equation}
Then, the bivariate vector of first partial derivatives  of  the log-likelihood, at $\boldsymbol{\xi}=(\delta,\theta)$, is given by
\[
l_{n,m}^{\prime }(\boldsymbol{\xi}) \equiv \left[\begin{array}{l }
	\displaystyle{\sum_{j=1}^{m} \frac{\partial \log f(Y_j;\theta+\delta)}{\partial\delta }}  \\
	 \displaystyle{\sum_{i=1}^{n}\frac{\partial \log f(X_i;\theta)}{\partial\theta }+\sum_{j=1}^{m}\frac{\partial \log f(Y_i;\theta+\delta)}{\partial\theta }}
\end{array}\right],
\]
while,
by (\ref{second-for-y})-(\ref{second-for-x}), the second partial derivative matrix of the log-likelihood evaluated at $\boldsymbol{\xi}=(\delta,\theta)$, is given by
\[
l_{n,m}^{\prime\prime}(\boldsymbol{\xi})\equiv \left[\begin{array}{lcl}
	\displaystyle{\sum_{j=1}^{m} \frac{\partial^2\log f(Y_j;\theta+\delta)}{\partial\delta^2}} &&\displaystyle{ \sum_{j=1}^{m}\frac{\partial^2\log f(Y_j;\theta+\delta)}{\partial\theta\partial\delta} }\\
	\displaystyle{\sum_{j=1}^{m}\frac{\partial^2\log f(Y_j;\theta+\delta)}{\partial\theta\partial\delta} }& & \displaystyle{\sum_{i=1}^{n}\frac{\partial^2\log f(X_i;\theta)}{\partial\theta^2}+\sum_{j=1}^{m}\frac{\partial^2\log f(Y_i;\theta+\delta)}{\partial\theta^2}}
\end{array}\right].
\]
Furthermore, at the true parameter $\boldsymbol{\xi}_0=(0,\theta_0)$, we have
\[
\text{E}_{\theta_0} \frac{\partial\log f(X_1;\theta_0)}{\partial\theta} = \text{E}_{\theta_0} \frac{\partial\log f(Y_1;\theta_0)}{\partial\theta}=
\text{E}_{\theta_0} \frac{\partial\log f(Y_1;\theta_0)}{\partial\delta} = 0,
\]
while
\begin{eqnarray}
\label{second-equal}
   \text{E}_{\theta_0} \frac{\partial^2\log f(X_1;\theta_0)}{\partial\theta^2} =    \text{E}_{\theta_0} \frac{\partial^2\log f(Y_1;\theta_0)}{\partial\theta^2} &=&  \text{E}_{\theta_0} \frac{\partial^2\log f(Y_1;\theta_0)}{\partial\theta \partial \delta}\\
 \nonumber & =&  \text{E}_{\theta_0} \frac{\partial^2\log f(Y_1;\theta_0)}{\partial \delta \partial\theta} = -I_{\theta_0}
\end{eqnarray}
holds along with
\begin{equation}
\label{second-equal2}
\text{Var}_{\theta_0}\left( \frac{\partial\log f(X_1;\theta_0)}{\partial\theta} \right) = \text{Var}_{\theta_0}\left( \frac{\partial\log f(Y_1;\theta_0)}{\partial\theta} \right)= \text{Var}_{\theta_0}\left(  \frac{\partial\log f(Y_1;\theta_0)}{\partial\delta} \right)=I_{\theta_0}
\end{equation}
  for an information number $I_{\theta_0}\in (0,\infty)$.

 Under the conditions, both $\tilde{\boldsymbol\xi} \stackrel{p}{\rightarrow} \boldsymbol{\xi}_0$ and $\widehat{\boldsymbol\xi} \stackrel{p}{\rightarrow} \boldsymbol{\xi}_0$ hold as $n,m\to \infty$
  so that estimators lie in a neighborhood of $\boldsymbol{\xi}_0=(0,\theta_0)$ (for large $m,n$) and the log-likelihood $l_{n,m}(\boldsymbol{\xi})$
is twice continuously differentiable in this parameter neighborhood.
To determine the distributional limit of the log-likelihood ratio statistic
\[
-2\log\Lambda_{n,m}=2\left[l_{n,m}(\widehat{\boldsymbol\xi})-l_{n,m}(\tilde{\boldsymbol\xi})\right],
\]
we may expand $l_{n,m}(\tilde{\boldsymbol\xi})$ at $\widehat{\boldsymbol\xi}$ to find
\[
l_{n,m}(\tilde{\boldsymbol\xi})=l_{n,m}(\widehat{\boldsymbol\xi})+l_{n,m}^\prime(\widehat{\boldsymbol\xi})(\tilde{\boldsymbol\xi}-\widehat{\boldsymbol\xi})+\frac{1}{2}(\tilde{\boldsymbol\xi}-\widehat{\boldsymbol\xi})^Tl_{n,m}^{\prime\prime}(\boldsymbol\xi^\ast)(\tilde{\boldsymbol\xi}-\widehat{\boldsymbol\xi}),
\]
where $\boldsymbol\xi^\ast$ is between $\tilde{\boldsymbol\xi}$ and $\widehat{\boldsymbol\xi}$.
Because $l_{n,m}^\prime(\widehat{\boldsymbol\xi})=\boldsymbol{0}\equiv (0,0)^{T}$ holds for the maximizer $\widehat{\boldsymbol\xi}$, the likelihood ratio statistic may be written as
\begin{equation}\label{eq-lik-ratio}
-2\log\Lambda_{n,m}=(n+m)(\tilde{\boldsymbol\xi}-\widehat{\boldsymbol\xi})^T\left[-\frac{l_{n,m}^{\prime\prime}(\boldsymbol\xi^\ast)}{n+m}\right](\tilde{\boldsymbol\xi}-\widehat{\boldsymbol\xi}).
\end{equation}

Let
$\{(n_j,m_j)\}$ denote an arbitrary subsequence of $\{(n,m)\}$.   Then, there exists a further subsequence $\{(n_k,m_k)\}\subset \{(n_j,m_j)\}$ and a value $c\in [0,1]$ such that $m_k/(m_k+n_k)\rightarrow c$ as $k \to \infty$, due to the boundedness of $\{m/(n+m)\}$.
 To show the chi-square limit of the log-likelihood ratio statistic (\ref{eq-lik-ratio}), it suffices to establish   that $-2\log\Lambda_{n_k,m_k} \stackrel{d}{\rightarrow} \chi_1^2$ holds.  We establish this by considering three cases $c\in (0,1)$, $c=0$,  or $c=1$.   For simplicity in the following, we will suppress the subsequence notation and consider sample sizes ``$(n,m)$" in place of ``$(n_k,m_k)$."
By using the smoothness conditions and the law of large numbers along with (\ref{second-equal}) and $m/(m+n)\to c\in [0,1]$ in (\ref{eq-lik-ratio}), we have
\begin{equation}\label{eq-fisher-inf}
-\frac{l_{n,m}^{\prime\prime}(\boldsymbol\xi^\ast)}{n+m}\xrightarrow{p} \begin{bmatrix}
c & c\\
c & 1
\end{bmatrix} I_{\theta_0}   \equiv \boldsymbol{I}(\boldsymbol \xi_0).
\end{equation}
We next discuss the cases where (i) $0<c<1$; (ii) $c=0$; or (iii) $c=1$.

\noindent\textbf{(i)} When $0<c<1$, note that the matrix $\boldsymbol{I}(\boldsymbol \xi_0)$ is invertible.
From (\ref{eq-lik-ratio}) and (\ref{eq-fisher-inf}), the limit distribution of the log-likelihood statistic will follow from the distribution of
$\sqrt{n+m}(\Tilde{\boldsymbol\xi}-\widehat{\boldsymbol\xi})$.  To determine the latter,
we start by expanding $l^\prime_{n,m}(\Tilde{\boldsymbol\xi})$ at $\widehat{\boldsymbol\xi}$ to find
\begin{equation}\label{eq-expand-2}
    \frac{1}{\sqrt{n+m}}l^\prime_{n,m}(\Tilde{\boldsymbol\xi})=\frac{1}{\sqrt{n+m}}l^\prime_{n,m}(\widehat{\boldsymbol\xi})+\frac{l_{n,m}^{\prime\prime}(\boldsymbol\xi^{\ast}_2)}{n+m}\sqrt{n+m}(\Tilde{\boldsymbol\xi}-\widehat{\boldsymbol\xi}),
\end{equation}
where $\boldsymbol{\xi}^{\ast}_2$ denotes some value between $\tilde{\boldsymbol{\xi}}$ and $\widehat{\boldsymbol{\xi}}$.
Because $l^\prime_{n,m}(\widehat{\boldsymbol{\xi}})=\textbf0$ holds and because $[-{l_{n,m}^{\prime\prime}(\boldsymbol\xi^{\ast}_2)}/{(n+m)}]^{-1} \xrightarrow{p} [\boldsymbol{I}(\boldsymbol \xi_0)]^{-1}$ as in (\ref{eq-fisher-inf}), we may find the limit distribution of $l^\prime_{n,m}(\Tilde{\boldsymbol\xi})/\sqrt{n+m}$ to determine the distribution of $\sqrt{n+m}(\Tilde{\boldsymbol\xi}-\widehat{\boldsymbol\xi})$ in (\ref{eq-expand-2}).

We define a matrix $H$ as
\[
H=\begin{bmatrix}
0 & 0\\
0 & 1/I_{\theta_0}
\end{bmatrix}.
\]
so that
\[
Hl_{n,m}^{\prime}(\tilde{\boldsymbol{\xi}})=\textbf{0}
\]
holds because the second element of $l_{n,m}^\ast(\Tilde{\boldsymbol\xi})$ is 0 (maximizing the likelihood with respect to $\theta$ when $\delta=0$).
We then expand $l^\prime_{n,m}(\Tilde{\boldsymbol\xi})/\sqrt{n+m}$  at $\boldsymbol\xi_0$ to find
\begin{equation}\label{eq:expand-1}
\frac{l_{n,m}^\prime(\Tilde{\boldsymbol\xi})}{\sqrt{n+m}}=\frac{l_{n,m}^\prime(\boldsymbol\xi_0)}{\sqrt{n+m}}+\frac{l_{n,m}^{\prime\prime}(\boldsymbol\xi^{\ast}_3)}{n+m}\sqrt{n+m}(\tilde{\boldsymbol\xi}-\boldsymbol\xi_0),
\end{equation}
where $\boldsymbol{\xi}^{\ast}_3$ is some value between $\tilde{\boldsymbol{\xi}}$ and $\boldsymbol{\xi}$.
By multiplying (\ref{eq:expand-1}) with $H$, we have
\[
H\frac{l_{n,m}^\prime(\Tilde{\boldsymbol\xi})}{\sqrt{n+m}}=H\frac{l_{n,m}^\prime(\boldsymbol\xi_0)}{\sqrt{n+m}}+H\frac{l_{n,m}^{\prime\prime}
(\boldsymbol\xi^{\ast}_3)}{n+m}\sqrt{n+m}(\tilde{\boldsymbol{\xi}}-\boldsymbol{\xi}_0)=\boldsymbol0;
\]
 using $\|-l_{n,m}^\prime(\boldsymbol\xi_3^*)/(n+m) -\boldsymbol{I}(\boldsymbol \xi_0)\| =o_p(1)
$  as in
(\ref{eq-fisher-inf}) along with $\|\tilde{\boldsymbol{\xi}} -\boldsymbol{\xi}_0\|=\|(0,\tilde{\theta}) - (0,\theta_0)\|=|\tilde{\theta}  - \theta_0| = O_p((n+m)^{-1/2})$ (this order following from the CLT applied to the second entry in (\ref{eq:expand-1})), we may further
re-write as
\begin{eqnarray}
\label{eq-whatever-key}	
H\frac{l_{n,m}^\prime(\boldsymbol\xi_0)}{\sqrt{n+m}} + \boldsymbol{R}_{n,m}  &
=&H \boldsymbol{I}(\boldsymbol \xi_0) \sqrt{n+m}(\tilde{\boldsymbol{\xi}}-\boldsymbol{\xi}_0)
\\
\nonumber &=& \sqrt{n+m}H \boldsymbol{I}(\boldsymbol \xi_0)  (\tilde{\boldsymbol{\xi}}-\boldsymbol{\xi}_0) \\
\nonumber &=&
\sqrt{n+m}\begin{bmatrix}
	0 & 0\\
	0 & 1/I_{\theta_0}
\end{bmatrix}
I_{\theta_0}
\begin{bmatrix}
	c & c\\
	c & 1
\end{bmatrix}
\begin{bmatrix}
	0\\
	\tilde{\theta}-\theta_0
\end{bmatrix}\\
\nonumber&=&\sqrt{n+m}(\tilde{\boldsymbol{\xi}}-\boldsymbol{\xi}_0),
\end{eqnarray}
where bivariate $\boldsymbol{R}_{n,m}$ denotes a remainder term $\|\boldsymbol{R}_{n,m}\|=o_p(1)$.
By replacing this form  (\ref{eq-whatever-key})  of $\sqrt{n+m}(\tilde{\boldsymbol{\xi}}-\boldsymbol{\xi}_0)$ in  (\ref{eq:expand-1})
and using $\|-l_{n,m}^\prime(\boldsymbol\xi_3^*)/(n+m) - \boldsymbol{I}(\boldsymbol \xi_0) \|=o_p(1)$ again,
we have
\begin{equation}\label{eqn:end2}
\frac{l_{n,m}^\prime(\Tilde{\boldsymbol\xi})}{\sqrt{n+m}}=\frac{\textbf{I}-\boldsymbol{I}(\boldsymbol\xi_0)H}{\sqrt{n+m}}
l_{n,m}^\prime(\boldsymbol\xi_0)+
  \tilde{\boldsymbol{R}}_{n,m}
\end{equation}
where $\textbf{I}$ is a $2\times2$ identity matrix and   $\tilde{\boldsymbol{R}}_{n,m}$ is a bivariate remainder term $\|\tilde{\boldsymbol{R}}_{n,m}\|=o_p(1)$.
Using the standard CLT with (\ref{second-equal2}), we have that
\[
\frac{l^\prime_{n,m}(\boldsymbol\xi_0)}{\sqrt{n+m}}=\sqrt{n+m}
\left(\frac{l^\prime_{n,m}(\boldsymbol\xi_0)}{n+m}-\boldsymbol0\right)\xrightarrow{d}W \sim \text{MVN}\left(\boldsymbol0,{\boldsymbol{I}(\boldsymbol\xi_0)}\right)
\]
as $n,m\to\infty$ due to independence and $m/(n+m)\to c$, where $W$ has a bivariate normal distribution; the continuous mapping theorem with (\ref{eqn:end2}) then gives
\begin{equation}\label{eq-almost-there}
\frac{l_{n,m}^\prime(\tilde{\boldsymbol{\xi}})}{\sqrt{n+m}}\xrightarrow{d}\left[\textbf{I}-\boldsymbol{I}(\boldsymbol\xi_0)H\right]W.
\end{equation} 
Then, by applying  (\ref{eq-almost-there}) with $\|-l_{n,m}^\prime(\boldsymbol\xi_2^*)/(n+m) -\boldsymbol{I}(\boldsymbol \xi_0)\| =o_p(1)
$    in (\ref{eq-expand-2}),    the limit distribution of $\sqrt{n+m}(\Tilde{\boldsymbol\xi}-\widehat{\boldsymbol\xi})$ follows as
\[
\sqrt{n+m}(\Tilde{\boldsymbol\xi}-\widehat{\boldsymbol\xi})\xrightarrow{d}-[\boldsymbol{I}(\boldsymbol\xi_0)]^{-1}[\textbf{I}-\boldsymbol{I}(\boldsymbol\xi_0)H]W,
\] 
so that  the log-likelihood ratio statistic in (\ref{eq-lik-ratio}) has an asymptotic distribution as
\[
\begin{split}
-2\log\Lambda_{n,m}&\xrightarrow{d} W^T[\textbf{I}-\boldsymbol{I}(\boldsymbol\xi_0)H]^T [\boldsymbol{I}(\boldsymbol\xi_0)]^{-1}[\textbf{I}-\boldsymbol{I}(\boldsymbol\xi_0)H]W\\
&=Z^T (\textbf{I} -[\boldsymbol{I}(\boldsymbol\xi_0)]^{1/2} H [\boldsymbol{I}(\boldsymbol\xi_0)]^{1/2}) Z \sim \chi_1^2
\end{split}
\]
using that $Z\sim\text{MVN}(0,\textbf{I})$ and  that $\textbf{I} -[\boldsymbol{I}(\boldsymbol\xi_0)]^{1/2} H [\boldsymbol{I}(\boldsymbol\xi_0)]^{1/2}$ is an idempotent matrix of rank/trace $1$.

\noindent\textbf{(ii)} When $c=0$, we use a similar argument to find
 $\|\tilde{\boldsymbol{\xi}} -\boldsymbol{\xi}_0\|=\|(0,\tilde{\theta}) - (0,\theta_0)\|=|\tilde{\theta}  - \theta_0| = O_p((n+m)^{-1/2})$ (following from the CLT and law of large numbers applied to the second entry in (\ref{eq:expand-1})); note this implies
   $\sqrt{m}\|\tilde{\boldsymbol{\xi}} -\boldsymbol{\xi}_0\| = o_p(1)$ since $m/(n+m)\rightarrow c=0$.  Now we multiply 
 (\ref{eq:expand-1}) by $\sqrt{n+m}/\sqrt{m}$ and consider  the first components, say   $l_{n,m}^{\prime(1)}(\Tilde{\boldsymbol\xi})$ and $l_{n,m}^{\prime(1)}(  \boldsymbol{\xi}_0)$, of
  $l_{n,m}^{\prime }(\Tilde{\boldsymbol\xi})$ and $l_{n,m}^{\prime}(  \boldsymbol{\xi}_0)$ in (\ref{eq:expand-1}), respectively,
  along with the first row, say  $l_{n,m}^{\prime\prime (1) }(\boldsymbol{\xi}^*_3)$, of   $l_{n,m}^{\prime\prime   }(\boldsymbol{\xi}^*_3)$
  in (\ref{eq:expand-1});  we then have from (\ref{eq:expand-1}) that
 \begin{eqnarray}
\nonumber   \frac{l_{n,m}^{\prime(1)}(\Tilde{\boldsymbol\xi})}{\sqrt{m}} &=&    \frac{l_{n,m}^{\prime(1)}(  \boldsymbol{\xi}_0)}{\sqrt{m}} +
   \frac{l_{n,m}^{\prime\prime (1) }(\boldsymbol{\xi}^*_3)}{m} \sqrt{m}(\tilde{\boldsymbol{\xi}} -\boldsymbol{\xi}_0)\\
  \label{eqn:gr2} & \stackrel{d}{\rightarrow}& N(0,I_{\theta_0})
  \end{eqnarray}
  as $n,m\to \infty$, using above that $l_{n,m}^{\prime(1)}(  \boldsymbol{\xi}_0)/\sqrt{m} \stackrel{d}{\rightarrow} N(0,I_{\theta_0})$
  by the standard CLT along with  $ l_{n,m}^{\prime\prime (1) }(\boldsymbol{\xi}^*_3)/m \stackrel{p}{\rightarrow} (I_{\theta_0},I_{\theta_0})$
  by the law of large numbers and with $\sqrt{m}\|\tilde{\boldsymbol{\xi}} -\boldsymbol{\xi}_0\| = o_p(1)$.
  We now write the $2\times 2$ matrix $l_{n,m}^{\prime\prime   }(\boldsymbol{\xi}^*_2)$ in (\ref{eq-expand-2}) as 
  \[
  l_{n,m}^{\prime\prime   }(\boldsymbol{\xi}^*_2) = \left[\begin{array}{lcl}
    a_{n,m,1} && a_{n,m,2}\\
    a_{n,m,2} && a_{n,m,3}
  \end{array}\right],
  \]
  noting that the component sample averages satisfy (for the information number $I_{\theta_0}\in (0,\infty)$)
    \begin{equation}
  \label{eqn:gr}
   \frac{a_{n,m,1}}{m}\stackrel{p}{\rightarrow}-I_{\theta_0}, \qquad \frac{a_{n,m,2}}{m} \stackrel{p}{\rightarrow}-I_{\theta_0}, \qquad
  \frac{a_{n,m,3}}{n+m} \stackrel{p}{\rightarrow}-I_{\theta_0}
   \end{equation} (i.e., as $a_{n,m,3}$ involves a sum of $n+m$ terms while $ a_{n,m,1}, a_{n,m,2}$ are sums of $m$ terms).  Recalling that $(\Tilde{\boldsymbol\xi}-\widehat{\boldsymbol\xi}) = ( -\widehat{\delta}, \tilde{\theta}-\widehat{\theta})$, the second component of (\ref{eq-expand-2}) entails that $\tilde{\theta}-\widehat{\theta} =  \widehat{\delta} a_{n,m,2}/a_{n,m,3}$; substitution of this quantity into
  the first component of (\ref{eq-expand-2}) then gives that 
  \begin{eqnarray*}
   \frac{l_{n,m}^{\prime(1)}(\Tilde{\boldsymbol\xi})}{\sqrt{m}} &=& \frac{a_{n,m,2}}{m}  \sqrt{m}(\tilde{\theta}-\widehat{\theta}) -\frac{a_{n,m,1}}{m} \sqrt{m}\widehat{\delta}\\
   &=& \left(\frac{a_{n,m,2}}{m}  \frac{a_{n,m,2}}{a_{n,m,3}} -\frac{a_{n,m,1}}{m} \right) \sqrt{m}\widehat{\delta}.
  \end{eqnarray*}
  From this with (\ref{eqn:gr}) and $m/(n+m)\to 0$, we have 
  \begin{eqnarray*}
 \frac{a_{n,m,2}}{m}  \frac{a_{n,m,2} }{a_{n,m,3} } -\frac{a_{n,m,1}}{m} & = &\frac{m}{n+m} \frac{a_{n,m,2}}{m}  \frac{a_{n,m,2}/m}{a_{n,m,3}/(n+m)} -\frac{a_{n,m,1}}{m} \\&=& o_p(1)-\frac{a_{n,m,1}}{m} = I_{\theta_0}(1+o_p(1))
  \end{eqnarray*}
   and then that
    \begin{equation}
  \label{eqn:gr3}
 \sqrt{m}\widehat{\delta} \stackrel{d}{\rightarrow} N(0,1/I_\theta) 
   \end{equation}
  by (\ref{eqn:gr2}) and Slutsky's theorem.
  Now similarly write the $2\times 2$ matrix $l_{n,m}^{\prime\prime   }(\boldsymbol{\xi}^*)$ in (\ref{eq-lik-ratio}) as
   \[
  l_{n,m}^{\prime\prime   }(\boldsymbol{\xi}^*_2) =  \left[\begin{array}{lcl}
    \tilde{a}_{n,m,1} && \tilde{a}_{n,m,2}\\
    \tilde{a}_{n,m,2} && \tilde{a}_{n,m,3}
  \end{array}\right],
  \]
 where  (\ref{eqn:gr}) likewise  holds for the counterpart sample averages $ \tilde{a}_{n,m,1}/m$, $\tilde{a}_{n,m,2}/m$, $\tilde{a}_{n,m,3}/(n+m)$;
  then re-writing (\ref{eq-lik-ratio}) in terms of  $(\Tilde{\boldsymbol\xi}-\widehat{\boldsymbol\xi}) = ( -\widehat{\delta}, \tilde{\theta}-\widehat{\theta})$, where $\tilde{\theta}-\widehat{\theta} =  \widehat{\delta} a_{n,m,2}/a_{n,m,3}$ again, we have
  \begin{eqnarray*}
 -2\log\Lambda_{n,m}&=& - m(\tilde{\boldsymbol\xi}-\widehat{\boldsymbol\xi})^T\left[ \frac{l_{n,m}^{\prime\prime}(\boldsymbol\xi^\ast)}{m}\right](\tilde{\boldsymbol\xi}-\widehat{\boldsymbol\xi}).
  \\&=&
  -m (\widehat{\delta})^2 \left[\begin{array}{cc}
  -1 & \frac{a_{n,m,2}}{a_{n,m,3}}
\end{array}
   \right]\left[\begin{array}{lcl}
    \tilde{a}_{n,m,1}/m && \tilde{a}_{n,m,2}/m\\
    \tilde{a}_{n,m,2}/m && \tilde{a}_{n,m,3}/m
  \end{array}\right] \left[\begin{array}{c}
  -1 \\ \frac{a_{n,m,2}}{a_{n,m,3}}
\end{array}
  \right]
  \end{eqnarray*}
in matrix form.
Note that $a_{n,m,2}/a_{n,m,3} = m/(m+n)\cdot (a_{n,m,2}/m)/(a_{n,m,3}/(n+m))= o_p(1)$ while 
\[
\frac{\tilde{a}_{n,m,3}}{m} \frac{a_{n,m,2}}{a_{n,m,3}}= \frac{a_{n,m,2}}{m} \frac{\tilde{a}_{n,m,3}/(n+m)}{a_{n,m,3}/(n+m)}\stackrel{p}{\rightarrow}  -I_{\theta_0},
\] 
and $ \tilde{a}_{n,m,1}/m  \stackrel{p}{\rightarrow}  -I_{\theta_0} $, $ \tilde{a}_{n,m,2}/m  \stackrel{p}{\rightarrow}  -I_{\theta_0} $
as in (\ref{eqn:gr}); consequently, we then have
\[-2\log\Lambda_{n,m} = m(\widehat{\delta})^2  I_{\theta_0}(1+o_p(1)) \stackrel{d}{\rightarrow} \chi_1^2, 
  \]
 by (\ref{eqn:gr3}) and the continuous mapping theorem (i.e.,  $\sqrt{I_{\theta_0}}\sqrt{m} \widehat{\delta}$ has a standard normal limit).

 \noindent\textbf{(iii)} The argument when $c=1$ is the same of that for $c=0$ upon reversing the roles of $X_1,\ldots,X_n$
 and $Y_1,\ldots,Y_{m}$; that is, we construct the log-likelihood ratio statistic assuming a full model with $X_1,\ldots,X_n \sim f(\cdot;\delta+\theta)$ and $Y_1,\ldots,Y_m \sim f(\cdot; \theta)$ and with a reduced model where $X_1,\ldots,X_n, Y_1,\ldots,Y_m \sim f(\cdot; \theta)$.  This does not change the likelihood ratio statistic, though the proof when $c=0$ (or $m/(n+m)\rightarrow 0$) now applies
 for the case $c=1$    with the roles of sample sizes $n,m$   reversed (i.e.,   $n/(n+m)\rightarrow 0$ holds in the $m/(n+m)\rightarrow c=1$ case).
\end{proof}
\noindent\textbf{Remark~3}.  In support of Section~6.4 of the main manuscript, here we provide   details regarding how Theorem~\ref{theorem-chi-square-1}
applies for establishing the chi-square limit (i.e., namely $\chi_1^2$) of the log-likelihood ratio statistic
for prediction in some important discrete data cases.  

The binomial prediction problem from Section~6.1 involves the 
prediction of $Y \sim \mbox{Binom}(m,p)$ from available data $X   \sim \mbox{Binom}(n,p)$, where $n,m$ are integers and $p \in (0,1)$.  The log-likelihood statistic $-2 \log \Lambda_{m,n}$ based on $(X,Y)$ from Section~6.1 is the same as the log-likelihood statistic $-2 \log \Lambda_{m,n}$ from Theorem~\ref{theorem-chi-square-1},
when the latter is based on iid $X_1,\ldots,X_{n},Y_1,\ldots,Y_m\sim\mbox{Binom}(1,p)$ (i.e., involving a comparison of full and reduced models
as $X_1,\ldots,X_{n}\sim\mbox{Binom}(m,p)$ with $Y_1,\ldots,Y_{m}\sim\mbox{Binom}(m,p_1 = p+\delta)$ for the full model vs.~$X_1,\ldots,X_{n},Y_1,\ldots,Y_m\sim\mbox{Binom}(1,p)$ for the reduced model).
Consequently, as $m,n\to \infty$, the limit of the log-likelihood statistic $-2 \log \Lambda_{m,n} \stackrel{d}{\rightarrow}\chi_1^2$ follows from 
Theorem~\ref{theorem-chi-square-1}.
The equivalence of likelihood-statistics owes to the fact that $(X,Y)$ has the same distribution as $(\sum_{i=1}^n X_i, \sum_{i=1}^m Y_i)$ here.

Similarly, the Poisson prediction problem from Section~6.2 involves the
prediction of $Y \sim \mbox{Poi}(m\lambda)$ from available data $X \sim \mbox{Poi}(n\lambda)$, where $n,m$ are integers and $\lambda>0$. The log-likelihood statistic $-2 \log \Lambda_{m,n}$ given from $(X,Y)$ in Section~6.2 is likewise the same as the log-likelihood statistic $-2 \log \Lambda_{m,n}$ from Theorem~\ref{theorem-chi-square-1},
when the latter is based on iid $X_1,\ldots,X_{n},Y_1,\ldots,Y_m\sim\mbox{Poi}(\lambda)$.    Hence, in the Poisson case, the limit of   $-2 \log \Lambda_{m,n} \stackrel{d}{\rightarrow}\chi_1^2$ again follows from
Theorem~\ref{theorem-chi-square-1}  as $m,n\to \infty$.   Note here  $(X,Y)$ has again the same distribution as $(\sum_{i=1}^n X_i, \sum_{i=1}^m Y_i)$.

The within-sample prediction problem of Section~6.3 involves predicting a binomial count $Y \equiv \sum_{i=1}^n I(T_i \in (t_c, t_w])$ based on
event time random variables $T_1,\ldots,T_n$.  Here there are two counts (i.e., discrete random variables) involving 
the observed number, say $X \equiv \sum_{i=1}^n I(T_i \leq t_c)$, of times occurring before a censoring point $t_c$ along with the number of times $Y \equiv \sum_{i=1}^n I(T_i \in (t_c, t_w])$ occurring in a future interval $(t_c,t_w]$.  (Technically, the value of $T_i$ is also assumed to be available whenever  $T_i \leq t_c$ occurs.)  The structure of this prediction problem is similar to the binomial prediction case, 
though the counts $(X,Y)$ here are multinomial (instead of two independent binomial variables).  The proof of the chi-square limit of the log-likelihood statistic $-2 \log \Lambda_{m,n} \stackrel{d}{\rightarrow}\chi_1^2$ follows with an argument similar to that of  Theorem~\ref{theorem-chi-square-1}.

\newpage
\section{Simulation Results}
\label{sec:simulation-results}

This section provides simulation results for other factor combinations in Section~6.3.2.
\begin{figure}[ht]
	\centering
	\includegraphics[width=\textwidth]{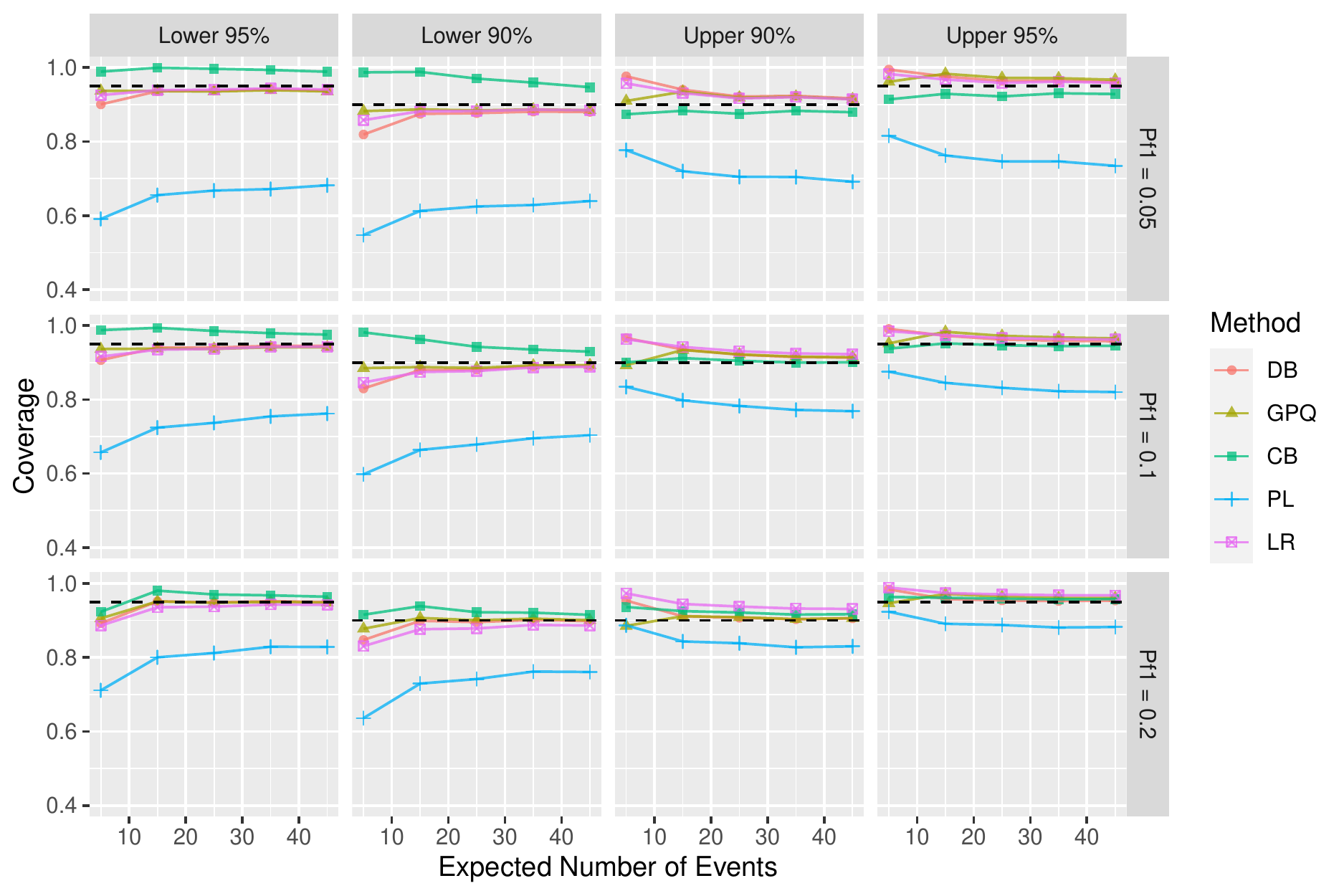}
	\caption{Coverage probabilities versus expected number of events (failures) for the direct-bootstrap (DB), GPQ-bootstrap (GPQ), calibration-bootstrap (CB), LR (LR), and plug-in (PL) methods when $d=0.1$ and $\beta=0.8$.}
	\label{fig:within-sample-pred-1123123}
\end{figure}
\begin{figure}[ht]
	\centering
	\includegraphics[width=\textwidth]{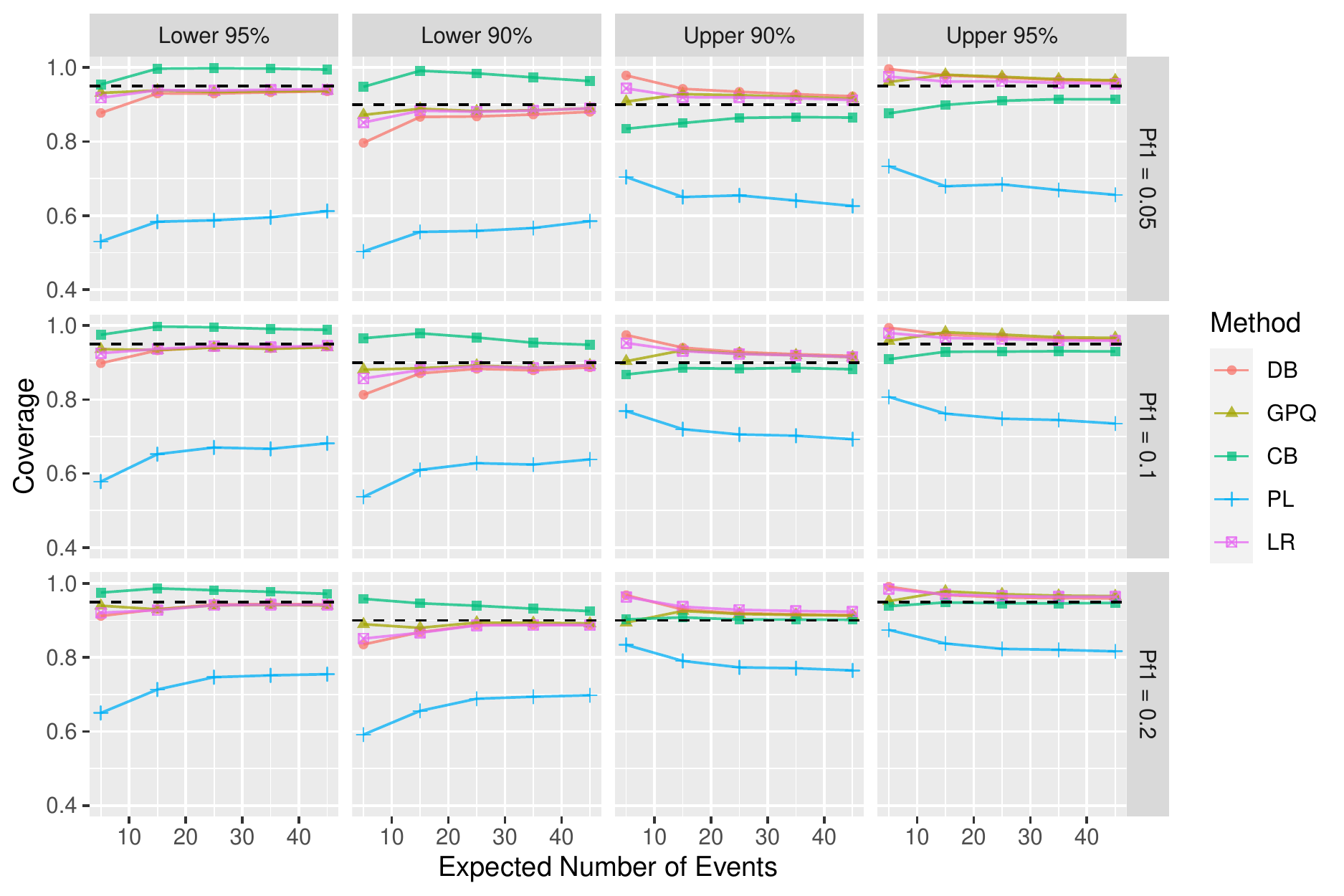}
	\caption{Coverage probabilities versus expected number of events (failures) for the direct-bootstrap (DB), GPQ-bootstrap (GPQ), calibration-bootstrap (CB), LR (LR), and plug-in (PL) methods when $d=0.2$ and $\beta=0.8$.}
	\label{fig:within-sample-pred-14341}
\end{figure}
\begin{figure}[t!]
	\centering
	\includegraphics[width=\textwidth]{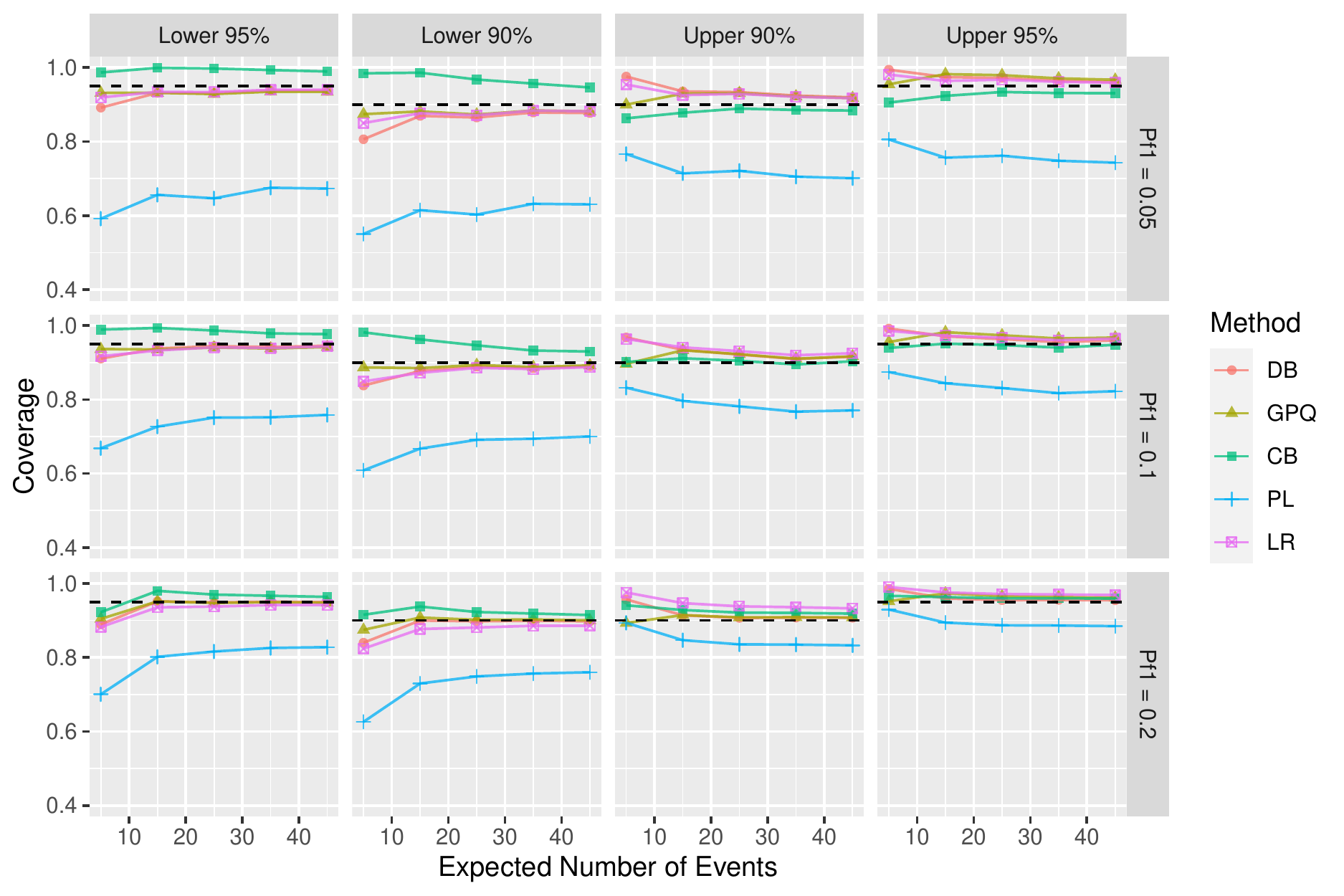}
	\caption{Coverage probabilities versus expected number of events (failures) for the direct-bootstrap (DB), GPQ-bootstrap (GPQ), calibration-bootstrap (CB), LR (LR), and plug-in (PL) methods when $d=0.1$ and $\beta=1$.}
	\label{fig:within-sample-pred-1434}
\end{figure}
\begin{figure}[t!]
	\centering
	\includegraphics[width=\textwidth]{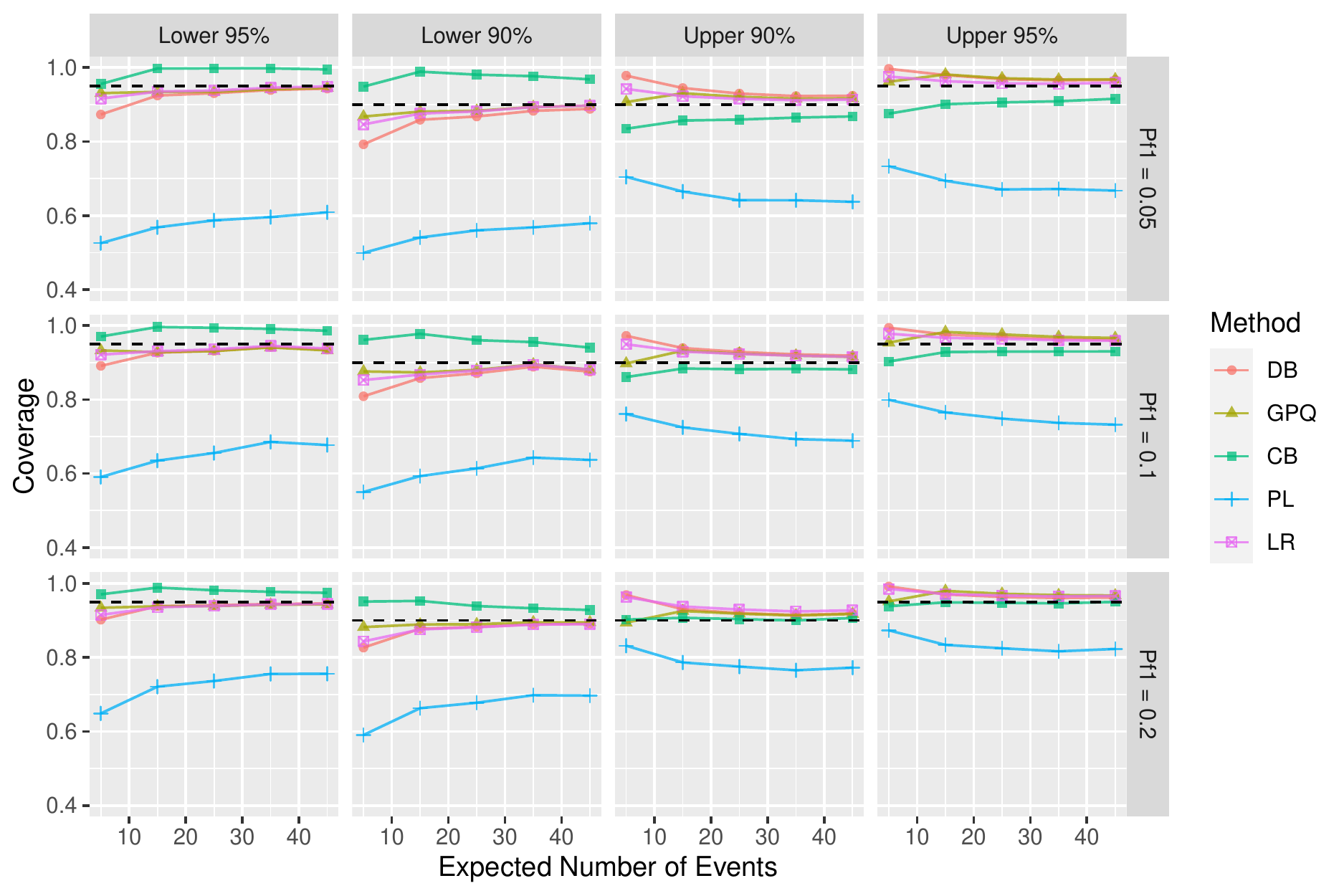}
	\caption{Coverage probabilities versus expected number of events (failures) for the direct-bootstrap (DB), GPQ-bootstrap (GPQ), calibration-bootstrap (CB), LR (LR), and plug-in (PL) methods when $d=0.2$ and $\beta=1$.}
	\label{fig:within-sample-pred-121}
\end{figure}
\begin{figure}[t!]
	\centering
	\includegraphics[width=\textwidth]{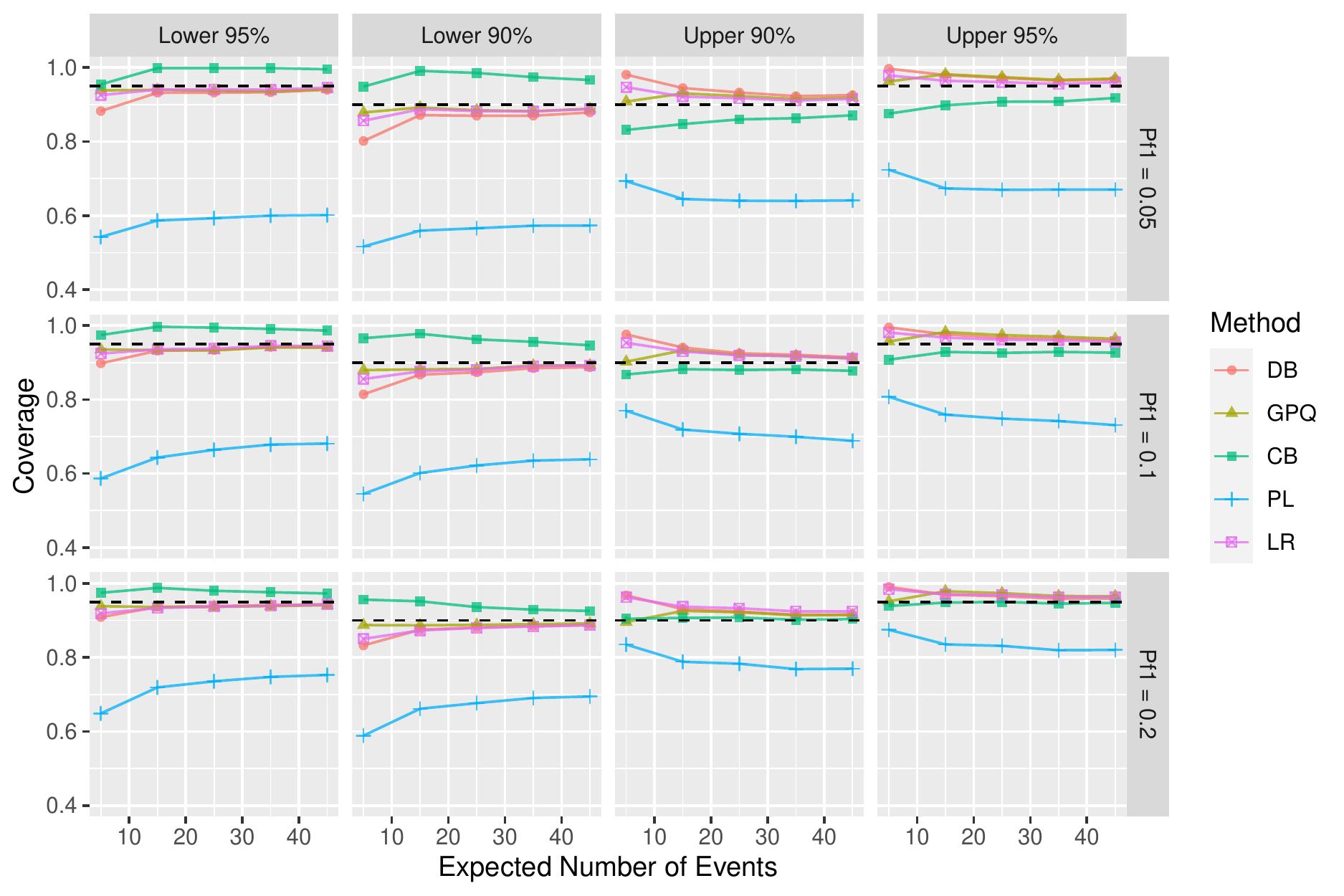}
	\caption{Coverage probabilities versus expected number of events (failures) for the direct-bootstrap (DB), GPQ-bootstrap (GPQ), calibration-bootstrap (CB), LR (LR), and plug-in (PL) methods when $d=0.2$ and $\beta=2$.}
	\label{fig:within-sample-pred-341}
\end{figure}
\begin{figure}[t!]
	\centering
	\includegraphics[width=\textwidth]{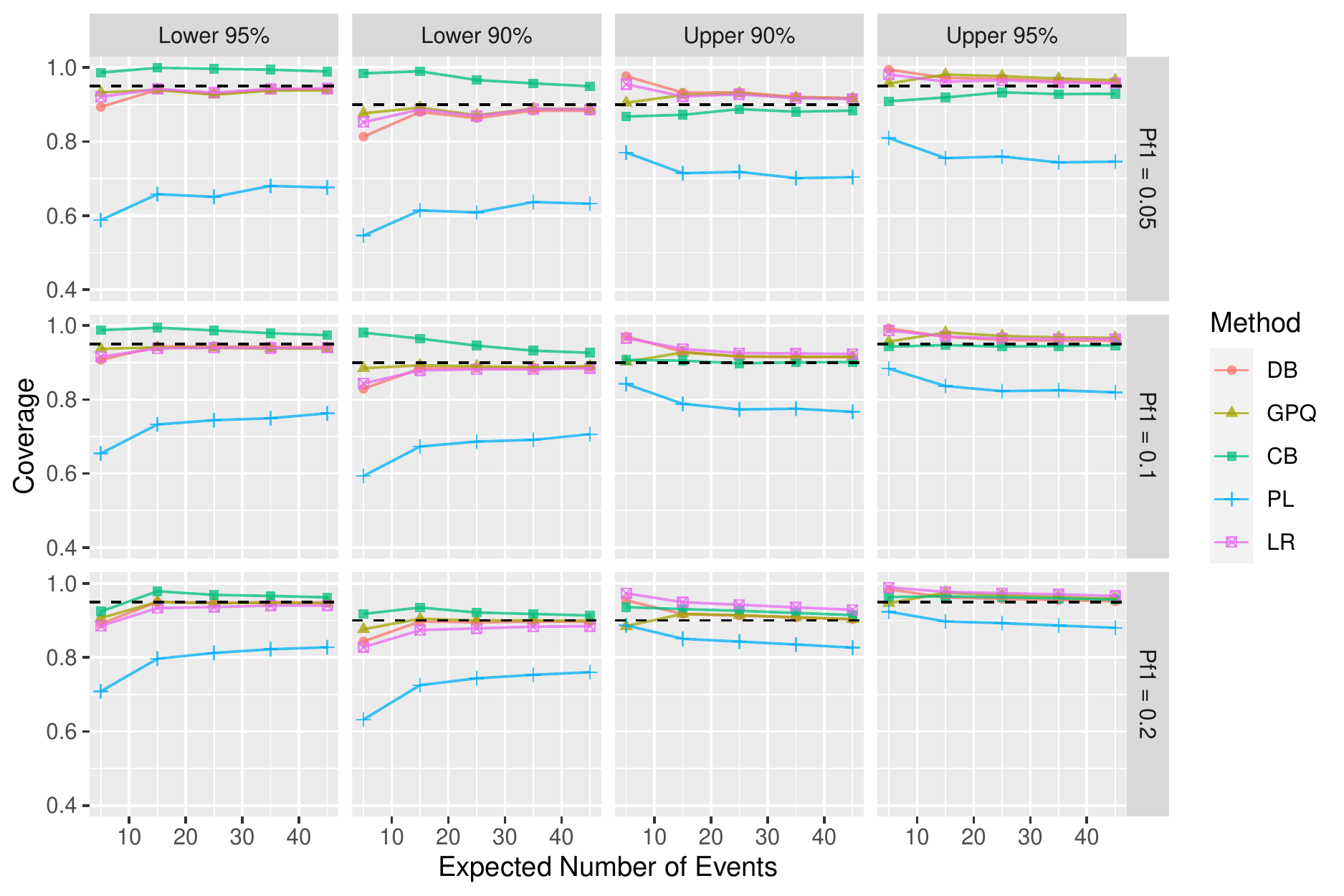}
	\caption{Coverage probabilities versus expected number of events (failures) for the direct-bootstrap (DB), GPQ-bootstrap (GPQ), calibration-bootstrap (CB), LR (LR), and plug-in (PL) methods when $d=0.1$ and $\beta=4$.}
	\label{fig:within-sample-pred-11}
\end{figure}
\begin{figure}[t!]
	\centering
	\includegraphics[width=\textwidth]{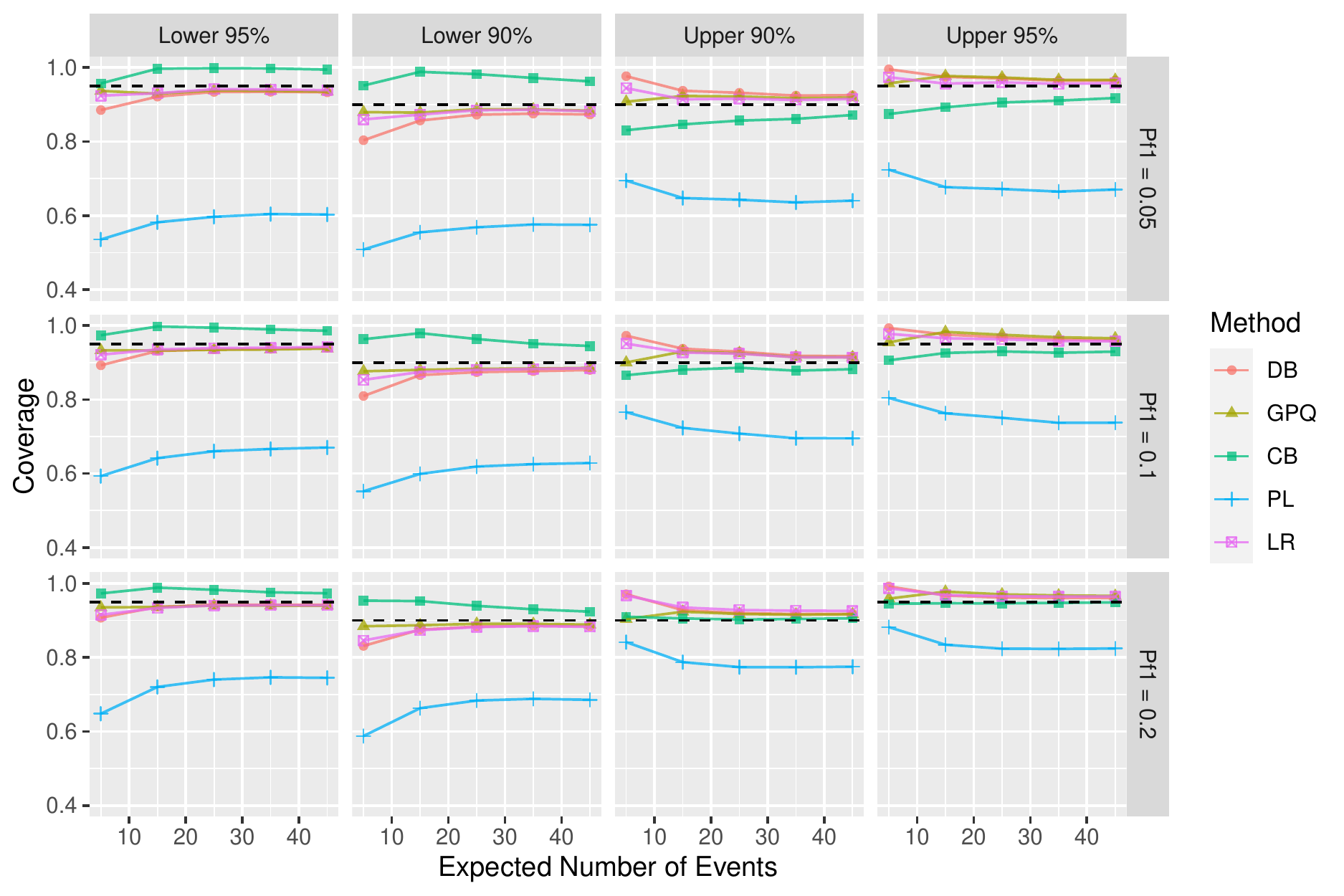}
	\caption{Coverage probabilities versus expected number of events (failures) for the direct-bootstrap (DB), GPQ-bootstrap (GPQ), calibration-bootstrap (CB), LR (LR), and plug-in (PL) methods when $d=0.2$ and $\beta=4$.}
	\label{fig:within-sample-pred-12}
\end{figure}
\clearpage
\bibliographystyle{apalike}
\bibliography{reference}